\renewcommand{\vec}[1]{\boldsymbol{#1}}
\newtheorem{thm}{Theorem}
\newtheorem{lem}{Lemma}
\newtheorem{defn}{Definition}
\newtheorem{pro}{Problem}
\begin{document}

\title{Multi-Task Diffusion Incentive Design for Mobile Crowdsourcing in Social Networks}

\author{Jianxiong Guo,~\IEEEmembership{Member,~IEEE},
	Qiufen Ni,
	Weili Wu,~\IEEEmembership{Senior Member,~IEEE},
	and Ding-Zhu Du
	\thanks{Jianxiong Guo is with the Advanced Institute of Natural Sciences, Beijing Normal University, Zhuhai 519087, China, and also with the Guangdong Key Lab of AI and Multi-Modal Data Processing, BNU-HKBU United International College, Zhuhai 519087, China. (E-mail: jianxiongguo@bnu.edu.cn)
	
	Qiufen Ni is with the School of Computers, Guangdong University of Technology, Guangzhou 510006, China. (E-mail: niqiufen@gdut.edu.cn)
		
	Weili Wu and Ding-Zhu Du are with the Department of Computer Science, Erik Jonsson School of Engineering and Computer Science, The University of Texas at Dallas, Richardson, TX 75080, USA. (E-mail: weiliwu@utdallas.edu; dzdu@utdallas.edu)
	
	\textit{(Corresponding author: Qiufen Ni.)}
	}
	\thanks{Manuscript received xxxx; revised xxxx.}}

\markboth{Journal of \LaTeX\ Class Files,~Vol.~xx, No.~xx, March~2023}%
{Shell \MakeLowercase{\textit{et al.}}: Bare Demo of IEEEtran.cls for IEEE Journals}

\maketitle

\begin{abstract}
    Mobile Crowdsourcing (MCS) is a novel distributed computing paradigm that recruits skilled workers to perform location-dependent tasks. A number of mature incentive mechanisms have been proposed to address the worker recruitment problem in MCS systems. However, they all assume that there is a large enough worker pool and a sufficient number of users can be selected. This may be impossible in large-scale crowdsourcing environments. To address this challenge, we consider the MCS system defined on a location-aware social network provided by a social platform. In this system, we can recruit a small number of seed workers from the existing worker pool to spread the information of multiple tasks in the social network, thus attracting more users to perform tasks. In this paper, we propose a Multi-Task Diffusion Maximization (MT-DM) problem that aims to maximize the total utility of performing multiple crowdsourcing tasks under the budget. To accommodate multiple tasks diffusion over a social network, we create a multi-task diffusion model, and based on this model, we design an auction-based incentive mechanism, MT-DM-L. To deal with the high complexity of computing the multi-task diffusion, we adopt Multi-Task Reverse Reachable (MT-RR) sets to approximate the utility of information diffusion efficiently. Through both complete theoretical analysis and extensive simulations by using real-world datasets, we validate that our estimation for the spread of multi-task diffusion is accurate and the proposed mechanism achieves individual rationality, truthfulness, computational efficiency, and $(1-1/\sqrt{e}-\varepsilon)$ approximation with at least $1-\delta$ probability.
\end{abstract}

\begin{IEEEkeywords}
	Mobile Crowdsourcing, Social Networks, Influence Maximization, Reverse Auction, Incentive Mechanism, Approximation Algorithm.
\end{IEEEkeywords}

\IEEEpeerreviewmaketitle

\section{Introduction}
\IEEEPARstart{W}{ith} the rapid progress of mobile devices and wireless communication technologies in the last two decades, Mobile Crowdsourcing (MCS), as a new distributed computing paradigm, has attracted attention from academia and industry because of its wide applications and great commercial value. In many fields, such as parking service \cite{hoh2012trucentive}, traffic report \cite{philip2022traffic}, air monitoring \cite{li2019selecting}, surveillance video analysis \cite{tahboub2015intelligent}, and image collection \cite{zhang2018crowdbuy}, they can all be summarized into a typical pattern that utilizes the power of the crowd to perform location-dependent tasks. This is usually composed of three parts: task requesters, participant workers, and the MCS platform. The MCS platform first publishes these tasks that are collected from task requesters through the crowdsourcing platform, then recruits a number of workers to perform their assigned tasks and pays them as an incentive for participation.

Incentive mechanisms are imperative for an MCS system because recruited workers have costs to perform tasks. There is a lot of research focused on designing incentive mechanisms \cite{feng2014trac} \cite{zhang2015truthful} \cite{jiang2019incentivizing} to motivate more workers to participate in crowdsourcing. They are all based on a basic assumption that the number of participants is enough to accomplish tasks. However, the MCS platform may face the cold-start problem where there are not enough participants at the early stage of platform operation or the task is relatively unpopular, resulting in not many users being interested in this task. Here, most existing incentive mechanisms aim to motivate as many users as possible in the existing worker pool to participate instead of expanding the worker pool, which cannot effectively solve the bottleneck. Thus, how to expand the worker pool by diffusing the tasks is not a trivial issue.

Fortunately, Online Social Networks (OSNs) have been greatly developed in the past ten years, which provide an effective solution for diffusing news, innovation, and product adoption. Based on the information diffusion, the Influence Maximization (IM) problem \cite{kempe2003maximizing} has been formulated, which aims to select a size-$k$ seed set such that the influence spread can be maximized over the social network. A natural idea is to develop the MCS platform based on OSNs so that more users can participate by spreading tasks. For example, Meituan (a take-out app popular in China) is deeply bound to WeChat so that it can use the social network of WeChat to diffuse take-out orders, thus alleviating the shortage of take-out staff during the peak meal period. Other examples, such as Steps on Facebook \cite{steps}, Image Labeler on Google+ \cite{imageLabeler}, and Mechanical Turk on Amazon \cite{turk}, are all MCS platforms which leverage the knowledge provided by online social platforms. 

However, the integration of the MCS platform and social network will bring new technical challenges. These problems should be solved before designing an incentive mechanism for the MCS system on social networks. First, in the traditional IM problem, there is only one cascade diffusing over the network, but the MCS problem needs to publish multiple types of crowdsourcing tasks at the same time. Thus, how to model the diffusion process of multiple tasks throughout OSNs and define the utility of task diffusion are two crucial problems in this scenario. Second, it is impossible for the MCS platform to invite every registered participant in the worker pool due to the requirements for the limited budget, low intrusiveness, and execution quality. Here, the cost of completing the assigned tasks and diffusing tasks is neglected in the current research. Third, in the traditional IM problem, it is \#P-hard to compute the influence spread given a seed set under the simplest Independent Cascade (IC) model \cite{chen2010scalable}. The calculation of the utility of multiple task diffusions is much more complex than the IC model. Thus, we need to estimate this utility with high efficiency, which is important to guarantee the computational efficiency of the incentive mechanism. Thus, in this paper, we propose a Multi-Task Diffusion Maximization (MT-DM) problem, which aims to design a truthful incentive mechanism to select a winning seed set from registered users in the worker pool under the limited budget to diffuse multiple tasks and maximize the total utility of performing crowdsourcing tasks. The workflow of our MT-DM problem is shown in Fig. \ref{fig1}.

\begin{figure}[!t]
	\centering
	\includegraphics[width=\linewidth]{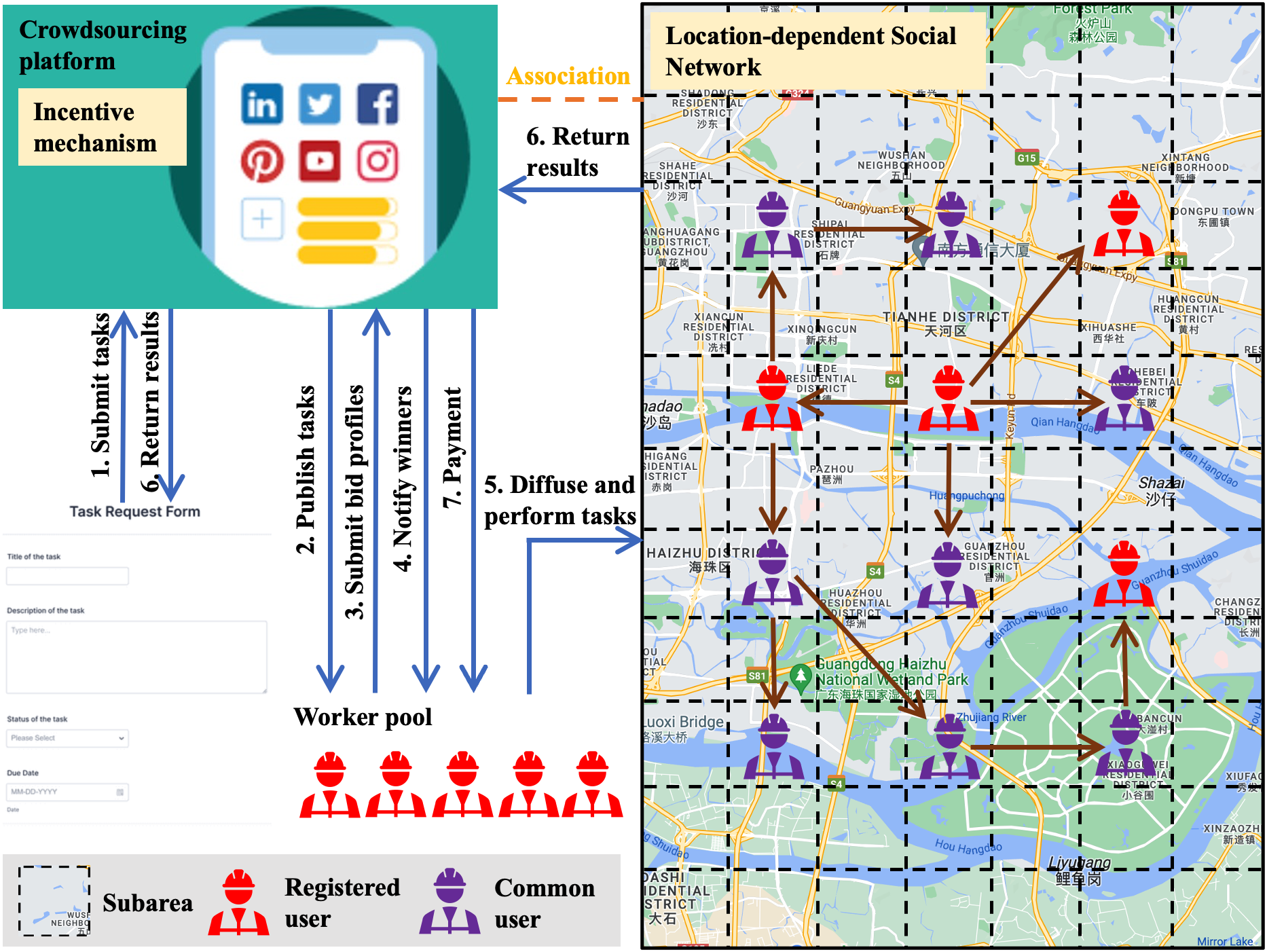}
	\caption{The workflow of incentive in the MT-DM problem.}
	\label{fig1}
\end{figure}

For the MT-DM problem, we first take a sealed reverse auction, MT-DM-L, as the incentive mechanism. Each registered user in the worker pool gives it bidding information for the tasks it can perform and diffuse. The MCS platform selects a certain number of registered users under a given budget and notifies those winning workers. The winning workers begin to diffuse their assigned tasks based on the underlying location-dependent social network. Because a winning worker can undertake multiple tasks, we then design a Multi-Task IC model to simulate the diffusion process of multiple takes on the same network. Based on the Multi-Task IC model, we define the objective of our MT-DM problem as a Tasks Diffusion Function. We prove that the MT-DM problem is NP-hard and the Tasks Diffusion Function is monotone and submodular. Through theoretical analysis, the MT-DM-L satisfies individual rationality, truthfulness, and computational efficiency, which will avoid price manipulation and guarantees a fair and competitive market environment. However, computing the tasks diffusion given a seed worker set is \#P-hard, and there is no effective method to estimate the marginal gain of tasks diffusion if using the greedy strategy in MT-DM-L. In order to reduce the time complexity, motivated by the technique of reverse influence sampling (RIS) \cite{borgs2014maximizing} to address the IM problem, we design a new sampling method based on random Multi-Task Reverse Reachable (MT-RR) sets adapted to the Multi-Task IC model. Next, we design an efficient approximation algorithm, Modified-OPIM-C, based on our new sampling method. Through integrating the MT-DM-L with Modified-OPIM-C, the marginal gain of tasks diffusion can be efficiently estimated, and it guarantees to return a solution with $(1-1/\sqrt{e}-\varepsilon)$ with at least $(1-\delta)$ probability. Finally, we conduct an extensive evaluation by using real-world datasets. By comparing our incentive mechanism with the-state-of-art baselines, the experimental results validate the effectiveness of our proposed sampling and algorithm in approximate performance and efficiency.

\textbf{Organization.} In Section \uppercase\expandafter{\romannumeral2}, we summarize the related works. We then introduce our MT-DM problem and its basic properties in Section \uppercase\expandafter{\romannumeral3}, and incentive mechanism in Section \uppercase\expandafter{\romannumeral4}. In Section \uppercase\expandafter{\romannumeral5} and \uppercase\expandafter{\romannumeral6}, we elaborate on the sampling-based algorithm, its corresponding theoretical analysis, and workflow. Experiments and discussions are presented in Section \uppercase\expandafter{\romannumeral7}, and finally, Section \uppercase\expandafter{\romannumeral8} concludes this paper.

\section{Related Work}
In this section, we provide a brief review of the related research on the following topics.

\textit{Information Diffusion in Social Networks.} Through the ``word-of-mouth'' effect among neighboring users in social networks, Kempe \textit{et al.} \cite{kempe2003maximizing} first summarized it as a combinatorial optimization problem, Influence Maximization, and proposed two classical diffusion models, IC model and Linear Threshold (LT) model. Due to its NP-hardness \cite{kempe2003maximizing} and the \#P-hardness to compute the influence spread \cite{chen2010scalable}, a series of researchers continued to study low-complexity methods to get near-optimal solutions in an acceptable time. Brogs \textit{et al.} created the technique of reverse influence sampling (RIS) and defined the concept of random reverse reachable (RR) sets, which can be used to unbiasedly estimate the influence spread. Based on that, a plethora of works followed the RIS to further improve the efficiency of IM problem, such as TIM/TIM+ \cite{tang2014influence}, IMM \cite{tang2015influence}, SSA/D-SSA \cite{nguyen2016stop}, OPIM-C \cite{tang2018online}, and HIST \cite{guo2020influence}. They gradually reduced the number of random RR sets or restricted the size of a RR set by utilizing probability analysis, which can achieve a $(1-1/e-\varepsilon)$ approximation with high probability in near-linear time. Even though the worker recruitment scenario in social networks has been studied in the IM problem, it is only required to diffuse information, which is different from ours. Our goal is to design an incentive mechanism to identify workers to perform and diffuse tasks, rather than maximizing the influence spread as much as possible.

\textit{Worker Recruitment in MCS.} Many incentive mechanisms for crowdsourcing have been proposed so far, where the worker recruitment problem \cite{tong2020spatial} is a representative research direction and the auction theory has been used as an effective method to address users' strategic behaviors in MCS. Yang \textit{et al.} \cite{yang2012crowdsourcing} was the first to take an auction model as an incentive mechanism in the crowdsourcing for a mobile phone sensing system. Feng \textit{et al.} \cite{feng2014trac} presented a truthful auction to incentive location-aware collaborative sensing in MCS with polynomial time and near-optimal task allocation. Zhou \textit{et al.} \cite{zhou2017incentive} considered mobile crowdsensing systems under the budget and capacity constraint, which can not only protect users' privacy but also achieve near-optimal benefits. To maximize social welfare in MCS, Wang \textit{et al.} \cite{wang2018truthful} proposed a location-aware privacy-preserving two-stage auction algorithm and Gao \textit{et al.} \cite{gao2018truthful} designed a reverse auction model with a nearly minimum social cost. To maximize the total utility of MCS platforms, Guo \textit{et al.} \cite{guo2021reliable} \cite{guo2022theoretical} put forward two kinds of budgeted MCS problems in Internet-of-Vehicles and Cloud-Edge environments. However, the above-mentioned works all studied worker recruitment problems without considering the integration of social networks, where they assumed there are enough participants to perform crowdsourcing tasks.

Moreover, there are a few research works beginning to integrate MCS with social networks. For example, in \cite{jiang2017context} \cite{wang2018strategic}, they studied crowdsourcing task assignment problems in mobile social networks, where they exploited the neighborhood relationship in social networks to solve complex tasks. Recently, the method of task diffusion in social networks was utilized to remedy the dilemma of insufficient participation in MCS. Wang \textit{et al.} \cite{wang2018social} first proposed a social-network-assisted incentive mechanism to solve the worker recruitment problem by selecting a worker seed set to maximize the coverage, where they extended IC model and LT model for considering more specific factors. Wang \textit{et al.} \cite{wang2020socialrecruiter} designed a dynamic incentive mechanism to encourage limited workers to diffuse tasks in social networks, where they proposed a task-specific SIR epidemic model to simulate the diffusion process for crowdsourcing tasks. Xu \textit{et al.} \cite{xu2021incentive} studied to diffuse and complete crowdsourcing tasks while minimizing the total cost, where they presented a global influence estimation method to measure the influence spread in social networks. Wang \textit{et al.} \cite{wang2021acceptance} proposed a new acceptance-aware worker recruitment game in social-aware MCS based on a random diffusion model, where their goal is to maximize overall task acceptance through a meta-heuristic evolutionary approach. 

The above-mentioned four works of MCS in social networks are the most relevant to ours, however, it is different from ours for the following reasons: (1) For the diffusion model, they directly extend existing diffusion models, such as IC, LT, and SIR, which can only be targeted at a single information propagation cascade. Instead, we consider that multiple crowdsourcing tasks are diffused in the network at the same time. This is more practical in MCS. (2) For the incentive mechanism, they only focused on maximizing the coverage or utility, thus there is no guarantee of truthfulness. Here, we have both. (3) For the algorithm design, most of them use heuristic strategies to optimize the objective function, thus there is no theoretical bound. In contrast, our proposed algorithm has a constant approximation ratio and detailed theoretical analysis. Thus, they make our research problem and methods more practical and rigorous.


\section{System Models \& Problem Formulation}
An MCS system is usually composed of task requesters, workers, and the MCS platform, where our MCS is operated in a social network. Thus, we first introduce the preliminary knowledge of social networks. Then, we need to select a subset of registered users as workers to diffuse task information across the social network in order to attract more users for the crowdsourcing tasks. Thus, we introduce a reverse auction model to achieve the selection of workers. Finally, combined with the previous two parts, we formally define our Multi-Task Diffusion Maximization (MT-DM) problem and give its basic properties of hardness.
\subsection{Preliminaries of Social Networks}
A social network is represented by a directed graph $G=(V,E)$ with a node set $V=\{v_1,v_2,\cdots,v_n\}$ and an edge set $E=\{e_1,e_2,\cdots,e_m\}$, where each node $v\in V$ is an user and each edge $e=(u,v)\in E$ is a link from $u$ to $v$ representing some kind of relationship. For each edge $(u,v)\in E$, we say $u$ is an in-neighbor of $v$ and $v$ is an out-neighbor of $u$. For each node $v\in V$, we denoted by $N^-(v)$ the set of its in-neighbors and $N^+(v)$ the set of its out-neighbors.

In social networks, we can spread information by selecting a small number of seed users (or called seeds), which is the classical Influence Maximization (IM) problem \cite{kempe2003maximizing}. For the IM problem, the information diffusion from seeds is based on a predefined diffusion model, such as IC model and LT model. We say  a user is active if she accepts (is activated by) the information cascade from her in-neighbors or she is selected as a seed; otherwise, it is inactive. Here, we use the IC model as an example to demonstrate our problem, which can be easily extended to other diffusion models.
\begin{defn}[IC model]\label{defn1}
	It can be defined by a parameter set $W=\{w_{uv}:(u,v)\in E\}$, where there is a weight $w_{uv}$ associated with each edge $(u,v)\in E$. Given a seed set $S\subseteq V$, the process of information diffusion can be shown as follows: (1) At timestep $0$, all seed nodes are activated; (2) At timestep $t$, each node $u$ that is newly activated at timestep $t-1$ has a chance to activate its inactive out-neighbor $v$ with probability $w_{uv}$; and (3) The information diffusion terminates when no more inactive nodes can be activated.
\end{defn}

Given a social graph $G=(V,E)$ and IC model $W$, the IM problem is to find a seed set $S\subseteq V$ with at most $k$-size such that maximizing the influence spread $\sigma_W(S)$, which is the expected number of active nodes after information diffusion based on $W$. To clearly quantify the influence spread, we first give a concept called ``realization''. A realization $g=(V,E_g)$ is a subgraph with $E_g\subseteq E$ sampled on a given diffusion model. For the IC model $W$, each edge $(u,v)\in E$ will be placed in $E_g$ with probability $w_{uv}$, and those edges in $E_g$ are called live edges. Thus, the proability of realization $g$ sampled on $W$ is $\Pr[g;W]=\prod_{e\in E_g}w_e\prod_{e\in E\backslash E_g}(1-w_e)$. Like this, the influence spread can be considered as the expected number of activated nodes on all possible realizations. Now, the IM problem can be written in a formal expectation form: selecting a seed set $S^\circ$ such that
	\begin{align}
	S^\circ&\in\arg\max_{|S|\leq K}\sigma_W(S)\\
	&=\mathbb{E}_{g\sim W}[|I_g(S)|]=\sum_{g}\Pr[g;W]\cdot |I_g(S)|,
\end{align}
where $g\sim W$ implies that $g$ is sampled based on model $W$ and $I_g(S)$ is the set that contains all nodes can be reached from a node in $S$ through live edges in the realization $g$.

 Given a set function $f : 2^V\rightarrow\mathbb{R}_+$ and any two subsets $S$ and $T$ with $S\subseteq T\subseteq V$, we say it is monotone if $f(S)\leq f(T)$ and submodular if $f(S\cup\{v\})-f(S)\geq f(T\cup\{v\})-f(T)$ for any $v\in V\setminus T$. Although the IM problem is NP-hard, the influence spread $\sigma_W(S)$ is monotone and submodular w.r.t. $S$ under the IC model, thus the greedy hill-climbing algorithm can approach the optimal solution with $(1-1/e)$ approximation \cite{nemhauser1978analysis}. However, it is \#P-hard to compute the influence spread $\sigma_W(S)$ given a seed set $S$ \cite{chen2010scalable}, and then the hill-climbing algorithm can only return a $(1-1/e-\varepsilon)$ approximate solution within $\Omega(kmn\cdot poly(1/\varepsilon))$ running time by using Monte Carlo simulations.

\subsection{Reverse Auction Model}
The MCS platform publishes a task set, denoted by $T=\{t_1,t_2,\cdots,t_{n_T}\}$, which contains $n_T$ different tasks. For each task $t_j\in T$, it can be a specific type of location-dependent task such as data collection, traffic monitoring, distributed computing, and so on. In our setting, the task type will directly affect the spread of this task in social networks by affecting the model weights in the IC model. Thus, the spread of each task in social networks is different.

Without loss of generality, we assume there is a set of registered users $V_R\subseteq V$, denoted by $V_R=\{v_1,v_2,\cdots,v_{n_R}\}$, who are interested in undertaking a group of crowdsourcing tasks and diffusing them in social networks. For each registered user $v_i\in V_R$, it gives its bidding information $B_i=\{T_i,b_i\}$ to the MCS platform, where $T_i\subseteq T$ is the task set it claims and $b_i$ is the bid price to claim $T_i$. For the task set $T_i$, there is a cost $c_i$ that indicates the real cost for user $v_i$ to undertake the tasks in $T_i$, which is the private information and maybe not equal to the claimed price $b_i$. After receiving the bidding information of all registered users $\{B_i\}_{v_i\in V_R}$, the MCS platform will select a subset of registered users, denoted by $S\subseteq V_R$, as winners. Each winner $v_i$ will execute its claimed tasks in $T_i$ and diffuse these tasks in social networks to attract more users to participate in crowdsourcing.

Given a social network $G=(V,E)$, a task set $T$, a diffusion model $W'$, a registered user set $V_R\subseteq V$, and the bidding information from all registered users $\vec{B}=\{B_i\}_{v_i\in V_R}$, we need to design an incentive mechanism $\mathcal{M}(G, T, W', V_R,\vec{B})$ that return a winning set $S\subseteq V_R$ and a payment set $\vec{P}=\{p_i\}_{v_i\in S}$ for each registered user in $S$. For each winning registered user $v_i\in S$, we define its utility as the difference between the payment and cost, that is
\begin{equation}
	u_i=p_i-c_i.
\end{equation}
For each losing registered user $v_{i'}\in V_R\backslash S$, we define its utility $u_{i'}=0$ since there is no payment and no cost. Due to the fact that registered users in the social network are selfish and rational, they tend to maximize their utilities by manipulating their bids (giving dishonest bids). Besides, we assume that the task sets submitted by all registered users can cover all tasks, that is $T=\cup_{v_i\in V_R}T_i$. This assumption is reasonable for the MCS platform since we can remove the task from $T$ if there is no user bidding on it. 

\subsection{Problem Definition}
To quantify the completion quality of a crowdsourcing task, we divide the area $A_G$ covered by the social network $G$ into $n_A$ sub-areas, denoted by $A_G=\{a_1,a_2,\cdots, a_{n_A}\}$. For each user $v_i\in V$, we can get its location information, denoted by $(x_i,y_i)$, according to its IP address. From this coordinate, we can judge which sub-area the user belongs to. Thus, we define a location mapping $\mathcal{L}: V\rightarrow A_G$ that returns the sub-area a user belongs to. For example, $\mathcal{L}(v)=a_k$ indicates the user $v$ is in sub-area $a_k$. For each task $t_j\in T$, its completion quality in different sub-area may be the same or different. Thus, we define a quality set $\vec{q}_j=\{q_j^1,q_j^2,\cdots,q_j^{n_A}\}$, where each element $q_j^k\in \vec{q}_j$ represents the expected completion quality of the task $t_j$ in sub-area $a_k$. The completion quality set of all tasks is denoted by $\vec{Q}_T=\{\vec{q}_j\}_{t_j\in T}$.

Based on $\mathcal{M}(G, T, W', V_R,\vec{B})$, we should consider how multiple tasks diffuse, namely to define $W'$. Here, we regard that each task can be independently spread in the same social network, and we name it as Multi-Task IC model.
\begin{defn}[Multi-Task IC model]\label{defn2}
	Given a social graph $G=(V,E)$ and a task set $T$, it has a equivalent multi-layer graph $G'=(V',E')$ with $G'=G^1\cup G^2\cup\cdots\cup G^{n_T}$, where we make a copy $G^j=(V^j,E^j)$ of $G$ with $V^j=V$ and $E^j=E$ for each task $t_j\in T$. For each task diffusion layer $G^j$, it has a parameter set $W^j=\{w_{uv}^j:(u,v)\in E\}$ and task $t_j$ is spread on $G^j$ based on the IC model defined in Definition \ref{defn1}. The diffusion process of each layer is independent to other layers. For convenience, we can denote a Multi-Task IC model by parameter sets $\vec{W}_T=\{W^j\}_{t_j\in T}$.
\end{defn}
Based on the Multi-Task IC model, given a winning seed set $S\subseteq V_R$, we can project it to each task diffusion layer $G^j$. Thus, we define a projection $\mathcal{P}$ as
\begin{equation}\label{eq4}
	\mathcal{P}(S; G^j)=S^j=\{v_i\in S: t_j\in T_i\},
\end{equation}
where $S^j\subseteq S$ is the user set that claims the task $t_j$ in its bidding information. Now, we can formally define the Multi-Task Diffusion Function as follows.
\begin{defn}[Multi-Task Diffusion Function]\label{def3}
	Given a social graph $G=(V,E)$, a task set $T$, its corresponding Multi-Task IC model $\vec{W}_T$, and completion quality set of all tasks $\vec{Q}_T$, the tasks diffusion function $f_{\vec{W}_T}(S)$ of a winning seed set $S$ can be defined as
	\begin{align}
		&f_{\vec{W}_T}(S)=\frac{1}{n_T}\sum_{t_j\in T}\sum_{v\in V}\Pr\left[v\in I_{W^j}(S^j)\right]\cdot q_j^{\mathcal{L}(v)}\\
		&=\frac{1}{n_T}\sum_{t_j\in T}\sum_g\sum_{v\in I_g(S^j)}\Pr[g; W^j]\cdot q_j^{\mathcal{L}(v)}\\
		&=\frac{1}{n_T}\sum_{t_j\in T}\sum_g\sum_{v\in V}\Pr[g; W^j]\cdot\mathbb{I}\left[v\in I_g(S^j)\right]\cdot q_j^{\mathcal{L}(v)},\label{eq6}
	\end{align}
	where $I_{W^j}(S^j)$ is a random variable representing the set of activated nodes by $S^j$ on task diffusion layer $G^j$ with model parameter $W_j$, and $\mathbb{I}[x]$ is an indicator with  $\mathbb{I}[x]=1$ if $x$ is true; otherwise $\mathbb{I}[x]=0$.
\end{defn}
\noindent
Here, each task is independently spread on its task diffusion layer, and the Multi-Task Diffusion Function can be considered as the expected completion quality of all tasks after finishing diffusion in the social network. we are enough to define the Multi-Task Diffusion Maximization (MT-DM) problem.
\begin{pro}[MT-DM]
	Given a social network $G=(V,E)$, a task set $T$, its corresponding Multi-Task IC model $\vec{W}_T$, a registered user set $V_R\subseteq V$, and the bidding information $\vec{B}$, the MT-DM problem is to design an incentive mechanism $\mathcal{M}(G, T,\vec{W}_T, V_R,\vec{B}, D)$ that selects a winning seed set $S^\circ\subseteq V_R$ under the budget $D$ to maximize the Multi-Task Diffusion Function. It can be formalized as
	\begin{equation}
		S^\circ\in\arg\max_{S\subseteq V_R}f_{\vec{W}_T}(S) \text{ s.t. }\sum_{v_i\in S}p_i\leq D,
	\end{equation}
	where $\vec{P}=\{p_i\}_{v_i\in V_R}$ is the payment returned by $\mathcal{M}$.
\end{pro}

When the context is clear, we can abbreviate $f_{\vec{W}_T}(S)$ as $f(S)$. The MT-DM problem remains at least the same hardness as solving the IM problem \cite{kempe2003maximizing}, which will be shown in the following two theorems.
\begin{thm}\label{thm1}
	The MT-DM problem is NP-hard and given a seed set $S$, computing the objective function $f(S)$ defined in Eqn. (\ref{eq6}) is \#P-hard.
\end{thm}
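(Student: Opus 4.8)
The plan is to establish both claims by exhibiting the classical IM problem (and its spread-computation subproblem) as a special case of MT-DM, so that the known NP-hardness of IM \cite{kempe2003maximizing} and the \#P-hardness of computing $\sigma_W(S)$ \cite{chen2010scalable} transfer directly. In both cases the reduction is a specialization of the MT-DM instance to a single task with uniform completion quality, which collapses the Multi-Task Diffusion Function in Eqn. (\ref{eq6}) to the ordinary influence spread.

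For the NP-hardness of MT-DM, I would reduce from IM. Given an arbitrary IM instance consisting of a graph $G=(V,E)$, an IC model $W$, and a cardinality bound $k$, I construct an MT-DM instance with a single task ($n_T=1$), diffusion model $W^1=W$, registered user set $V_R=V$, and completion quality $q_1^{\mathcal{L}(v)}=1$ in every sub-area. Under this choice the multi-layer graph $G'$ of Definition \ref{defn2} degenerates to a single copy of $G$, the projection is trivial so that $S^1=S$, and $f(S)=\sum_{v\in V}\Pr[v\in I_W(S)]=\sigma_W(S)$. To turn the budget constraint $\sum_{v_i\in S}p_i\le D$ into the cardinality constraint $|S|\le k$, I set every registered user to have the same cost normalized to one and put $D=k$, so that feasibility reduces to $|S|\le k$. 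A winning set maximizing $f$ under this budget is then exactly an optimal IM seed set of size at most $k$, proving MT-DM is at least as hard as IM and hence NP-hard.

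For the \#P-hardness of evaluating $f(S)$, I would use the same single-task, unit-quality restriction. With $n_T=1$, $W^1=W$, and $q_1^{\mathcal{L}(v)}=1$ for all $v$, the expression in Eqn. (\ref{eq6}) reduces to $f(S)=\sum_{v\in V}\Pr[v\in I_W(S)]=\mathbb{E}_{g\sim W}[|I_g(S)|]=\sigma_W(S)$ by exchanging the finite sum and expectation. Thus any algorithm that evaluates $f(S)$ on this family of instances computes the IC-model influence spread $\sigma_W(S)$, which is \#P-hard \cite{chen2010scalable}; consequently computing $f(S)$ in general is \#P-hard.

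Both reductions are essentially specialization arguments, so the only delicate point is the budget step in the NP-hardness reduction: I must argue that the auction/payment layer does not weaken the constraint, i.e.\ that with uniform costs the budget $\sum_{v_i\in S}p_i\le D$ faithfully encodes $|S|\le k$ rather than being slackened by the payments the mechanism assigns. Verifying this, together with confirming that the layered model of Definition \ref{defn2} genuinely degenerates to a single IC layer when $n_T=1$, is where I would spend the most care; the remaining algebra collapsing $f$ to $\sigma_W$ is routine given Definitions \ref{defn1}--\ref{def3}.
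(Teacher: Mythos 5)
Your proposal is correct and matches the paper's own argument: the paper likewise proves both claims by specializing to a single task ($|T|=1$) with completion quality $q^k=1$ in every sub-area, so that $f(S)$ collapses to the influence spread $\sigma_W(S)$ and the NP-hardness of IM \cite{kempe2003maximizing} and the \#P-hardness of spread computation \cite{chen2010scalable} are inherited. Your extra care with the budget-to-cardinality encoding (uniform unit costs, $D=k$) is a reasonable refinement of a step the paper leaves implicit, but it does not change the approach.
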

\begin{proof}
	Consider the simplest case, by setting $|T|=1$ and the expected completion quality of this task in all sub-areas is $1$, that is $q^k=1$ for each subarea $a_k\in A_G$, the MT-DM problem can be reduced to the IM problem. Thus, it inherits the NP-hardness of the IM problem \cite{kempe2003maximizing} and \#P-hardness of computing the influence spread under the IC model \cite{chen2010scalable}.
\end{proof}
\begin{thm}\label{thm2}
	The objective function $f(S)$ defined in Eqn. (\ref{eq6}) is monotone and submodular with respect to $S$.
\end{thm}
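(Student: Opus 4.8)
The plan is to exploit the linear structure of $f$ in Eqn. (\ref{eq6}): it is a nonnegative weighted sum of indicator terms, and both monotonicity and submodularity are preserved under nonnegative linear combinations. Since the coefficients $\frac{1}{n_T}\Pr[g;W^j]\cdot q_j^{\mathcal{L}(v)}$ are all nonnegative, it suffices to prove that for every fixed realization $g$, every task $t_j\in T$, and every node $v\in V$, the set function $S\mapsto \mathbb{I}\left[v\in I_g(S^j)\right]$ is monotone and submodular with respect to $S\subseteq V_R$.

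First I would reduce each such indicator to a coverage function. For a fixed realization $g$, let $R_g(v)\subseteq V$ denote the (reverse-reachable) set of nodes from which $v$ is reachable via live edges in $g$. By the definition of $I_g$, a node $v$ lies in $I_g(S^j)$ if and only if some seed in $S^j$ can reach $v$, i.e. $S^j\cap R_g(v)\neq\emptyset$. Substituting the projection $S^j=\{u\in S:t_j\in T_u\}$ from Eqn. (\ref{eq4}), I obtain $S^j\cap R_g(v)=S\cap R'$, where $R'=\{u\in V:t_j\in T_u\}\cap R_g(v)$ is a fixed set depending only on $g$, $v$, and $t_j$ but not on $S$. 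Hence $\mathbb{I}\left[v\in I_g(S^j)\right]=\mathbb{I}\left[S\cap R'\neq\emptyset\right]$, and the projection collapses harmlessly into an intersection with a fixed task-dependent target set.

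Next I would verify monotonicity and submodularity of the generic coverage indicator $h_{R'}(S)=\mathbb{I}\left[S\cap R'\neq\emptyset\right]$. For monotonicity, whenever $S\subseteq T$ we have $S\cap R'\subseteq T\cap R'$, so $S\cap R'\neq\emptyset$ forces $T\cap R'\neq\emptyset$, giving $h_{R'}(S)\leq h_{R'}(T)$. For submodularity, I would track when the marginal gain of adding a node $w\notin T$ equals one: the gain $h_{R'}(S\cup\{w\})-h_{R'}(S)$ is $1$ precisely when $S\cap R'=\emptyset$ and $w\in R'$. Since $S\subseteq T$ implies that $T\cap R'=\emptyset\Rightarrow S\cap R'=\emptyset$, whenever the marginal gain at $T$ is one the marginal gain at $S$ is also one, so $h_{R'}(S\cup\{w\})-h_{R'}(S)\geq h_{R'}(T\cup\{w\})-h_{R'}(T)$, matching the definition given in the excerpt.

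Finally, combining the two facts, each summand in Eqn. (\ref{eq6}) is monotone and submodular in $S$, and summing over $g$, $t_j$, and $v$ with nonnegative weights preserves both properties, which establishes the claim. I do not anticipate a genuine obstacle here; the only point requiring real care is the handling of the projection $\mathcal{P}$, which I address above by showing it merely restricts the coverage target to those users that claim $t_j$. The key is to confirm that this restriction is independent of $S$ — as it is — so that the classical coverage (reachability) argument applies verbatim to the multi-task setting.
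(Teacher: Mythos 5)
Your proof is correct and takes essentially the same route as the paper: decompose $f$ via Eqn.~(\ref{eq6}) into a nonnegative linear combination of per-realization indicators and show each indicator is monotone and submodular by a marginal-gain case analysis based on reachability in $g$. If anything, your handling of the projection is more careful than the paper's, which writes $\mathbb{I}[v\in I_g(S_1^j\cup\{u\})]$ and thereby tacitly assumes $(S_1\cup\{u\})^j=S_1^j\cup\{u\}$ (i.e., that $t_j\in T_u$), whereas your absorption of $\mathcal{P}$ into the fixed target set $R'=\{u\in V: t_j\in T_u\}\cap R_g(v)$ covers the case $t_j\notin T_u$ (where both marginal gains are $0$) uniformly.
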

\begin{proof}
	The monotonicity is obvious, and we only need to show the submodularity. Considering any two subset $S_1\subseteq S_2\subseteq V_R$ and any user $u\in V_R\backslash S_2$, based on Eqn. (\ref{eq6}), this is equivalent to show $\mathbb{I}[v\in I_g(S_1^j\cup\{u\})]-\mathbb{I}[v\in I_g(S_1^j)]\geq\mathbb{I}[v\in I_g(S_2^j\cup\{u\})]-\mathbb{I}[v\in I_g(S_2^j)]$. When $\mathbb{I}[v\in I_g(S_2^j\cup\{u\})]-\mathbb{I}[v\in I_g(S_2^j)]=1$, we have $\mathbb{I}[v\in I_g(S_2^j\cup\{u\})]=1$ and $\mathbb{I}[v\in I_g(S_2^j)]=0$, which implies that there exists a path from $u$ to $v$ in the realization $g$. Thus, we have $\mathbb{I}[v\in I_g(S_1^j)]=0$ since $S^j_1\subseteq S^j_2$  based on Eqn. (\ref{eq4}) and $S_1\subseteq S_2$, and we have $\mathbb{I}[v\in I_g(S_1^j\cup\{u\})]=1$ since there exists a path from $u$ to $v$ in the realization $g$. Then, the objective function $f(S)$ can be decomposed into the linear combination of $\mathbb{I}[v\in I_g(S^j)]$, therefore it is monotone and submodular.
\end{proof}

\section{Incentive Mechanism For MT-DM}
For the auction model of our MT-DM problem, the MCS platform that holds the task set $T$ is the buyer, and the registered users in $V_R$ are the seller. In this reverse auction, the incentive mechanism between the MCS platform and users should satisfy several design rationales. Thus, in this section, we first give four design rationales as follows, and then describe our mechanism design and show how it satisfies these four design rationales.
\begin{itemize}
	\item Individual Rationality: For each user, the payment it gets should be larger than its cost. That is, for each $v_i\in S$, we have $u_i=p_i-c_i\geq 0$.
	\item Budget Balance: In our TDM problem, it indicates that the total payment paid to the winning workers will not be larger than the predefined budget $D$.
	\item Truthfulness: In order to ensure a fair trading environment, truthfulness is the most important principle to design an auction algorithm. It guarantees that no user can get more benefit by giving an untruthful bid, thus bidding with the real cost is the dominant strategy for each user.
	\item Computational efficiency: The auction algorithm can be executed in polynomial time.
\end{itemize}

Here, the task set $T_i$ submitted by user $v_i$ must be truthful because every user must perform its claimed task once winning the auction. Thus, we consider a piece of bidding information $B_i=\{T_i,b_i\}$ as a truthful bid if $b_i=c_i$, otherwise consider it as an untruthful bid if $b_i\neq c_i$.

\subsection{Mechanism Design}
To maximize the Multi-Task Diffusion Function, it is best to assign the payment $p_i$ to each user by its bid $b_i$ because this is beneficial to select as many users as possible to undertake crowdsourcing tasks, and at the same time meet the requirement of individual rationality. However, if doing this, we cannot satisfy the truthfulness. Thus, we select to relax the budget balance, where we change the constraint $\sum_{v_i\in S}p_i\leq D$ to $\sum_{v_i\in S}b_i\leq D$. Even though the budget balance will not be strictly satisfied, we will find later when the competition is sufficient, namely $|V_R|\gg|S|$, a slight increase in the bid will cause the user to lose the auction. Thus, for each winner $v_i\in S$, $p_i$ will be very close to $b_i$.

\begin{algorithm}[!t] 
	\caption{{MT-DM-L $(G, T,\vec{W}_T, V_R,\vec{B}, D)$}}
	\label{a1}
	\KwIn{Social graph $G=(V,E)$, task set $T$, Multi-Task IC model $\vec{W}_T$, registered user set $V_R$, bidding information $\vec{B}$, and budget $D$.}
	\KwOut{Winning set $S$ and their payments $\vec{P}$.}
	\tcp{Winning worker selection stage}
	Initialize: $S_0\leftarrow\emptyset$, $sum\leftarrow 0$, $i\leftarrow0$\;
	\While{$S_i\neq V_R$}{
		Select $v_{i+1}\in\arg\max_{v_k\in V_R\backslash S_i}f(v_k|S_i)/b_k$\;
		\If{$sum+b_{i+1}>D$}{
			Break\;
		}
		$S_{i+1}\leftarrow S_i\cup\{v_{i+1}\}$\;
		$sum\leftarrow sum+b_{i+1}$\;
		$i\leftarrow i+1$\;
	}
	$S\leftarrow S_i$\;
	\tcp{Payment determination stage}
	\ForEach{$v_i\in S$}{
		Initialize: $H_0\leftarrow\emptyset$, $sum\leftarrow 0$, $l\leftarrow 0$\;
		Initialize: $p_i\leftarrow-\infty$\;
		\While{$H_l\neq V_{R:-i}$}{
			Select $v_{i_{l+1}}\in\arg\max_{v_{i_k}\in V_{R:-i}\backslash H_l}f(v_{i_k}|H_l)/b_{i_k}$\;
			$H_{l+1}\leftarrow H_l\cup\{v_{i_{l+1}}\}$\;
			$sum\leftarrow sum+b_{i_{l+1}}$\;
			$p_i\leftarrow\max\left\{p_i,b_{i_{l+1}}\cdot\frac{f(v_i|H_l)}{f(v_{i_{l+1}}|H_l)}\right\}$\;
			$l\leftarrow l+1$\;
			\If{$sum+b_i>D$}{
				Break\;
			}
		}
		\If{$sum+b_i\leq D$}{
			$p_i\leftarrow\max\{p_i, D-sum\}$\;
		}
	}
	\Return $S$ and $\vec{P}=\{p_i\}_{v_i\in S}$
\end{algorithm}

The running process of our mechanism design is shown in Algorithm \ref{a1}, named as MT-DM-L. It has two stages: winning worker selection stage and payment determination stage. In the winning worker stage, shown as line 1 to line 9 of Algorithm \ref{a1}, we adopt the greedy strategy to select a subset of registered users as winning workers. For each user $v_k\in V_R$, its marginal gain from unit bid based on current winning set $S_i$ can be defined $f(v_k|S_i)/b_k=(f(S_i\cup\{v_k\})-f(S_i))/b_k$. In each iteration, it selects the registered user with maximum marginal gain from unit bid based on the candidate set $V_R\backslash S_{i}$ until using up all the budget. Without loss of generality, the registered user set can be denoted by a list $V_R=\{v_1,v_2,\cdots,v_{n_R}\}$ sorted by the order of selection in a greedy manner. Thus, we have $f(v_1|S_0)/b_1\geq f(v_2|S_1)/b_2\geq\cdots\geq f(v_{n_R}|S_{n_R-1})/b_{n_R}$, where $v_i\in\arg\max_{v_k\in V_R\backslash S_{i-1}}f(v_k|S_{i-1})$ and $S_{i-1}=\{v_1,v_2,\cdots,v_{i-1}\}$. The winning worker set $S$ is returned by the winning worker selection stage in line 9 of Algorithm \ref{a1}. Assuming there are $n_S$ users in $S$, it can be denoted by $S=\{v_1,v_2,\cdots,v_{n_S}\}$ based on the order of $V_R$.

In the payment determination stage, it is shown as line 10 to line 22 of Algorithm \ref{a1}. For each winning worker $v_i\in S$, we greedily select a user set $H$ from $V_{R:-i}=V_R\backslash\{v_i\}$. Without loss of generality, the $V_{R:-i}$ can be denoted by a list $V_{R:-i}=\{v_{i_1},v_{i_2},\cdots,v_{i_{n_R-1}}\}$ sorted by the order of selection in a greedy manner. Thus, we have $f(v_{i_1}|H_0)/b_{i_1}\geq f(v_{i_2}|H_1)/b_{i_2}\geq\cdots\geq f(v_{n_R-1}|H_{n_R-2})/b_{n_R-1}$, where $v_{i_l}\in\arg\max_{v_{i_k}\in V_{R:-i}\backslash H_{l-1}}f(v_{i_k}|H_{l-1})$ and $H_{l-1}=\{v_{i_1},v_{i_2},\cdots,v_{i_{l-1}}\}$. Let $H=\{v_{i_1},v_{i_2},\cdots,v_{i_{n_H}}\}$ be the set generated in the ``while'' loog shown as line 13 to line 20 of Algorithm \ref{a1}. We denote by $n_H$ the size of $H$, and it is sufficient to consider the following two cases:
\begin{itemize}
	\item Case 1: $\sum_{k=1}^{n_H}b_{i_k}+b_i>D$. It indicates that $n_H\leq n_R-1$ and $\sum_{k=1}^{n_H-1}b_{i_k}+b_i\leq D$. If we use $v_i$ to replace a user $v_{i_a}\in H$, the total bid of $\{v_{i_1},v_{i_2},\cdots,v_{i_{a-1}},v_i\}$ will not exceed the budget $D$ since we have $\sum_{k=1}^{a-1}b_{i_k}+b_i\leq\sum_{k=1}^{n_H-1}b_{i_k}+b_i<D$. We define $b_{i(a)}$ as the maximum bid of user $v_i$ to replace a user $v_{i_a}$ in $H$, which must hold $f(v_i|H_{a-1})/b_{i(a)}\geq f(v_{i_a}|H_{a-1})/b_{i_a}$. Thus, we have the critical bid of user $v_i$ as
	\begin{equation}\label{eq8}
		b_{i(a)}=b_{i_a}\cdot f(v_i|H_{a-1})/f(v_{i_a}|H_{a-1}).
	\end{equation}
	At this time, if the user $v_i$ cannot replace any user in $H$, it will lose the auction because we have the budget constraint $\sum_{k=1}^{n_H}b_{i_k}+b_i>D$. Thus, the payment $p_i$ can be defined as
	\begin{equation}\label{eq9}
		p_i=\max\left\{b_{i(1)},b_{i(2)},\cdots,b_{i(n_H)}\right\}.
	\end{equation}
	
	\item Case 2: $\sum_{k=1}^{n_H}b_{i_k}+b_i\leq D$. It indicates that $n_H=n_R-1$. If the user $v_i$ can replace a user in $H$, it will win the auction definitely. If the user $v_i$ cannot replace any user in $H$, it will win as well because there is enough budget. Thus, the payment $p_i$ can be defined as
	\begin{equation}\label{eq10}
		p_i=\max\left\{\max_{1\leq k\leq n_H}\left\{b_{i(k)}\right\},D-\sum\nolimits_{k=1}^{n_H}b_{i_k}\right\}.
	\end{equation}
\end{itemize}

\subsection{Mechanism Analysis}
Next, we show our MT-DM-L holds the desired properties of individual rationality, truthfulness, and computational efficiency. Besides, the MT-DM problem can be solved with an effective approximation.
\begin{lem}\label{lem1}
	The MT-DM-L is individually rational.
\end{lem}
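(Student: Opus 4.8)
The plan is to prove the stronger statement that every winner's payment dominates its own bid, i.e.\ $p_i\ge b_i$ for each $v_i\in S$; since a truthful report satisfies $b_i=c_i$, this immediately gives $u_i=p_i-c_i\ge 0$. Because $p_i$ in Eqn.~(\ref{eq9}) and Eqn.~(\ref{eq10}) is defined as a maximum over critical bids, it suffices to exhibit a single term in that maximum that is at least $b_i$.

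First I would relate the payment-stage greedy run on $V_{R:-i}$ to the winning-stage greedy run on $V_R$. Writing $S=\{v_1,\dots,v_{n_S}\}$ in greedy order with $S_l=\{v_1,\dots,v_l\}$, the key observation is that $v_i$ was \emph{not} the marginal-gain-per-bid maximizer in any of the first $i-1$ selection steps (otherwise it would have been chosen earlier). Hence deleting $v_i$ from the candidate pool leaves the argmax at each of those steps unchanged, so the payment-stage run reproduces the same choices: $v_{i_l}=v_l$ and $H_l=S_l$ for all $l\le i-1$. I expect this ``prefix-agreement'' claim to be the technical heart of the argument, and the step most in need of care when ties occur in the argmax; it goes through under any fixed, bid-independent tie-breaking rule.

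Next I would analyse position $a=i$, where $H_{i-1}=S_{i-1}$ and the run now selects some $v_{i_i}\ne v_i$ in place of $v_i$. Since in the original run $v_i$ was preferred to $v_{i_i}$ on the same set $S_{i-1}$, we have $f(v_i\mid S_{i-1})/b_i\ge f(v_{i_i}\mid S_{i-1})/b_{i_i}$, which rearranges to the critical bid of Eqn.~(\ref{eq8}):
\[
b_{i(i)}=b_{i_i}\cdot\frac{f(v_i\mid H_{i-1})}{f(v_{i_i}\mid H_{i-1})}\ge b_i .
\]
It then remains to confirm that the ``while'' loop actually reaches iteration $i$ so that $b_{i(i)}$ enters the maximum defining $p_i$. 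This follows from budget feasibility of the original run: since $v_i$ was admitted, $\sum_{k=1}^{i}b_k\le D$, so after each of the first $i-1$ payment-stage selections the guard $sum+b_i\le\sum_{k=1}^{i-1}b_k+b_i\le D$ holds and the loop does not break early. Consequently $p_i\ge b_{i(i)}\ge b_i$.

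Finally I would dispose of the boundary case $n_R=i$, in which $V_{R:-i}=S_{i-1}$ offers no user to replace $v_i$; there the run terminates in Case~2 with $sum=\sum_{k=1}^{i-1}b_k$, and the extra term $D-sum\ge b_i$ appearing in Eqn.~(\ref{eq10}) again forces $p_i\ge b_i$. Combining the two cases establishes $p_i\ge b_i$ for every winner, and hence individual rationality.
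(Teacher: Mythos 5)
Your proof is correct and takes essentially the same route as the paper's: both reduce individual rationality to the stronger claim $p_i\ge b_i$ for every winner, obtained from a critical-bid term $b_{i(a)}\ge b_i$ at a position where $v_i$ beats its substitute (the paper's Case~1, your $a=i$), together with the leftover-budget term $D-sum\ge b_i$ when $V_{R:-i}$ is exhausted (the paper's Case~2). The only difference is rigor: where the paper merely asserts that a winner must be able to replace some $v_{i_a}\in H$, your prefix-agreement induction $H_l=S_l$ for $l\le i-1$ and the verification that the payment-stage loop actually reaches iteration $i$ under the budget guard supply the details the paper leaves implicit.
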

\begin{proof}
	For each winning user $v_i\in S$, it is the $i$-th user in sorted list $V_R$. For the first case, it must have $i\leq n_H$, otherwise it will lose the auction. Thus, it exist a user $v_{i_a}\in H$ such that $f(v_i|H_{a-1})/b_i\geq f(v_{i_a}|H_{a-1})/b_{i_a}$. Based on the pricing defined in Eqn. (\ref{eq8}) and Eqn. (\ref{eq9}), we have $p_i\geq b_{i_a}\cdot f(v_i|H_{a-1})/f(v_{i_a}|H_{a-1})\geq b_i$. Thus, it has $u_i=p_i-c_i>0$ if the user $v_i$ bids truthfully. For the second case, we have $p_i\geq D-\sum_{k=1}^{n_H}b_{i_k}\geq b_i$. Therefore, the utility is always positive and individual rationality holds.
\end{proof}

Based on Myerson theory \cite{nisan2007algorithmic}, a reverse auction is truthful if and only if
\begin{itemize}
	\item The bid is monotone. For each user $v_i\in V_R$, if it wins the auction by bidding $B_i=(T_i,b_i)$, it will win by any bidding $B'_i=(T_i,b'_i)$ with $b'_i<b_i$.
	\item The payment is critical. For each winning worker $v_i\in S$, its payment $p_i$ is the maximum bid it can win.
\end{itemize}
\begin{lem}\label{lem2}
	The MT-DM-L is truthful.
\end{lem}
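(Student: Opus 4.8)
The plan is to invoke the Myerson characterization stated just above the lemma: since the claimed task set $T_i$ is necessarily truthful, it suffices to verify the two conditions on the bid $b_i$ — that the allocation rule is monotone in $b_i$, and that the payment $p_i$ computed in the second stage equals the critical (threshold) bid. The entire payment-determination stage is engineered to reconstruct this threshold, so the argument reduces to showing that the quantity $p_i$ produced by the second loop is exactly the supremum of bids at which $v_i$ wins.

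The structural fact that drives everything is a \emph{decoupling} observation: the marginal gain $f(v\mid H)$ of any user $v$ depends only on the already-selected set $H$, not on which other users remain in the candidate pool. Hence, as long as $v_i$ has not yet been selected, the winner-selection loop picks exactly the same users, in the same order and consuming the same cumulative bid, as the loop run on $V_{R:-i}=V_R\setminus\{v_i\}$. First I would prove this prefix-coincidence claim, which lets me describe $v_i$'s fate entirely in terms of the fixed reference sequence $H_0\subset H_1\subset\cdots$ built from $V_{R:-i}$: a user $v_i$ bidding $b$ is inserted at the first position $a$ where its ratio dominates the competitor's, i.e. $f(v_i\mid H_{a-1})/b\geq f(v_{i_a}\mid H_{a-1})/b_{i_a}$, and it actually wins only if the budget up to that insertion, $\sum_{k<a}b_{i_k}+b$, does not exceed $D$.

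From this description, monotonicity follows cleanly. Decreasing $b$ to $b'<b$ both raises the left-hand ratio $f(v_i\mid H_{a-1})/b'$ (relaxing the domination condition) and lowers the consumed budget $\sum_{k<a}b_{i_k}+b'$ (relaxing the feasibility condition), so any position witnessing a win at $b$ still witnesses a win at $b'$; the winning set of bids is therefore a down-closed interval with a well-defined threshold $\theta_i$. I would then check that the algorithm's $p_i$ equals $\theta_i$: rearranging the domination condition at position $a$ yields the critical bid $b_{i(a)}=b_{i_a}\cdot f(v_i\mid H_{a-1})/f(v_{i_a}\mid H_{a-1})$ of Eqn. (\ref{eq8}), and since $v_i$ wins as soon as it can be inserted at some feasible position, the largest winning bid is the maximum of these $b_{i(a)}$ over feasible positions — exactly the running maximum taken in the inner loop and the Case 1 payment of Eqn. (\ref{eq9}). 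The extra term in Case 2 and Eqn. (\ref{eq10}) accounts for the branch where the reference run exhausts all of $V_{R:-i}$ without spending the budget, so that $v_i$ can additionally win on leftover budget alone. Establishing that $b_i\leq p_i$ forces a win while $b_i>p_i$ forces a loss then identifies $p_i=\theta_i$, and Myerson's theorem delivers truthfulness.

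The main obstacle I anticipate is precisely the prefix-coincidence claim and its interaction with the budget. Because the greedy order is defined dynamically, one must argue carefully that inserting or removing $v_i$ perturbs only the \emph{suffix} of the selection sequence after $v_i$'s own selection, never the prefix, and that the two budget cases correctly capture whether the binding constraint for $v_i$ is beating a competitor (Case 1) or merely fitting under the remaining budget (Case 2). Handling ties in the $\arg\max$, and verifying that ``inserted at position $a$'' genuinely implies ``selected and retained within budget,'' are the points where the submodularity of $f$ from Theorem \ref{thm2} and the monotonicity of the cumulative bids must be used with some care.
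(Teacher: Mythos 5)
Your proposal is correct and follows essentially the same route as the paper's proof: both verify Myerson's two conditions, first showing bid monotonicity of the greedy allocation and then arguing that the payment $p_i$ --- the maximum of the replacement bids $b_{i(a)}$ from Eqn.~(\ref{eq8})--(\ref{eq9}), augmented by the leftover-budget term of Eqn.~(\ref{eq10}) --- is exactly the critical bid, split into the same two budget cases. The only difference is one of rigor, not of approach: you isolate the prefix-coincidence claim (that the greedy run on $V_R$ agrees with the run on $V_{R:-i}$ up to $v_i$'s selection, since $f(v\mid H)$ depends only on $H$) and the tie-breaking and budget-truncation subtleties as explicit steps, all of which the paper's proof uses implicitly.
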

\begin{proof}
	Let us first verify that the bid is monotone. For each user $v_i\in V_R$ with bidding $B_i=(T_i,b_i)$, its position in the list $V_R$ sorted by the order of selection in a greedy manner will be moved towards the head when it gives a bid $b'_i$ with $b'_i<b_i$. Secondly, the size of the winning set $S'$ will not shrink when the user $v_i$ gives a bid $b'_i$ with $b'_i<b_i$. Therefore, if the user $v_i$ wins the auction by bidding $b_i$, it will win as well by bidding $b'_i$ with $b'_i<b_i$.
	
	Next, we verify the payment to each winning worker is critical. For each winning worker $v_i\in S$, it is the $i$-th user in sorted list $V_R$. For the first case, it must have $i\leq n_H$, otherwise it will lose the auction. If its bid $b'_i>p_i=\max_{1\leq k\leq n_H}\{b_{i(k)}\}$, we have $f(v_i|H_{a-1})/b'_i<f(v_{i_a}|H_{a-1})/b_{i_a}$ for any $v_{i_a}\in H$.	Thus, the user $v_i$ cannot replace any user in $H$ and $i>n_H$ when giving a bid $b'_i$. Because of the budget constraint $\sum_{k=1}^{n_H}b_{i_k}+b'_i>D$, the user $v_i$ will lose the auction. For the second case,  if its bid $b'_i>p_i$, we have $b'_i\geq\max_{1\leq k\leq n_H}\{b_{i(k)}\}$ where the user $v_i$ cannot replace any user in $H$. Since $n_H=n_R-1$, the user $v_i$ is the $n_R$-th user in sorted list $V_R$. Here, we also have $b'_i>D-\sum_{k=1}^{n_H}b_{i_k}$. The user $v_i$ will lose the auction if there is not enough remaining budget.
\end{proof}

\begin{lem}\label{lem3}
	The MT-DM-L is computationally efficient.
\end{lem}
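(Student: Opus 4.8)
The plan is to bound the running time of Algorithm \ref{a1} by counting the iterations of each stage and the number of marginal-gain queries they issue, and then to argue that each such query can be answered in polynomial time. First I would handle the \textbf{winning worker selection stage} (lines 1--9). The \emph{while} loop adds at least one user to $S_i$ per successful iteration and terminates once $S_i=V_R$ or the budget is exhausted, so it runs at most $n_R$ times. Each iteration evaluates $f(v_k|S_i)/b_k$ over the candidate set $V_R\backslash S_i$, i.e. at most $n_R$ marginal-gain computations to find the $\arg\max$. Hence this stage issues $O(n_R^2)$ queries to $f$.

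Next I would analyze the \textbf{payment determination stage} (lines 10--22). The outer \emph{foreach} loop ranges over the winners $v_i\in S$, and $|S|=n_S\le n_R$. For each winner the inner \emph{while} loop builds $H_l$ one element at a time from $V_{R:-i}$, so it runs at most $n_R-1$ times, and each of its iterations again needs $O(n_R)$ evaluations of $f$ to select the greedy maximizer, together with an $O(1)$ update of $p_i$ via Eqn. (\ref{eq8}). Thus each winner costs $O(n_R^2)$ queries and the whole stage costs $O(n_S\cdot n_R^2)=O(n_R^3)$ queries. Combining the two stages, MT-DM-L makes $O(n_R^3)$ calls to the objective oracle, which is polynomial in the input size.

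The \textbf{main obstacle} is that each such call is \emph{not} cheap: by Theorem \ref{thm1}, computing $f(S)$ exactly is \#P-hard under the Multi-Task IC model, so a naive evaluation would destroy polynomial-time execution. The resolution --- and the reason this lemma is stated here but fully justified later --- is that we never evaluate $f$ exactly. In Sections \uppercase\expandafter{\romannumeral5} and \uppercase\expandafter{\romannumeral6} we replace $f$ by an estimator built from a polynomial-size collection of MT-RR sets, which can be constructed in time polynomial in $n$, $m$, $n_T$, and $1/\varepsilon$; each marginal-gain query then reduces to a coverage count over these sets. Substituting this estimator for the oracle above yields an overall running time polynomial in the graph and task-set sizes, which establishes computational efficiency. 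In the proof I would therefore state the $O(n_R^3)$ oracle-query bound explicitly and point to the sampling construction for the per-query cost, being careful not to conflate the number of queries with the cost of each query.
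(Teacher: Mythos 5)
Your proposal is correct and follows the same basic strategy as the paper --- count greedy iterations times per-iteration $\arg\max$ scans over the two stages --- but it diverges in two substantive ways. First, the iteration bound: the paper bounds the number of winners by the budget, $D/\min_{v_i\in V_R}\{b_i\}$, and concludes a running time of $\mathcal{O}(D\cdot n_R)$, which is pseudo-polynomial in the budget (and, as written, the paper's arithmetic in the payment stage silently drops the number-of-winners factor: with at most $D/\min\{b_i\}$ winners each costing $\mathcal{O}(D\cdot n_R/\min\{b_i\})$, the stated total $\mathcal{O}(D\cdot n_R)$ does not follow literally). You instead bound each loop by $n_R$, giving $\mathcal{O}(n_R^2)$ queries for selection and $\mathcal{O}(n_R^3)$ overall, which is genuinely polynomial in the instance size regardless of $D$; the true bound is of course the minimum of the two, and the paper's version is tighter when the budget admits few winners. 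Second, you explicitly separate the \emph{number} of oracle queries from the \emph{cost} of each query and flag that exact evaluation of $f$ is \#P-hard by Theorem \ref{thm1}, resolving this via the MT-RR-set estimator of Sections \uppercase\expandafter{\romannumeral5}--\uppercase\expandafter{\romannumeral6}; the paper's proof of Lemma \ref{lem3} treats each marginal-gain selection as an $\mathcal{O}(n_R)$ step, implicitly assuming unit-cost evaluations, and only settles the per-query cost later in the Section \uppercase\expandafter{\romannumeral6} running-time analysis (yielding $\mathcal{O}((K+1)(K\ln n+\ln(1/\delta))\cdot\varepsilon^{-2})$). Your version is therefore the more self-contained and rigorous reading of the same argument, at the price of a looser query count when the budget is small.
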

\begin{proof}
	According to the budget constraint in Algorithm \ref{a1}, the number of winning workers is at most $D/\min_{v_i\in V_R}\{b_i\}$. In the winning worker selection stage, in each iteration, it takes $\mathcal{O}(n_R)$ time to select the user with maximum marginal gain from unit bid. Thus, the winning worker selection stage takes overall $(D\cdot n_R/\min_{v_i\in V_R}\{b_i\})$. In the payment determination stage, for each winning worker $v_i\in S$, it takes $\mathcal{O}(D\cdot n_R/\min_{v_i\in V_R}\{b_i\})$ to determine its price. Thus, the payment determination stage takes overall $\mathcal{O}(D\cdot n_R)$ time. Therefore, the total running time can be bounded by $\mathcal{O}(D\cdot n_R)$, and then the MT-DM-L is computationally efficient.
\end{proof}

For the convenience of analysis, we give an assumption that the bid of each registered user is far less than the budget, namely $\max_{v_i\in V_R}\{b_i\}\ll D$. In fact, this assumption is reasonable. If the budget is only enough to pay very few users, this problem will become meaningless. Based on this assumption, we have the following lemma.
\begin{lem}
	The MT-DM-L can return a winning set within $(1-1/\sqrt{e})$ approximation.
\end{lem}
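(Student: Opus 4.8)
The plan is to read the winning-worker selection stage (lines 1--9 of Algorithm \ref{a1}) as the standard cost-effective greedy for maximizing a monotone submodular function under a knapsack constraint, and to compare its output $S=S_L$ against $\mathrm{OPT}:=f(S^\circ)$, where $S^\circ$ is an optimal feasible set under the bid-budget constraint $\sum_{v_i\in S^\circ}b_i\le D$. This is the benchmark natural to MT-DM-L, whose selection rule enforces $\sum b_i\le D$. By Theorem \ref{thm2}, $f$ is monotone and submodular with $f(\emptyset)=0$, so the only work beyond the classical analysis is handling the knapsack budget and the stopping rule at line 5.

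First I would establish a one-step geometric contraction of the optimality gap $\mathrm{OPT}-f(S_i)$. Fix a greedy step and let $v_{i+1}$ be the element of maximal density $f(v_k|S_i)/b_k$ chosen in line 3. Combining monotonicity, submodularity (Theorem \ref{thm2}), and the fact that $v_{i+1}$ is the \emph{global} density maximizer over $V_R\setminus S_i\supseteq S^\circ\setminus S_i$, I would derive
\begin{align}
\mathrm{OPT}-f(S_i)
&\le f(S_i\cup S^\circ)-f(S_i)
\le \sum_{v\in S^\circ\setminus S_i} f(v|S_i) \nonumber\\
&\le \frac{f(v_{i+1}|S_i)}{b_{i+1}}\sum_{v\in S^\circ\setminus S_i} b_v
\le \frac{D}{b_{i+1}}\bigl(f(S_{i+1})-f(S_i)\bigr),
\end{align}
where the last step uses $\sum_{v\in S^\circ} b_v\le D$. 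Rearranging yields $\mathrm{OPT}-f(S_{i+1})\le\bigl(1-\tfrac{b_{i+1}}{D}\bigr)\bigl(\mathrm{OPT}-f(S_i)\bigr)$.

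Telescoping this recurrence over the accepted elements $v_1,\dots,v_L$ (with $S_0=\emptyset$) and using $1-x\le e^{-x}$ gives
\begin{equation}
\mathrm{OPT}-f(S_L)\le \mathrm{OPT}\prod_{k=1}^{L}\Bigl(1-\frac{b_k}{D}\Bigr)\le \mathrm{OPT}\cdot\exp\Bigl(-\frac{1}{D}\sum_{k=1}^{L}b_k\Bigr),
\end{equation}
so it remains to lower bound the fraction of the budget $\beta=\tfrac1D\sum_{k=1}^L b_k$ that the greedy actually spends before halting.

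The crux is the stopping analysis, which I expect to be the main obstacle. MT-DM-L breaks at line 5 precisely when the next maximal-density candidate $v_{L+1}$ would overflow, i.e. $\sum_{k=1}^{L}b_k+b_{L+1}>D$, so $\sum_{k=1}^L b_k>D-b_{L+1}\ge D-\max_{v_i\in V_R}b_i$. Invoking the assumption $\max_{v_i\in V_R}b_i\ll D$ in the weak form $\max_i b_i\le D/2$ then forces $\beta>1/2$; substituting into the displayed bound gives $\mathrm{OPT}-f(S_L)<\mathrm{OPT}\,e^{-1/2}$, i.e. $f(S_L)>(1-1/\sqrt{e})\,\mathrm{OPT}$, which is the claim. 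The geometric-contraction part is routine, but the exact constant is decided entirely here: it is the ``at least half the budget is consumed before overflow'' guarantee, a direct consequence of the small-bid assumption, that produces the exponent $1/2$ and hence $1-1/\sqrt{e}$; one must also take care that the density inequality in the contraction step is legitimate because $v_{i+1}$ maximizes density over all of $V_R\setminus S_i$, not merely over the accepted prefix.
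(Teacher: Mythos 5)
Your proof is correct, but it takes a genuinely different route from the paper. The paper's proof is citation-based: it invokes the two-candidate greedy of Khuller et al.\ for budgeted submodular maximization (take the better of the density greedy and the best single affordable user, which gives $1-1/\sqrt{e}$), and then argues informally that under the assumption $\max_{v_i\in V_R}\{b_i\}\ll D$ the singleton candidate can never dominate, so the winning-worker selection stage alone inherits the ratio. You instead re-derive the bound from first principles: the geometric contraction $\mathrm{OPT}-f(S_{i+1})\leq(1-b_{i+1}/D)(\mathrm{OPT}-f(S_i))$, telescoping to $\mathrm{OPT}\cdot e^{-\beta}$ where $\beta$ is the spent budget fraction, and then the stopping-rule analysis showing that the break at line 5 forces $\beta>1-\max_i b_i/D\geq 1/2$ once the paper's informal assumption is read quantitatively as $\max_i b_i\leq D/2$. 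This buys two things the paper's argument lacks: it makes explicit exactly where the constant $1-1/\sqrt{e}$ comes from (half-budget utilization before overflow, producing the exponent $1/2$), and it turns the paper's hand-waving step (``the second solution will not be selected'') into a precise quantitative condition on the bids. You are also right to flag, and correct to resolve, the subtlety that Algorithm \ref{a1} breaks at the \emph{first} overflow rather than skipping infeasible users and continuing (as Khuller et al.'s greedy does), so every accepted $v_{i+1}$ really is the global density maximizer over $V_R\setminus S_i$ and the contraction is legitimate at every accepted step. Two trivial loose ends you should state for completeness: the case where the while-loop exhausts $V_R$ without breaking, where $S=V_R\supseteq S^\circ$ and monotonicity gives $f(S)\geq\mathrm{OPT}$ outright; and $f(\emptyset)=0$, which grounds the telescoping. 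What the paper's route buys in exchange is brevity and a connection to the known $(1-1/\sqrt{e})$ guarantee without re-proving it; yours is the more rigorous and self-contained of the two.
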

\begin{proof}
	According to Theorem \ref{thm1} and Theorem \ref{thm2}, the MT-DM problem is NP-hard and the tasks diffusion function is monotone and submodular. Based on the Greedy algorithm proposed in \cite{khuller1999budgeted}, it considers two candidate solutions and selects the best one. The first solution can be obtained by sequentially selecting the users with the maximum ratio between their marginal gain and unit cost (bid in this paper) until there is not enough remaining budget to select the next user. The second solution can be obtained by selecting the user with the maximum benefit within the budget. According to our assumption $\max_{v_i\in V_R}\{b_i\}\ll D$, it is impossible for an extreme case that the benefit of a single user is better than the total benefit of a sequence of users. Thus, the second solution will not be selected and our winning worker selection stage can return a solution with $(1-1/\sqrt{e})$ approximation.
\end{proof}

\section{Diffusion Function Estimation}
Even though the winning worker selection stage of MT-DM-L can return a solution with $(1-1/\sqrt{e})$ approximation, which has been proven in the last section, it is \#P-hard to compute the Multi-Task Diffusion Function $f(S)$ given a worker set $S$ based on Theorem \ref{thm1}. This leads to high computational cost by implementing it by Monte Carlo simulations. In order to reduce the time complexity without losing too much approximation, we will use the technique of reverse influence sampling (RIS) in our problem.

\subsection{Multi-Task Sampling}
The RIS is achieved by using random reverse reachable (RR) sets. Back to our MT-DM problem, it relies on the Multi-Task IC model $\vec{W}_T$ where the original graph $G$ has been expressed by multi-layer graph $G'$ with $G'=G^1\cup G^2\cup\cdots\cup G^{n_T}$. For each layer $G^j$, there is a corresponding IC model $W^j$ on it. Given the model $W^j$, a random RR set $R^j$ based on the Benefit Sampling Algorithm (BSA) \cite{nguyen2017billion} can be generated in the following three steps:
\begin{itemize}
	\item Select a node $u$ with probability $q_j^{\mathcal{L}(u)}/\sum_{v\in V}q_j^{\mathcal{L}(v)}$.
	\item Sample a realization $g$ based on IC model $W^j$.
	\item Collect the nodes that can reach $u$ through a live path in the realization $g$.
\end{itemize}
For each node $v\in V$, the probability that it is contained in $R^j$ generated from node $u$ on $W^j$ equals the probability that $v$ can activate $u$. For convenience, we define
\begin{equation}
	f^j(S^j)=\sum_g\sum_{v\in I_g(S^j)}\Pr[g; W^j]\cdot q_j^{\mathcal{L}(v)}.
\end{equation}
Thus, we have $f(S)=\frac{1}{n_T}\sum_{t_j\in T}f^j(S^j)$. We can reformulate the following lemma based on \cite{nguyen2017billion}.
\begin{lem}[\cite{nguyen2017billion}]\label{lem5}
	Let $S^j$ be the projection of $S$ on $G^j$ and $R^j$ be a random RR set generated on $W^j$, then we have
	\begin{equation}
		f^j(S^j)=\left(\sum\nolimits_{v\in V}q_j^{\mathcal{L}(v)}\right)\cdot\mathbb{E}_{R^j}\left[\mathbb{I}(S^j\cap R^j)\right],
	\end{equation}
	where we have $\mathbb{I}(S^j\cap R^j)=1$ if $S^j\cap R^j\neq\emptyset$, otherwise $\mathbb{I}(S^j\cap R^j)=0$.
\end{lem}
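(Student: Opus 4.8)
The plan is to unfold the expectation $\mathbb{E}_{R^j}[\mathbb{I}(S^j\cap R^j)]$ over its two independent sources of randomness — the root node $u$, drawn with probability $q_j^{\mathcal{L}(u)}/\sum_{v\in V}q_j^{\mathcal{L}(v)}$, and the realization $g$, sampled from $W^j$ — and then to show that the resulting double sum coincides term-by-term with the definition of $f^j(S^j)$. Since $u$ and $g$ are drawn in separate, independent steps of the Benefit Sampling Algorithm, the joint law factors, so the whole computation reduces to rearranging finite sums once the right indicator identity is established.

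First I would write the expectation explicitly as
\[
\mathbb{E}_{R^j}\left[\mathbb{I}(S^j\cap R^j)\right]=\sum_u\frac{q_j^{\mathcal{L}(u)}}{\sum_{v\in V}q_j^{\mathcal{L}(v)}}\sum_g\Pr[g;W^j]\cdot\mathbb{I}(S^j\cap R^j(u,g)),
\]
where $R^j(u,g)$ denotes the RR set rooted at $u$ in the realization $g$, namely the set of all nodes that reach $u$ through live edges of $g$.

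The key step — and essentially the only place where an argument beyond bookkeeping is required — is the equivalence $S^j\cap R^j(u,g)\neq\emptyset \iff u\in I_g(S^j)$. By construction, $R^j(u,g)$ collects exactly those nodes admitting a live path to $u$ in $g$; hence $S^j\cap R^j(u,g)\neq\emptyset$ says that some seed in $S^j$ has a live path to $u$, which is precisely the condition for $u$ to be activated by $S^j$ in $g$, i.e. $u\in I_g(S^j)$. This identity lets me replace $\mathbb{I}(S^j\cap R^j(u,g))$ by $\mathbb{I}[u\in I_g(S^j)]$.

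Finally I would multiply both sides by $\sum_{v\in V}q_j^{\mathcal{L}(v)}$ to cancel the normalizing constant, swap the order of the $u$- and $g$-summations, and collapse $\sum_u q_j^{\mathcal{L}(u)}\cdot\mathbb{I}[u\in I_g(S^j)]$ into $\sum_{u\in I_g(S^j)}q_j^{\mathcal{L}(u)}$. Renaming the dummy variable $u$ back to $v$ reproduces $f^j(S^j)=\sum_g\Pr[g;W^j]\sum_{v\in I_g(S^j)}q_j^{\mathcal{L}(v)}$ exactly. The main obstacle is conceptual rather than computational: carefully verifying that the reverse-reachability construction faithfully encodes forward influence, i.e. the equivalence above. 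Once that is in hand, the remainder is a mechanical rearrangement of the two finite sums.
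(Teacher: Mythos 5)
Your proposal is correct and complete: the factorization of the expectation over the independent root choice $u$ and realization $g$, the equivalence $S^j\cap R^j(u,g)\neq\emptyset \iff u\in I_g(S^j)$ (which is exactly what the paper's definition of $I_g(\cdot)$ and of the BSA's reverse-reachability step give), and the final rearrangement into $\sum_g\Pr[g;W^j]\sum_{v\in I_g(S^j)}q_j^{\mathcal{L}(v)}=f^j(S^j)$ are all sound. Note that the paper itself gives no proof of this lemma---it imports it from the cited reference \cite{nguyen2017billion}---and your argument is precisely the standard derivation underlying that source, so you have in effect supplied the proof the paper omits.
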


Now, we can formally define the concept of random multi-task reverse reachable (MT-RR) set. Given a social graph $G$ and Multi-Task IC model $\vec{W}_T$, a random MT-RR set can be generated in the following two steps:
\begin{itemize}
	\item Uniformly select a task $t_x\in T$.
	\item Generate a random RR set on $W^x$ by following the above procedure.
\end{itemize}
Thus, a random MT-RR set can be denoted by $R(X)$, where $X$ is a random variable that implies it is generated from task $t_X$. Then, we can use a collection of random MT-RR sets to efficiently estimate our Multi-Task Diffusion Function, which is shown in the following theorem.
\begin{thm}\label{thm3}
	Let $S\subseteq V_R$ be a seed set and $R(X)$ be a random MT-RR set, then we have
	\begin{equation}\label{eq14}
		f(S)=\sum_{t_j\in T}\sum_{v\in V}q_j^{\mathcal{L}(v)}\cdot\mathbb{E}_{R(X)}\left[\mathbb{I}_j(S^j\cap R(X))\right],
	\end{equation}
	where we have $\mathbb{I}_j[S^j\cap R(X)]=1$ if $t_j=t_{X}$ and $S^j\cap R(X)\neq\emptyset$, otherwise $\mathbb{I}_j[S^j\cap R(X)]=0$.
\end{thm}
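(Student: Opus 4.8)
The plan is to start from the right-hand side of Eqn.~(\ref{eq14}) and reduce it to $f(S)$ by conditioning on which task the random MT-RR set is drawn from, then invoking Lemma~\ref{lem5} term by term. The key observation driving the whole argument is that the two-stage sampling defining $R(X)$ factors cleanly: the task index $X$ is chosen uniformly from $\{1,\dots,n_T\}$, and conditioned on $X=x$ the set $R(X)$ is distributed exactly as a single-task random RR set $R^x$ on $W^x$.

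First I would apply the law of total expectation to $\mathbb{E}_{R(X)}[\mathbb{I}_j(S^j\cap R(X))]$, splitting on the value of $X$. By the definition of $\mathbb{I}_j$, the indicator vanishes unless $t_j=t_X$, i.e. unless $X=j$, so only the single term $x=j$ survives the sum over possible task choices. On that surviving term the condition $t_j=t_X$ is automatically satisfied, so $\mathbb{I}_j(S^j\cap R(X))$ collapses to the ordinary indicator $\mathbb{I}(S^j\cap R^j)$. Since the task is selected uniformly we have $\Pr[X=j]=1/n_T$, and the inner expectation reduces to $\frac{1}{n_T}\cdot\mathbb{E}_{R^j}[\mathbb{I}(S^j\cap R^j)]$.

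Next I would substitute this identity back into the right-hand side and pull out the common factor $1/n_T$, which leaves $\frac{1}{n_T}\sum_{t_j\in T}(\sum_{v\in V}q_j^{\mathcal{L}(v)})\cdot\mathbb{E}_{R^j}[\mathbb{I}(S^j\cap R^j)]$. At this point Lemma~\ref{lem5} recognizes each summand $(\sum_{v\in V}q_j^{\mathcal{L}(v)})\cdot\mathbb{E}_{R^j}[\mathbb{I}(S^j\cap R^j)]$ as precisely $f^j(S^j)$. Invoking the decomposition $f(S)=\frac{1}{n_T}\sum_{t_j\in T}f^j(S^j)$ established just before Lemma~\ref{lem5}, the expression equals $f(S)$, which closes the proof.

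The only delicate point is the conditioning step: I must be careful that the two-part definition of $\mathbb{I}_j$ — requiring both $t_j=t_X$ and $S^j\cap R(X)\neq\emptyset$ — aligns exactly with the single-task indicator $\mathbb{I}(S^j\cap R^j)$ once we condition on $X=j$, so that Lemma~\ref{lem5} applies verbatim to the surviving term. Everything else is routine bookkeeping with the uniform task distribution and linearity of expectation.
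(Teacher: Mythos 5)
Your proposal is correct and mirrors the paper's own proof: both condition on the task index $X$ so that only the $X=j$ term survives, use $\Pr[X=j]=1/n_T$ to reduce $\mathbb{E}_{R(X)}[\mathbb{I}_j(S^j\cap R(X))]$ to $\frac{1}{n_T}\mathbb{E}_{R^j}[\mathbb{I}(S^j\cap R^j)]$, and then invoke Lemma~\ref{lem5} together with the decomposition $f(S)=\frac{1}{n_T}\sum_{t_j\in T}f^j(S^j)$. The only cosmetic difference is the direction of the substitution (you reduce the right-hand side to $f(S)$, while the paper expands $f(S)$ into the right-hand side), which is immaterial.
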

\begin{proof}
	Considering a fixed task $t_j\in T$, we first look at the property of $\mathbb{E}_{R(X)}[\mathbb{I}_j(S^j\cap R(X))]$. Due to the randomness of $X$, we have $\mathbb{E}_{R(X)}[\mathbb{I}_j(S^j\cap R(X))]=$
	\begin{align}
		&=\Pr[X=j]\cdot\mathbb{E}_{R(X)}[\mathbb{I}_j(S^j\cap R(X))|X=j]\nonumber\\
		&\quad+\Pr[X\neq j]\cdot \mathbb{E}_{R(X)}[\mathbb{I}_j(S^j\cap R(X))|X\neq j]\label{eq15}\\
		&=\Pr[X=j]\cdot\mathbb{E}_{R^j}[\mathbb{I}_j(S^j\cap R^j)]\label{eq16}\\
		&=(1/n_T)\cdot\mathbb{E}_{R^j}[\mathbb{I}(S^j\cap R^j)]\label{eq17},
	\end{align}
	where we have $\mathbb{E}_{R(X)}[\mathbb{I}_j(S^j\cap R(X))|X\neq j]=0$ in Eqn. (\ref{eq16}) and $\Pr[X=j]=1/n_T$ in Eqn. (\ref{eq17}). Then, substituting the $\mathbb{E}_{R(X)}[\mathbb{I}_j(S^j\cap R(X))]$ in $f(S)$, we have
	\begin{align}
		f(S)&=\frac{1}{n_T}\sum_{t_j\in T}f^j(S^j)\\
		&=\frac{1}{n_T}\sum_{t_j\in T}\left(\sum\nolimits_{v\in V}q_j^{\mathcal{L}(v)}\right)\cdot\mathbb{E}_{R^j}\left[\mathbb{I}(S^j\cap R^j)\right]\label{eq19}\\
		&=\sum_{t_j\in T}\sum\nolimits_{v\in V}q_j^{\mathcal{L}(v)}\cdot\mathbb{E}_{R(X)}\left[\mathbb{I}_j(S^j\cap R(X))\right],
	\end{align}
	where Eqn. (\ref{eq19}) is based on Lemma \ref{lem5}. Then, this theorem has been proven.
\end{proof}

Based on Theorem \ref{thm3}, we can generate a collection of random MT-RR sets, denoted by $\mathcal{R}=\{R^{j_1}_1,R^{j_2}_2,\cdots,R^{j_\theta}_\theta\}$. Given a collection $\mathcal{R}$, we can define an estimation function to efficiently estimate the tasks diffusion function $f(S)$, which is denoted by $\hat{f}(S;\mathcal{R})$. Then, we have
	\begin{align}
		\hat{f}(S;\mathcal{R})&=\sum_{t_j\in T}\left(\sum\nolimits_{v\in V}q_j^{\mathcal{L}(v)}\right)\cdot\frac{1}{\theta}\sum_{x=1}^{\theta}\mathbb{I}_j(S^j\cap R^{j_x}_x)\\
		&=\frac{1}{\theta}\sum_{x=1}^{\theta}\sum_{t_j\in T}\left(\sum\nolimits_{v\in V}q_j^{\mathcal{L}(v)}\right)\cdot\mathbb{I}_j(S^j\cap R^{j_x}_x)\\
		&=\frac{1}{\theta}\sum_{x=1}^{\theta}\left(\sum\nolimits_{v\in V}q_{j_x}^{\mathcal{L}(v)}\right)\cdot\mathbb{I}(S^{j_x}\cap R^{j_x}_x).\label{eq23}
	\end{align}
Here, the $\hat{f}(S;\mathcal{R})$ is an unbiased estimation of $f(S)$. It is a monotone and submodular function as well by giving a similar induction of Theorem \ref{thm2}.

\subsection{Algorithm Design}
Based on the technique of RIS, there are a lot of sampling-based algorithms such as TIM/TIM+ \cite{tang2014influence}, IMM \cite{tang2015influence}, SSA/D-SSA \cite{nguyen2016stop}, OPIM-C \cite{tang2018online}, and HIST \cite{guo2020influence}, which can solve the IM problem with $(1-1/e-\varepsilon)$ approximation and efficient polynomial time. According to the multi-task sampling and its corresponding unbiased estimation $\hat{f}(S;\mathcal{R})$ for the Multi-Task Diffusion Function $f(S)$, we can adopt a similar strategy to the winning worker selection stage of M-TDM. However, the RIS-based IM algorithms cannot be directed used to solve our problem because of the following two reasons: (1) The multi-task sampling is different from the RIS, thus the estimation process is different; and (2) Our budget is a knapsack constraint, not a cardinality constraint. Thus, we need to make appropriate adjustments according to the existing RIS-based algorithms to adapt to our problem. Here, we take the most recent OPIM-C algorithm as an example to illustrate our modification steps, and other algorithms can be modified by similar steps.

\begin{algorithm}[!t] 
	\caption{{Weighted-Max-Coverage $(\mathcal{R}, V_R, K)$}}
	\label{a2}
	Initialize: $S_0^*\leftarrow 0$\;
	\For{$a=1$ to $K$}{
		Select $v^*_a\in\arg\max_{v_k\in V_R\backslash S^*_{a-1}}\hat{f}(v_k|S^*_{a-1};\mathcal{R})$\;
		$S^*_a\leftarrow S^*_{a-1}\cup\{v^*_a\}$\;
	}
	\Return $S^*_K$\;
\end{algorithm}

\textbf{The first step} is to transform the MT-DM problem under the knapsack constraint to that under the cardinality constraint. Given the budget $D$, we can get an upper bound 
\begin{equation}\label{eq24}
	K=\max\left\{k:\exists S\subseteq V_R, |S|=k, \sum_{v_i\in S}b_i\leq D\right\},
\end{equation}
which is the maximum number of nodes under the budget $D$. Without loss of generality, we have $K\leq n_R/2$ based on the assumption of sufficient competition such that $|V_R|\gg|S|$. Thus, a feasible solution $S^*_D$ returned by any algorithms under the budget $D$ will not contain more than $K$ nodes, where the number of such seed sets with the same number of nodes can be bounded by $\binom{n_R}{|S^*_D|}\leq\binom{n_R}{K}$. 

Then, we transfer to optimize our objective function under the constant $K$ instead of the budget $D$. The number of random MT-RR sets that can provide an approximation guarantee under the constant $K$ is sufficient to provide the same approximation guarantee under the budget $D$ since the number of nodes contained in the seed set becomes smaller. This is the reason why we can do like this.

\textbf{The second step} is to calculate the weighted maximum coverage under the constant $K$ given a collection of random MT-RR set, which is shown as Algorithm \ref{a2}. Let $\hat{S}^*_K$ be the solution returned by the Weighted-Max-Coverage under the constant $K$, $\hat{S}^\circ_K$ be the optimal solution such that maximizing the estimation $\hat{f}(\cdot;\mathcal{R})$ under the constant $K$, and $S^\circ_K$ be the optimal solution of our objective $f(\cdot)$ under the constant $K$. Thus, we have
\begin{align}
	\hat{f}(\hat{S}^*_K;\mathcal{R})&\geq(1-1/e)\cdot\hat{f}(\hat{S}^\circ_K;\mathcal{R})\\
	&\geq(1-1/e)\cdot\hat{f}(S^\circ_K;\mathcal{R}),\label{eq26}
\end{align}
where it is because the $\hat{f}(S;\mathcal{R})$ is monotone and submodular with respect to $S$. Then, we have the following concentration bound according to the martingale analysis in \cite{tang2015influence}.
\begin{lem}[\cite{tang2015influence}]\label{lem6}
	For any $\xi>0$, given a seed set $S$ and a collection of random MT-RR sets $\mathcal{R}$, we have
	\begin{align}
		&\Pr\left[\hat{f}(S; \mathcal{R})\leq(1-\xi) f(S)\right]\leq\exp\left(-\frac{\xi^2\theta f(S)}{2}\right),\label{eq21}\\
		&\Pr\left[\hat{f}(S; \mathcal{R})\geq(1+\xi) f(S)\right]\leq\exp\left(-\frac{\xi^2\theta f(S)}{2+\frac{2}{3}\xi}\right).\label{eq22}
	\end{align}
\end{lem}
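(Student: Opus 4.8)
The plan is to recognize $\hat{f}(S;\mathcal{R})$ as an empirical average of independent and identically distributed random variables and then invoke a Chernoff/Bernstein-type (martingale) tail bound exactly as in \cite{tang2015influence}. Concretely, writing $\mathcal{R}=\{R_1^{j_1},\ldots,R_\theta^{j_\theta}\}$, Eqn.~(\ref{eq23}) expresses $\hat{f}(S;\mathcal{R})=\frac{1}{\theta}\sum_{x=1}^{\theta}Y_x$, where $Y_x=\big(\sum_{v\in V}q_{j_x}^{\mathcal{L}(v)}\big)\cdot\mathbb{I}(S^{j_x}\cap R_x^{j_x})$. Since each MT-RR set is drawn independently (first a task $t_{j_x}$ uniformly, then a random RR set on $W^{j_x}$), the $Y_x$ are i.i.d. Theorem~\ref{thm3} then supplies the crucial fact that the estimator is unbiased, $\mathbb{E}[Y_x]=f(S)$ for every $x$. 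This reduces the claim to a standard concentration statement for a sum of $\theta$ i.i.d. nonnegative bounded variables with common mean $f(S)$.

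First I would establish the lower tail Eqn.~(\ref{eq21}). Applying the multiplicative Chernoff bound to $\sum_x Y_x$, whose mean is $\theta f(S)$, the event $\hat{f}(S;\mathcal{R})\le(1-\xi)f(S)$ coincides with $\sum_x Y_x\le(1-\xi)\theta f(S)$, and the relative lower-tail bound gives probability at most $\exp(-\xi^2\theta f(S)/2)$. For the upper tail Eqn.~(\ref{eq22}) the symmetric multiplicative bound is weaker, so I would instead use the Bernstein/Chung--Lu martingale inequality underlying \cite{tang2015influence}: bounding the variance of $Y_x$ by its mean and its range produces the denominator $2+\frac{2}{3}\xi$, where the $\frac{2}{3}$ is precisely the Bernstein correction associated with the one-sided range of the summands.

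The step that needs the most care, and the main obstacle, is the normalization that makes the summands admissible for these inequalities. Unlike the plain IM setting, here $Y_x$ is not a $\{0,1\}$ coverage indicator but a Bernoulli variable scaled by the task-dependent total quality $\sum_{v}q_{j_x}^{\mathcal{L}(v)}$, and the scale itself is random through the choice of $j_x$. The hard part is therefore to rescale $Y_x$ into $[0,1]$ (dividing by the largest total quality over tasks, or working under the convention that the quality weights are normalized so this factor equals one) and to check that the bounded-difference and variance conditions of the martingale bound hold for the rescaled increments; tracking this factor is exactly what pins down the constants appearing in the exponents. Once the $Y_x$ are placed on the correct scale and shown to be i.i.d. with mean $f(S)$ via Theorem~\ref{thm3}, both inequalities follow directly from the cited concentration results, completing the proof.
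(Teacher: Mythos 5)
Your route is the intended one: the paper supplies no proof of Lemma~\ref{lem6} at all --- it imports the martingale concentration bounds of \cite{tang2015influence} wholesale, and these are exactly the multiplicative Chernoff lower tail and the Bernstein-type upper tail (with the $2+\frac{2}{3}\xi$ correction) that you invoke, after using Theorem~\ref{thm3} to certify that each sample has mean $f(S)$. Your reading of Eqn.~(\ref{eq23}) as an average of i.i.d.\ summands $Y_x$ is also correct; the martingale (rather than plain i.i.d.) phrasing in \cite{tang2015influence} only becomes material because Modified-OPIM-C adaptively doubles the collection size, and for a fixed $\theta$, as in the lemma's statement, your version is equivalent.

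However, the step you flag and then park as a ``convention'' is the one place the statement is genuinely delicate, and you must carry it through. The summand $Y_x=\bigl(\sum_{v\in V}q_{j_x}^{\mathcal{L}(v)}\bigr)\cdot\mathbb{I}(S^{j_x}\cap R_x^{j_x})$ lies in $[0,Q_{\max}]$ with $Q_{\max}=\max_{t_j\in T}\sum_{v\in V}q_j^{\mathcal{L}(v)}$, so rescaling by $Q_{\max}$ and applying the cited bounds to the resulting $[0,1]$-valued increments yields
\begin{equation*}
\Pr\left[\hat{f}(S;\mathcal{R})\leq(1-\xi)f(S)\right]\leq\exp\left(-\frac{\xi^2\theta f(S)}{2\,Q_{\max}}\right),
\end{equation*}
and analogously for the upper tail. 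The exponents in Eqns.~(\ref{eq21})--(\ref{eq22}) therefore hold verbatim only under the normalization $Q_{\max}\leq 1$; in general the factor $1/Q_{\max}$ must appear, mirroring the original IM setting where the per-trial mean is $\sigma(S)/n$ rather than $\sigma(S)$. This is an imprecision inherited from the paper's own statement rather than a flaw in your strategy, but a complete proof cannot leave the choice between ``divide by the largest total quality'' and ``assume it equals one'' open: the first changes the stated constants, the second is an assumption that must be declared, and the same convention then propagates to the bounds in Eqns.~(\ref{eq30})--(\ref{eq31}) where the lemma is consumed.
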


\textbf{The third step} is to achieve a collection of random MT-RR sets that contains a sufficient number of random MT-RR sets to estimate our objective function and guarantee the approximation by following the process of OPIM-C \cite{tang2018online}. Here, we need to modify it according to the estimation in the updated concentration bound in Lemma \ref{lem6}. First, we sample a collection of random MT-RR sets $\mathcal{R}_1$ to greedily select a size-$K$ seed set $\hat{S}^*_K$ based on Algorithm \ref{a2}, which can be used to obtain an upper bound $\overline{f}(S^\circ_K)$ of $f(S^\circ_K)$. Second, we sample another collection $\mathcal{R}_2$ with $|\mathcal{R}_2|=|\mathcal{R}_1|$, which can be used to obtain an lower bound $\underline{f}(\hat{S}^*_K)$ of $f(\hat{S}^*_K)$. If we have
\begin{equation}
	\underline{f}(\hat{S}^*_K)/\overline{f}(S^\circ_K)\geq(1-1/e-\varepsilon),
\end{equation}
then the algorithm will terminate and return the solution $\hat{S}^*_K$. Based on Lemma 4.2 in \cite{tang2018online}, the lower bound $\underline{f}(\hat{S}^*_K)$ on $\mathcal{R}_2$ can be modified as
\begin{equation}\label{eq30}
	\underline{f}(S^*_K)=\left[\left(\sqrt{\hat{f}(S^*_K; {\mathcal{R}}_2)\cdot\theta_2+\frac{2\eta_l}{9}}-\sqrt{\frac{\eta_l}{2}}\right)^2-\frac{\eta_l}{18}\right]\cdot\frac{1}{\theta_2},
\end{equation}
where we have $|\mathcal{R}_2|=\theta_2$, $\eta_l=\ln(1/\delta_l)$, and $\Pr[f(\hat{S}^*_K)>\underline{f}(\hat{S}^*_K)]\geq 1-\delta_l$. Based on Lemma 4.3 in \cite{tang2018online}, the upper bound $\overline{f}(S^\circ_K)$ on $\mathcal{R}_1$ can be modified as
\begin{equation}\label{eq31}
	\overline{f}(S^\circ_K)=\left(\sqrt{{\hat{f}'}(S^\circ_K; {\mathcal{R}}_1)\cdot\theta_1+\frac{\eta_u}{2}}+\sqrt{\frac{\eta_u}{2}}\right)^2\cdot\frac{1}{\theta_1},
\end{equation}
where we have $|\mathcal{R}_1|=\theta_1$, $\eta_u=\ln(1/\delta_u)$, and  $\Pr[f(S^\circ_K)<\overline{f}(S^\circ_K)]\geq 1-\delta_u$. Because the optimal set $S^\circ_K$ is unknown, we use an upper bound ${\hat{f}'}(S^\circ_K; {\mathcal{R}}_1)$ of ${\hat{f}}(S^\circ_K; {\mathcal{R}}_1)$ in Eqn. (\ref{eq31}), which satisfies ${\hat{f}'}(S^\circ_K; {\mathcal{R}}_1)\leq{\hat{f}}(S^*_K; {\mathcal{R}}_1)/(1-1/e)$ according to Ineqn. (\ref{eq26}). Due to its submodularity, this upper bound can be built in the greedy process shown as Algorithm \ref{a2}. Let $\hat{S}^*_a$ be the temporary node set after finishing  the first $a$ iteration in Weighted-Max-Coverage. Based on Lemma 5.1 and 5.2 in \cite{tang2018online}, the upper bound ${\hat{f}'}(S^\circ_K; {\mathcal{R}}_1)$ on $\mathcal{R}_1$ can be modified as ${\hat{f}'}(S^\circ_K; {\mathcal{R}}_1)=$
\begin{equation*}
	\min_{0\leq a\leq K}\left\{\hat{f}(\hat{S}^*_a; {\mathcal{R}}_1)+\sum_{v\in maxMC(\hat{S}^*_a,K; {\mathcal{R}}_1)}\hat{f}(v|\hat{S}^*_a;{\mathcal{R}}_1)\right\},
\end{equation*}
where $maxMC(\hat{S}^*_a,K; {\mathcal{R}}_1)$ is the set of $K$ nodes in $V_R$ with the maximum marginal gain on $\mathcal{R}_1$ with respect to $\hat{S}^*_a$. Finally, we give
\begin{equation}\label{eq32}
	\theta_{max}=\frac{2\left((1-\frac{1}{e})\sqrt{\ln\frac{6}{\delta}}+\sqrt{(1-\frac{1}{e})(\ln\binom{n_R}{K}+\ln\frac{6}{\delta}})\right)^2}{\varepsilon^2\cdot K},
\end{equation}
and $\theta_0=\theta_{max}\cdot\varepsilon^2K$. By setting $i_{max}=\lceil\log_2\frac{\theta_{max}}{\theta_0}\rceil$ and $\delta_l=\delta_u=\delta/(3i_{max})$, the Modified-OPIM-C algorithm can be shown in Algorithm \ref{a3}.

\begin{algorithm}[!t] 
	\caption{{\begin{small}Modified-OPIM-C $(G,T,\vec{W}_T,V_R,K,\varepsilon,\delta)$\end{small}}}
	\label{a3}
	Initialize $\theta_{max}$ by Eqn. (\ref{eq32}) and $\theta_0=\theta_{max}\cdot\varepsilon^2K$\;
	Generate two collections of random MT-RR sets $\mathcal{R}_1$ and $\mathcal{R}_2$ with $|\mathcal{R}_1|=|\mathcal{R}_2|=\theta_0$\;
	$i_{max}=\lceil\log_2\frac{\theta_{max}}{\theta_0}\rceil$\;
	\For{$i=1$ to $i_{max}$}{
		$\hat{S}^*_K\leftarrow$ Weighted-Max-Coverage $(\mathcal{R}_1,V_R,K)$\;
		Compute $\underline{f}(S^*_K)$ on $\mathcal{R}_1$ and $\overline{f}(S^\circ_K)$ on $\mathcal{R}_2$ by Eqn. (\ref{eq30}) and (\ref{eq31}) by setting $\delta_l=\delta_u=\delta/(3i_{max})$\;
		$ratio\leftarrow\underline{f}(S^*_K)/\underline{f}(S^*_K)$\;
		\If{$ratio\geq(1-1/e-\varepsilon)$ or $i=i_{max}$}{
			\Return $\mathcal{R}_1$\;
		}
		Double the size of $\mathcal{R}_1$ and $\mathcal{R}_2$\;
	}
    \Return $\mathcal{R}_1$
\end{algorithm}

\begin{algorithm}[!t] 
	\caption{{\begin{small}Budgeted-W-Max-Coverage $(\mathcal{R}, V_R,\vec{B}, D)$\end{small}}}
	\label{a4}
	Initialize: $S^*\leftarrow 0$, $sum\leftarrow 0$\;
	\While{$S\neq V_R$}{
		Select $v^*\in\arg\max_{v_{k}\in V_{R}\backslash S^*}\hat{f}(v_k|S^*;\mathcal{R})/b_{k}$\;
		\If{$sum+b^*> D$}{
			Break\;
		}
		$S^*\leftarrow S^*\cup\{v^*\}$\;
		$sum\leftarrow sum + b^*$\;
	}
	$v_{max}\leftarrow\arg\max_{v_k\in V_R, b_k\leq D}\hat{f}(\{v_k\};\mathcal{R})$\;
	\Return $\arg\max\{\hat{f}(S^*;\mathcal{R}),\hat{f}(\{v_{max}\};\mathcal{R})\}$\;
\end{algorithm}

\textbf{Summary.} Let $\mathcal{R}^*_1$ be the collection of MT-RR sets returned by Algorithm \ref{a3}. By calling Weighted-Max-Coverage $(\mathcal{R}^*_1,V_R,K)$ shown as Algorithm \ref{a2}, it can return a solution $\hat{S}^*_K$ such that $f(\hat{S}^*_K)\geq(1-1/e-\varepsilon)\cdot f(S^\circ_K)$ with at least $1-\delta$ probability. Back to our MT-DM problem, under the budget $D$, we can use the collection $\mathcal{R}^*_1$ to get a feasible solution, denoted by $\hat{S}^*_D$, by calling Budgeted-W-Max-Coverage $(\mathcal{R}^*_1, V_R, D)$ shown as Algorithm \ref{a4}, where we can omit to compare with the maximum one $v_{max}$ in line 8 of Algorithm \ref{a4} since we have assumed $\max_{v_i\in V_R}\{b_i\}\ll D$. The approximation guarantee of $\hat{S}^*_D$ will be elaborated in the following lemma.
\begin{lem}\label{lem7}
	Let $\mathcal{R}^*_1$ be the collection of MT-RR sets returned by Modified-OPIM-C under the constant $K$ defined in Eqn. (\ref{eq24}). The solution $\hat{S}^*_D$ obtained by greedy process of Budgeted-W-Max-Coverage on $\mathcal{R}^*_1$ under the budget $D$ satisfies
	\begin{equation}\label{eq33}
		f(\hat{S}^*_D)\geq(1-1/\sqrt{e}-\varepsilon)\cdot f(S^\circ_D)
	\end{equation}
	with at least $1-\delta$ proability, where $S^\circ_D$ is the optimal solution under the budget $D$.
\end{lem}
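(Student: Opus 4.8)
The plan is to lift the budgeted guarantee out of the cardinality machinery already set up for Modified-OPIM-C, paying for the weaker constraint only through the change of the greedy coverage factor from $1-1/e$ to $1-1/\sqrt{e}$. The organizing observation is the reduction encoded by $K$ in Eqn. (\ref{eq24}): every budget-feasible set contains at most $K$ nodes, so in particular $|S^\circ_D|\leq K$, and by monotonicity together with optimality of $S^\circ_K$ under the cardinality constraint we get $f(S^\circ_D)\leq f(S^\circ_K)$. This is exactly why the sample size $\theta_{max}$ of Eqn. (\ref{eq32}), calibrated through $\binom{n_R}{K}$ for the cardinality-$K$ subproblem, is also adequate for the budgeted instance: every candidate solution we ever compare against lives among the size-$\leq K$ sets, so the union bound underlying $\theta_{max}$ controls all relevant estimates at once. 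I would first record this reduction, then use it to restrict all subsequent concentration arguments to sets of size at most $K$.

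Next I would invoke the budgeted greedy analysis on the empirical objective. Since $\hat{f}(\cdot;\mathcal{R}^*_1)$ inherits monotonicity and submodularity from $f$, and since the assumption $\max_{v_i\in V_R}\{b_i\}\ll D$ eliminates the pathological single-element branch of Budgeted-W-Max-Coverage (Algorithm \ref{a4}), the earlier Khuller-type lemma applies verbatim on $\mathcal{R}^*_1$. This yields
\[
\hat{f}(\hat{S}^*_D;\mathcal{R}^*_1)\geq\left(1-\tfrac{1}{\sqrt{e}}\right)\hat{f}(\hat{S}^\circ_D;\mathcal{R}^*_1)\geq\left(1-\tfrac{1}{\sqrt{e}}\right)\hat{f}(S^\circ_D;\mathcal{R}^*_1),
\]
where $\hat{S}^\circ_D$ maximizes $\hat{f}(\cdot;\mathcal{R}^*_1)$ under the budget and the second step is optimality of $\hat{S}^\circ_D$ together with feasibility of the true optimum $S^\circ_D$. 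This is the purely combinatorial part and should go through with no new ideas beyond what was used for the cardinality solution.

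The final step transfers this empirical guarantee to the true objective via the concentration bounds of Lemma \ref{lem6}. Computing the lower estimate $\underline{f}(\hat{S}^*_D)$ in the form of Eqn. (\ref{eq30}) on the evaluation collection gives $f(\hat{S}^*_D)\geq\underline{f}(\hat{S}^*_D)$ with probability at least $1-\delta_l$; computing an upper estimate $\overline{f}$ in the form of Eqn. (\ref{eq31}) on $\mathcal{R}_1$ gives $f(S^\circ_D)\leq\overline{f}$ with probability at least $1-\delta_u$. Chaining these sandwiches reduces the claim to the purely empirical inequality $\underline{f}(\hat{S}^*_D)/\overline{f}\geq 1-1/\sqrt{e}-\varepsilon$, which should close once $|\mathcal{R}^*_1|$ reaches the level forced by the stopping rule, using the calibration $\eta_l=\ln(1/\delta_l)$, $\eta_u=\ln(1/\delta_u)$ and $\delta_l=\delta_u=\delta/(3i_{max})$. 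A union bound over the two failure events across the $i_{max}$ doubling rounds keeps the total failure probability at most $\delta$, delivering Ineqn. (\ref{eq33}) with probability at least $1-\delta$.

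I expect the main obstacle to lie precisely in that last empirical inequality, and specifically in the choice of denominator $\overline{f}$. The upper-bound surrogate $\hat{f}'(S^\circ_K;\mathcal{R}_1)$ built inside Weighted-Max-Coverage bounds the coverage of the cardinality-$K$ optimum, which can strictly exceed that of the budget optimum $S^\circ_D$, so naively reusing $\overline{f}(S^\circ_K)$ as the denominator shrinks the ratio and may not attain $1-1/\sqrt{e}-\varepsilon$; dividing a budget-feasible greedy value by the cardinality optimum is simply asking for too much. The clean way around this is to lean on uniform accuracy rather than the adaptive bound: since $\theta_{max}$ is large enough to make the estimates of \emph{all} size-$\leq K$ sets $\varepsilon$-accurate, one applies the two-sided concentration of Lemma \ref{lem6} directly to the fixed sets $\hat{S}^*_D$ and $S^\circ_D$ and combines with the greedy inequality above, never invoking $f(S^\circ_K)$ as an intermediary. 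Verifying that the OPIM-C sample-complexity derivation of \cite{tang2018online}, written around the $1-1/e$ ratio, remains valid after substituting $1-1/\sqrt{e}$ — and that the stopping test of Algorithm \ref{a3} therefore certifies the budgeted guarantee measured against $f(S^\circ_D)$ — is the one place where careful re-tracing, rather than routine substitution, is required.
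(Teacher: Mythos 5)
Your proposal follows essentially the same route as the paper's proof: every budget-feasible set has size at most $K$, so $\binom{n_R}{|\hat{S}^*_D|}\leq\binom{n_R}{K}$ and the union bound behind $\theta_{max}$ in Eqn.~(\ref{eq32}) already covers the budgeted instance, $f(S^\circ_D)\leq f(S^\circ_K)$ by monotonicity, the Khuller-type greedy yields the $1-1/\sqrt{e}$ factor on $\hat{f}(\cdot;\mathcal{R}^*_1)$ under the assumption $\max_{v_i\in V_R}\{b_i\}\ll D$, and the concentration machinery of Lemma~\ref{lem6} transfers the empirical guarantee with total error $(\varepsilon,\delta)$. The one point worth noting is that the obstacle you flag at the end --- the accuracy being calibrated against $f(S^\circ_K)$ rather than the possibly smaller $f(S^\circ_D)$ --- is genuine but is dispatched in the paper by the single asserted line that the accumulative error satisfies $\varepsilon'<\varepsilon$ and hence $\varepsilon'\cdot f(S^\circ_D)\leq\varepsilon\cdot f(S^\circ_K)$, so your uniform-accuracy repair via two-sided concentration over all size-$\leq K$ sets is actually more careful than the published three-sentence argument.
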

\begin{proof}
	Given a feasible solution $\hat{S}^*_D$, its size $|\hat{S}^*_D|\leq K$ and we have $\binom{n_R}{|\hat{S}^*_D|}\leq\binom{n_R}{K}$ because the $\binom{n_R}{x}$ increases as $x$ increases when $x\leq n_R/2$.  The accumulative error $(\varepsilon',\delta')$ of obtaining $\hat{S}^*_D$ on $\mathcal{R}^*_1$ under the budget $D$ will be less than the error $(\varepsilon,\delta)$ of obtaining $\hat{S}^*_K$ on $\mathcal{R}^*_1$ under the constant $K$. It infers that $\varepsilon'\cdot f(S^\circ_D)\leq\varepsilon\cdot f(S^\circ_K)$ from $\varepsilon'<\varepsilon$ and $f(S^\circ_D)\leq f(S^\circ_K)$. Thus, the Eqn. (\ref{eq33}) has been verified.
\end{proof}

\section{Final Procedure of MT-DM-L}
Based on the multi-task sampling, we can generate a collection of random MT-RR sets whose size is large enough to estimate the Multi-Task Diffusion Function in the MT-DM-L. Following the process of Algorithm \ref{a1}, the final procedure of MT-DM-L can be formulated as follows.
\begin{itemize}
	\item Given a network $G=(V,E)$, a task set $T$, its corresponding Multi-Task IC model $\vec{W}_T$, bidding information $\vec{B}$, a registered user set $V_R\subseteq V$, and budget $D$, it obtains an integer $K$ by following Eqn. (\ref{eq24}).
	\item Given error parameters $(\varepsilon,\delta)$, it generate a collection of random MT-RR sets $\mathcal{R}$ by calling Modified-OPIM-C $(G,T,\vec{W}_T, V_R,K,\varepsilon,\delta)$ shown as Algorithm \ref{a3}.
	\item It runs the MT-DM-L shown as Algorithm \ref{a1} based on $\mathcal{R}$, where it uses $\mathcal{R}$ to estimate the marginal gain in MT-DM-L. Specifically, it replaces the line 3 of Algorithm \ref{a1} with ``$v_{i+1}\in\arg\max_{v_k\in V_R\backslash S_i}\hat{f}(v_k|S_i; \mathcal{R})/b_k$'' and replaces the line 14 of Algorithm \ref{a1} with ``$v_{i_{l+1}}\in\arg\max_{v_{i_k}\in V_{R:-i}\backslash H_l}\hat{f}(v_{i_k}|H_l;\mathcal{R})/b_{i_k}$''. Like this, it can get a feasible winning worker set $S$ and pricing $\vec{P}$.
\end{itemize}

Here, the collection of random MT-RR sets $\mathcal{R}$ defined above is sufficient to get an approximate solution in the winning worker selection stage, while it is also enough to determine the pricing for each winning user $v_i\in S$ in the payment determination stage. Under the same budget $D$, the size of the candidate set $V_{R:-i}$ is less than $V_R$. Now, we can draw the main conclusion of this paper.

\textbf{Running time analysis: } According to the above procedure, the total running time can be separated by the generation of the collection $\mathcal{R}$ and running of M-TDM based on $\mathcal{R}$. Based on Lemma 6.2 in \cite{tang2018online}, the expected number of random MT-RR sets in $\mathcal{R}$, denoted by $\mathbb{E}[|\mathcal{R}|]$ can be modified as
\begin{equation}
	\mathbb{E}[|\mathcal{R}|]=\mathcal{O}((K\ln n+\ln(1/\delta))\cdot\varepsilon^{-2}/f(S^\circ_K)).
\end{equation}
For any MT-RR set $R\in\mathcal{R}$, its size should be $|R|\leq f(S^\circ_K)$. Thus, the running time of generating $\mathcal{R}$ is bounded by $\mathcal{O}((K\ln n+\ln(1/\delta))\cdot\varepsilon^{-2})$. Given a collection $\mathcal{R}$, the expected running time to execute Weight-Max-Coverage $(\mathcal{R},V_R,K)$ shown as Algorithm \ref{a2} is $\mathcal{O}(\sum_{R\in\mathcal{R}}|R|)=\mathcal{O}((K\ln n+\ln(1/\delta))\cdot\varepsilon^{-2})$. Then, in the MT-DM-L, after the winning worker selection stage, the size of the winning worker set $S$ must be less than $K$. For each winning worker $v_i\in S$, it needs to greedily select another at most $K$ users for determining its pricing. The running time of this entire process must be less than that of executing Weight-Max-Coverage $(\mathcal{R},V_R,K)$ $K+1$ times. Thus, the total running time can be bounded by $\mathcal{O}((K+1)(K\ln n+\ln(1/\delta))\cdot\varepsilon^{-2})$.

\begin{thm}[Main Theorem]
	The MT-DM-L given by Algorithm \ref{a1} based on the collection of random MT-RR sets returned by Modified-OPIM-C is an effective incentive mechanism, which can satisfy individual rationality, truthfulness, computational efficiency, and $(1-1/\sqrt{e}-\varepsilon)$ approximation with at least $1-\delta$ probability in $\mathcal{O}((K+1)(K\ln n+\ln(1/\delta))\cdot\varepsilon^{-2})$ expected running time.
\end{thm}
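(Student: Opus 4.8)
The plan is to treat this as an assembly result: the four incentive-theoretic properties and the running-time bound have each been established in isolation, so the real task is to verify that they survive \emph{simultaneously} once the exact objective $f$ is replaced by its sampled surrogate $\hat{f}(\cdot;\mathcal{R})$. I would first fix the source of randomness: the collection $\mathcal{R}$ is produced once by Modified-OPIM-C (Algorithm \ref{a3}), and conditioned on $\mathcal{R}$ the entire mechanism in Algorithm \ref{a1} runs \emph{deterministically}, optimizing $\hat{f}(\cdot;\mathcal{R})$ in place of $f$ in both its winning-worker-selection line and its payment-determination line. This conditioning is precisely what lets me separate the deterministic structural guarantees from the single probabilistic one.

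Next I would dispatch individual rationality and truthfulness. The key observation is that $\hat{f}(\cdot;\mathcal{R})$ is itself monotone and submodular (noted just after Eqn. (\ref{eq23})), and the arguments of Lemma \ref{lem1} and Lemma \ref{lem2} never use the equality $f=\hat{f}$: they rely only on monotonicity of the sorted greedy order, on the non-shrinking of the winning set as a bid decreases, and on the critical-bid identity (\ref{eq8}). Hence those proofs transfer verbatim with $f$ replaced by $\hat{f}(\cdot;\mathcal{R})$, and both properties hold for \emph{every} realization of $\mathcal{R}$. Because the same collection $\mathcal{R}$ is reused in the payment stage, the critical-bid computation is consistent with the selection rule. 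This consistency is the point I would flag as needing the most care, since IR and truthfulness would both break if selection and pricing were evaluated under different estimators or different sample collections; I would note that the payment stage's candidate set $V_{R:-i}$ is smaller than $V_R$, so no fresh sampling is required.

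For computational efficiency I would invoke the running-time analysis preceding the theorem rather than the coarse Lemma \ref{lem3}: generating $\mathcal{R}$ costs $\mathcal{O}((K\ln n+\ln(1/\delta))\cdot\varepsilon^{-2})$, the selection stage is a single Weighted-Max-Coverage pass of the same order, and since $|S|\leq K$ the payment stage runs at most $K+1$ such passes, giving the stated $\mathcal{O}((K+1)(K\ln n+\ln(1/\delta))\cdot\varepsilon^{-2})$ total. Finally, the approximation claim is exactly Lemma \ref{lem7}, whose validity rests on Theorem \ref{thm3} (that $\hat{f}$ is an unbiased estimator of $f$) together with the OPIM-C stopping analysis: the returned collection $\mathcal{R}^*_1$ certifies $f(\hat{S}^*_D)\geq(1-1/\sqrt{e}-\varepsilon)\,f(S^\circ_D)$ with probability at least $1-\delta$. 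Layering this single $1-\delta$ event on top of the per-realization deterministic properties yields the combined statement. The only substantive check beyond citation is that this approximation, proven for the selection output under budget $D$, is unaffected by reusing $\mathcal{R}$ for pricing, since the guarantee concerns only $\hat{S}^*_D$ and not the payments.
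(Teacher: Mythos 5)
Your proposal is correct and follows essentially the same route as the paper, whose proof is a one-line assembly of Lemma \ref{lem1}, Lemma \ref{lem2}, Lemma \ref{lem3}, Lemma \ref{lem7}, and the preceding running-time analysis. Your version actually supplies details the paper leaves implicit---notably that individual rationality and truthfulness hold for every realization of $\mathcal{R}$ because $\hat{f}(\cdot;\mathcal{R})$ is monotone and submodular and the arguments of Lemmas \ref{lem1} and \ref{lem2} are purely structural, with the single $1-\delta$ event reserved for the approximation guarantee of Lemma \ref{lem7}---so it is a faithful, somewhat more careful rendering of the same argument.
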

\begin{proof}
	It can be easily induced from the procedure of MT-DM-L by combining Lemma \ref{lem1}, Lemma \ref{lem2}, Lemma \ref{lem3}, Lemma \ref{lem7}, and the above running time analysis.
\end{proof}

\section{Evaluation and Performance}
In this section, we conduct several experiments to evaluate the performance of our proposed algorithms and mechanisms. All of our simulations are programmed by Python 3.8 and run on a Mac M1 machine.
\subsection{Simulation Setup}
Our evaluations are based on two real-world social networks, Damascus \cite{damascus} and Dash \cite{dash}. They are re-tweeting networks, where nodes are Twitter users and edges are retweets. These were collected from various social and political hashtags. Since our MT-DM problem is defined on a location-dependent social network, we create a simulation environment in an area $A_G$ with $1000\times1000$, in which each block with $100\times 100$ is a subarea. Thus, there are a total of $100$ subareas in this area. For each user $v_i$ in the social network, it has a coordinate $(x_i,y_i)$, where $x_i$ and $y_i$ are uniformly sampled in range $[0,1000]$. Thus, we can easily calculate which subarea the user belongs to.

We consider task sets with different numbers of tasks, where the number of tasks ranges from $1$ to $4$, namely there are at most four tasks in our setting, denoted by $T=\{t_1,t_2,t_3,t_4\}$. For each task $t_j\in T$, it expected completion quality $q_j^k$ in subarea $a_k$ is uniformly sampled in $[0,1]$. And for each task $t_j\in T$, we need to define its diffusion parameter set $W^j$. We assume that the diffusion probability of each task in its corresponding diffusion layer is constant. Thus, we give the parameter sets of the Multi-Task IC model as $\vec{W}_T=\{W^1=0.3, W^2=0.5, W^3=0.4, W^4=0.4\}$, where it indicates that $w_{uv}^1=0.3$ for all $(u,v)\in E$ in task $t_1$, and so on. Here, we do not use the Weighted Cascade \cite{guo2020multi} to set the diffusion probability for convenience in comparison. Then, the number of registered users in $V_R$ is $10\%$ to $40\%$ of the total number of users. Back to the auction model, each user $v_i\in V_R$ will provide its bidding information $B_i=\{T_i,b_i\}$. Here, each registered user claims each task in the published task set with $50\%$ probability, and its bid is equal to the number of its claimed tasks multiplied by its unit bid, where the unit bid for each user is uniformly sampled in $[0.8,1.2]$.

Next, the performance evaluation will be divided into two parts. The first part is to validate that our Modified-OPIM-C algorithm can return a feasible solution to maximize the Multi-Task Diffusion Function under the budget, which is similar to studying the IM problem. Once we can efficiently get the approximate solution, we validate the effectiveness of our incentive mechanism MT-DM-L in the second part.

\subsection{Diffusion Function Estimation}
We introduce some typical baselines, which will be used for comparison with our Modified-OPIM-C algorithm. For the sake of fair comparison, the standard value of the Multi-Task Diffusion Function $f(S)$ can be given as the result obtained by using $2000$ Monto Carlo (MC) simulations. After a seed set is returned by baselines, the standard value computed by the above method is taken as its experimental result.
\begin{itemize}
    \item \textbf{Greedy:} It adopts the weighted hill-climbing algorithm \cite{khuller1999budgeted} that selects the user with maximum marginal gain $f(v_k|S_i)/b_k$ in each iteration, where the Multi-Task Diffusion Function is estimated by using $500$ MC simulations.
    \item \textbf{Modified-OPIM-C:} Let $\mathcal{R}$ be the collection of random MT-RR sets returned by Algorithm \ref{a3} by setting $\varepsilon=0.1$ and $\epsilon=0.1$, then it calls Budgeted-W-Max-Coverage shown as Algorithm \ref{a4} based on $\mathcal{R}$.
    \item \textbf{Greedy-IM:} It adopts the hill-climbing algorithm \cite{kempe2003maximizing} (similar to solving the IM problem by using influence spread instead of Multi-Task Diffusion Function) that selects the user with maximum marginal gain $\sigma(v_k|S_i)/(b_k/|T_k|)$ in each iteration, where the diffusion probability is the average among all tasks and the influence function is estimated by using $500$ MC simulations.
    \item \textbf{OPIM-C:} It adopts the OPIM-C algorithm \cite{tang2018online} (similar to solving the IM problem) where the diffusion probability is the average among all tasks.
    \item \textbf{MaxDegree:} It selects the user with maximum out-degree $N^+(v_k)/(b_k/|T_k|)$ in each iteration.
    \item \textbf{Random:} It randomly selects a feasible registered user in each iteration.
\end{itemize}

\begin{figure}[!t]
	\centering
	\subfigure[$RU=20\%$, $|T|=2$]{
		\includegraphics[width=0.48\linewidth]{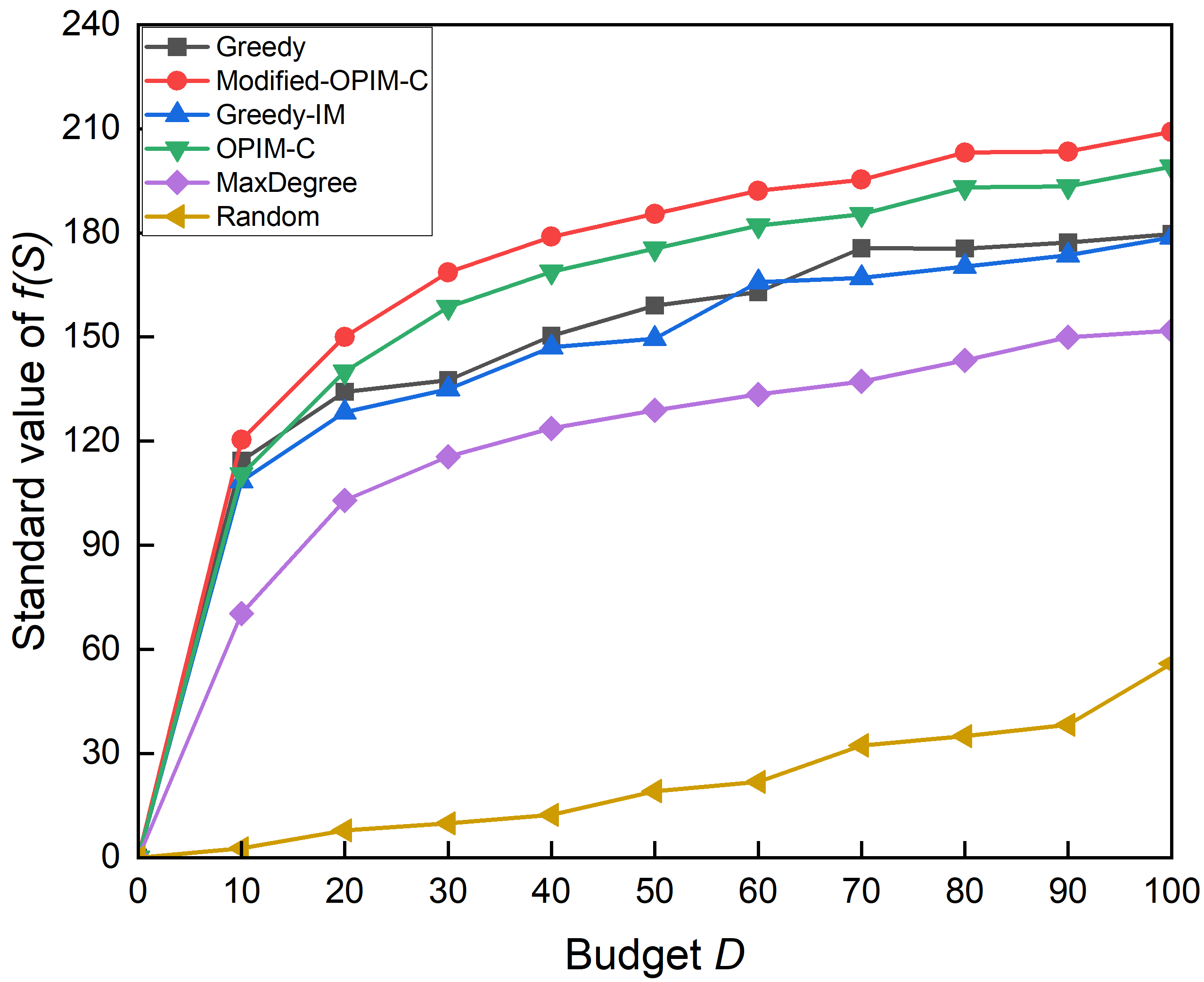}
	}%
	\subfigure[$RU=40\%$, $|T|=2$]{
		\includegraphics[width=0.48\linewidth]{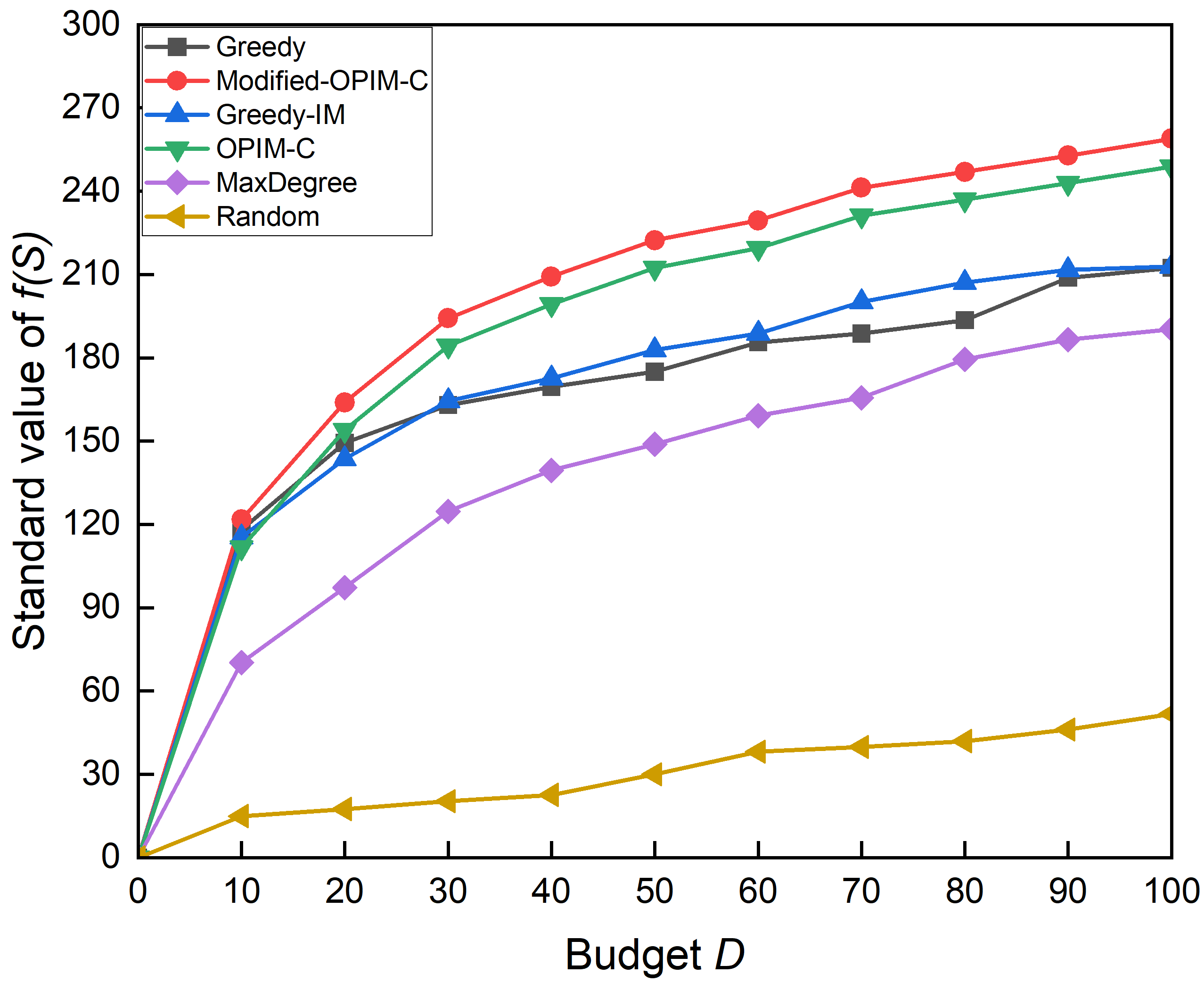}
	}%
	\centering
	
	\subfigure[$RU=20\%$, $|T|=3$]{
		\includegraphics[width=0.48\linewidth]{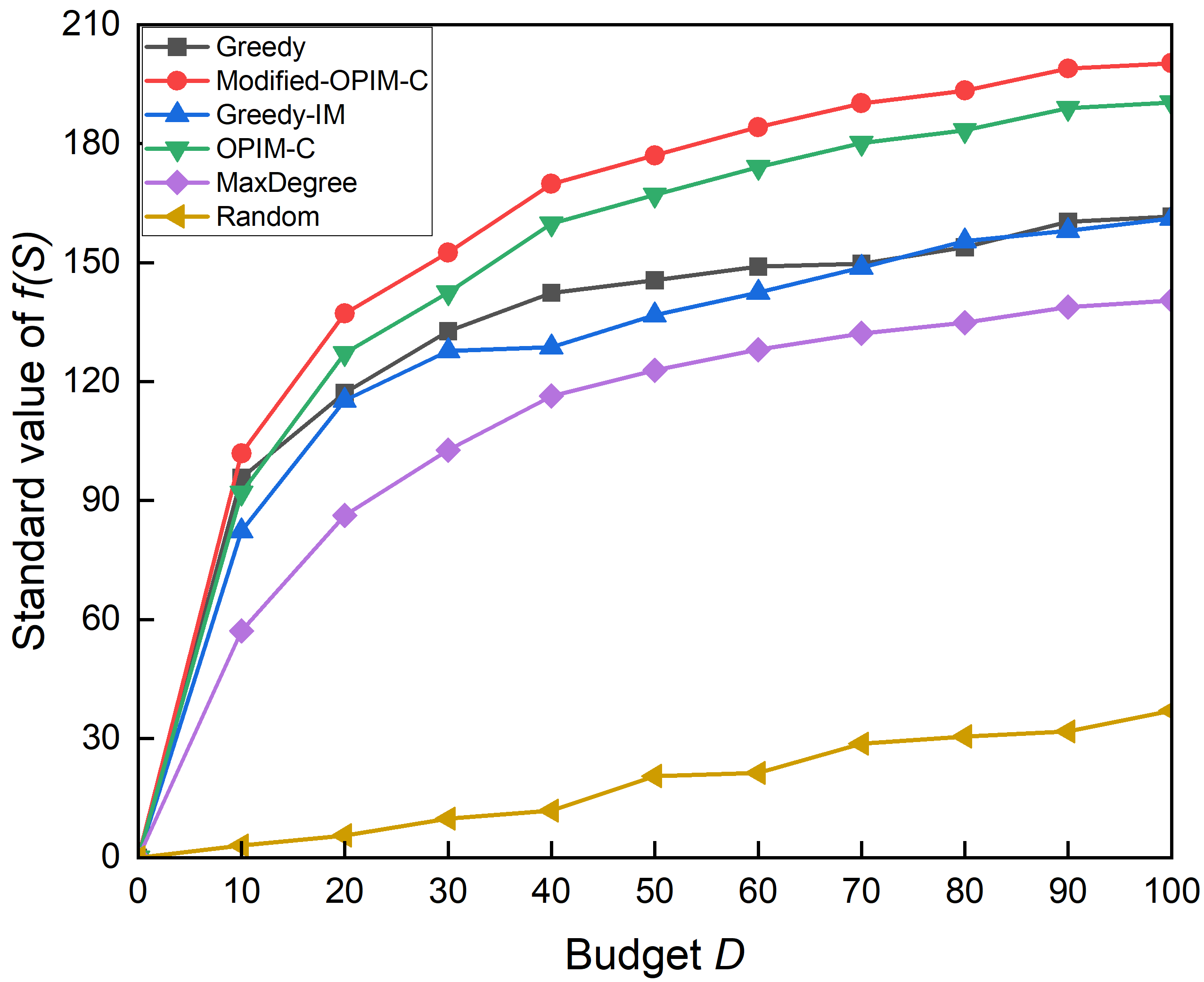}
	}%
	\subfigure[$RU=40\%$, $|T|=3$]{
		\includegraphics[width=0.48\linewidth]{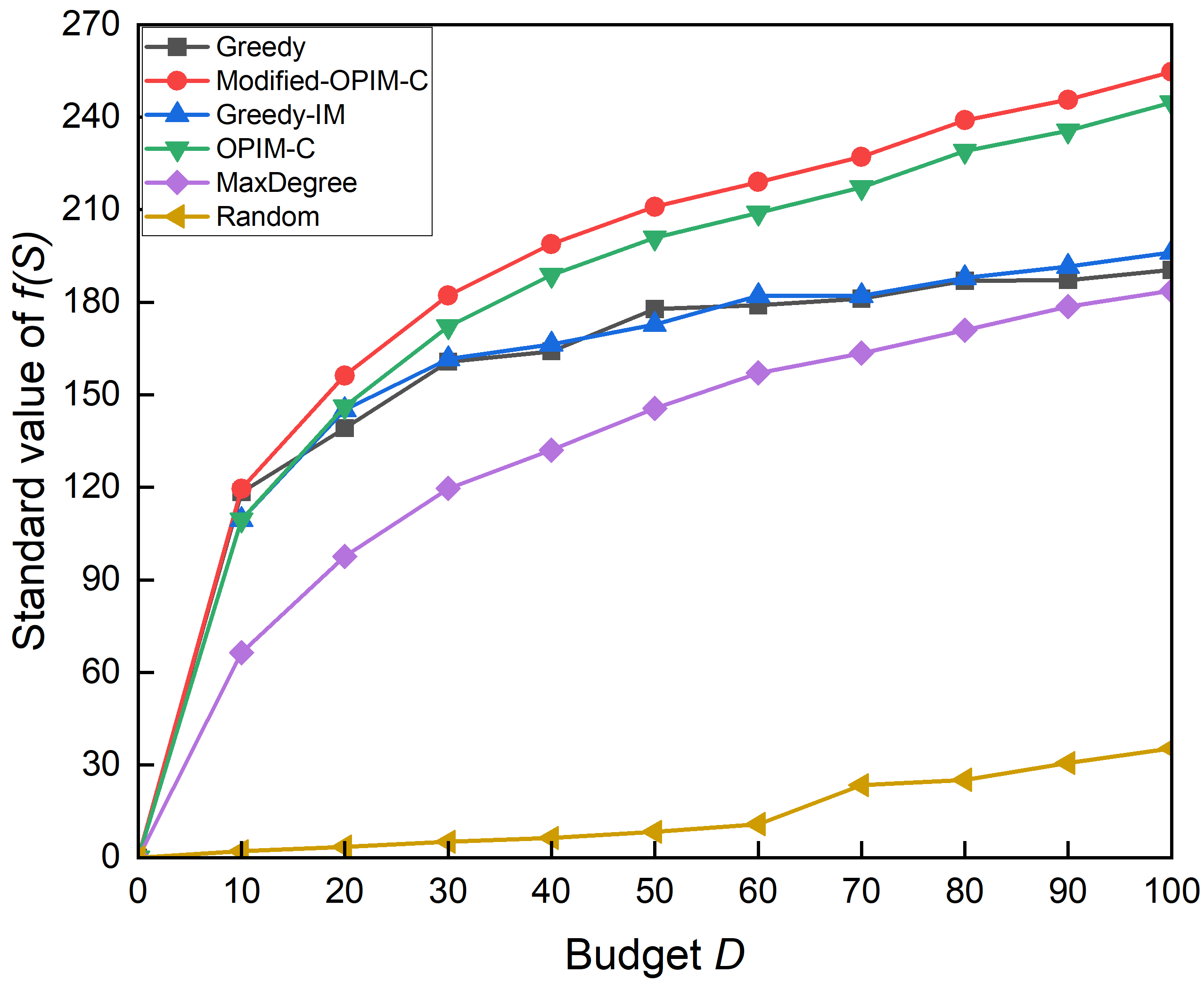}
	}%
	\centering
	\caption{The performance comparison achieved under the Damascus dataset by giving different parameter settings.}
	\label{fig2}
\end{figure}

\subsubsection{Performance} Fig. \ref{fig2} and Fig. \ref{fig3} draw the performance comparison achieved by all baselines under the Damascus and Dash datasets by giving different parameter settings, where ``$RU=20\%, |T|=2$'' means the proportion of registered users is $20\%$ and the number of tasks is $2$. In this part, Greedy and Greedy-IM are simulation-based algorithms, which are mainly used to compare with their corresponding sampling-based algorithms: Modified-OPIM-C and OPIM-C. Here, we have several observations. First, in all datasets and parameter settings, Modified-OPIM-C always achieves the best performance, which validates the effectiveness of our proposed algorithm to maximize the multi-task diffusion under the budget. Second, these greedy-based algorithms, including Greedy, Modified-OPIM-C, Greedy-IM, and OPIM-C, are far better than heuristic algorithms like MaxDegree and Random. It shows the correctness of our incentive mechanism design based on a greedy selection strategy. 

Third, the gap between Modified-OPIM-C and Greedy is bigger than we expected because in fact they all implement the same procedure, but one is based on MC simulations and the other is based on sampling. Thus, it is expected that they achieve similar results. A reasonable explanation is that $500$ MC simulations are too few, which makes it difficult for Greedy to accurately find the user with maximum marginal gain in each iteration. It leads to the selection of suboptimal users, thereby causing error accumulation and final result deterioration. In fact, we also verify this opinion in the follow-up experiments: with the increase of MC simulations, the performance of Greedy has also been significantly improved, and it approaches that of Modified-OPIM-C. Fourth, even though the gap between OPIM-C and Greedy-IM also exists, this gap is obviously smaller than the gap between Modified-OPIM-C and Greedy. We think this is also caused by the error of MC simulations. Even though both Greedy and Greedy-IM adopt $500$ MC simulations, this is more accurate to estimate the influence spread than our Multi-Task Diffusion Function, because our function involves multiple tasks and is more complex, thus the estimation error is greater and the probability of bad events is higher. This further illustrates the necessity of designing sampling-based algorithms.

\begin{figure}[!t]
	\centering
	\subfigure[$RU=10\%$, $|T|=2$]{
		\includegraphics[width=0.48\linewidth]{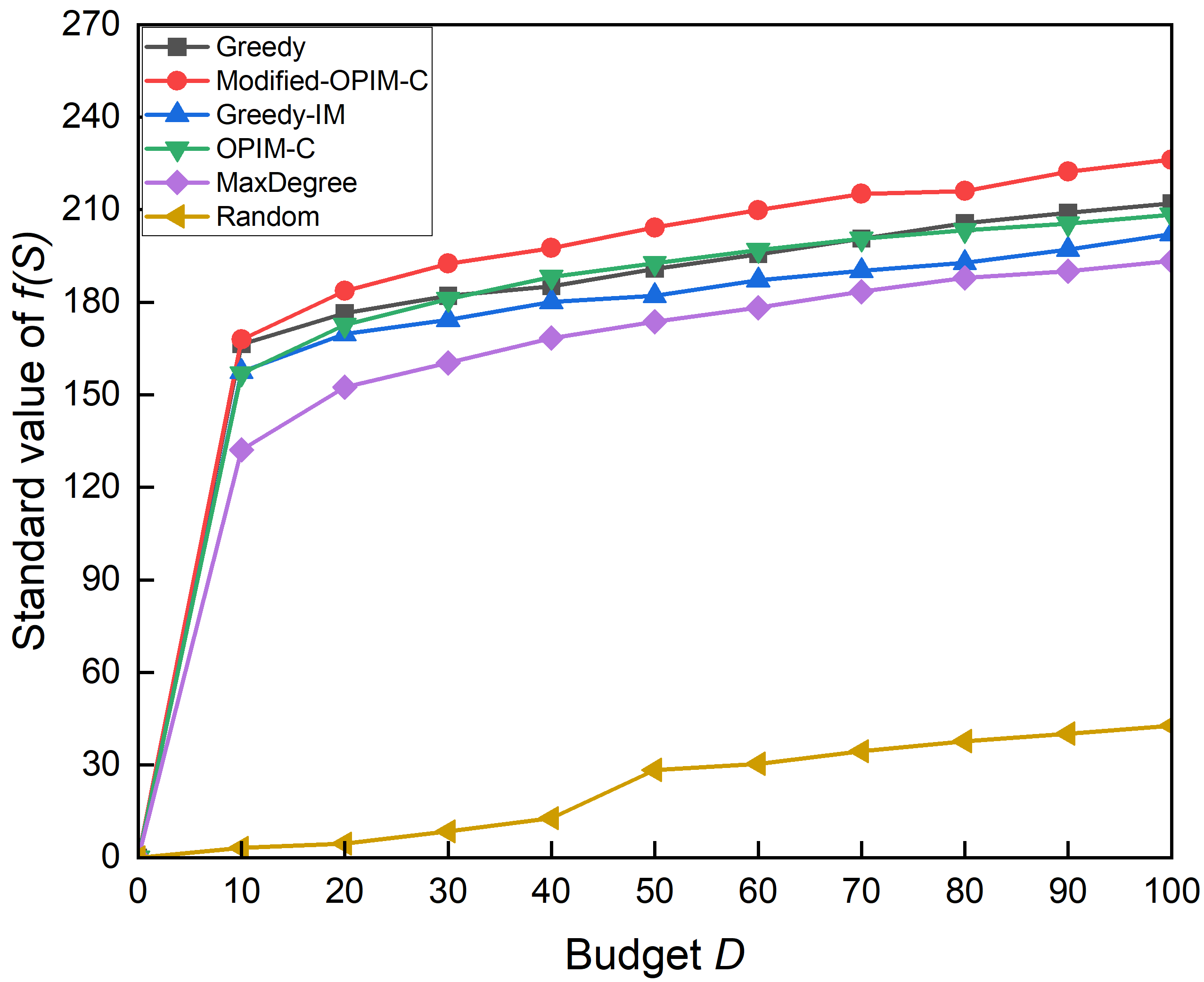}
	}%
	\subfigure[$RU=20\%$, $|T|=2$]{
		\includegraphics[width=0.48\linewidth]{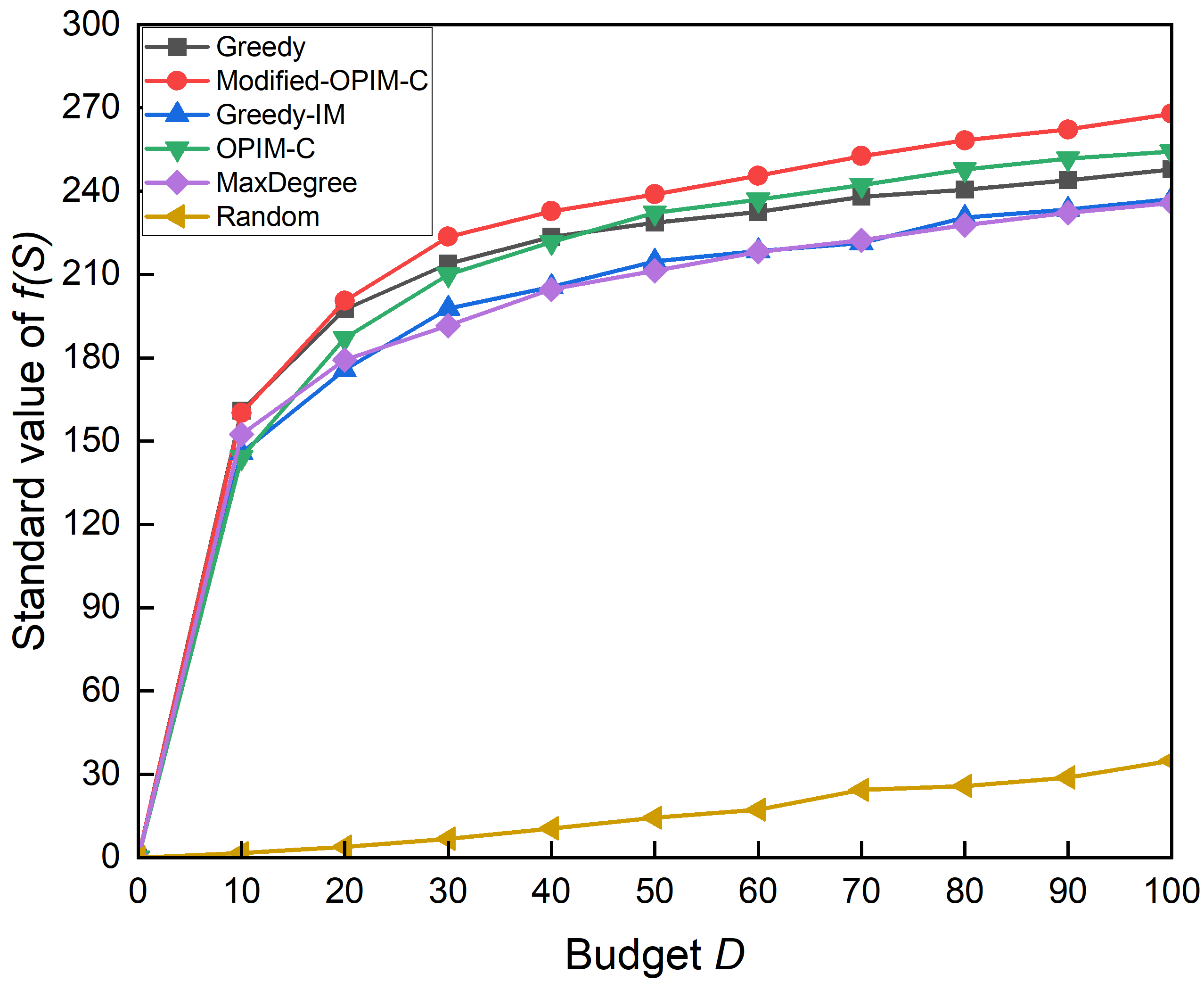}
	}%
	\centering
	
	\subfigure[$RU=10\%$, $|T|=3$]{
		\includegraphics[width=0.48\linewidth]{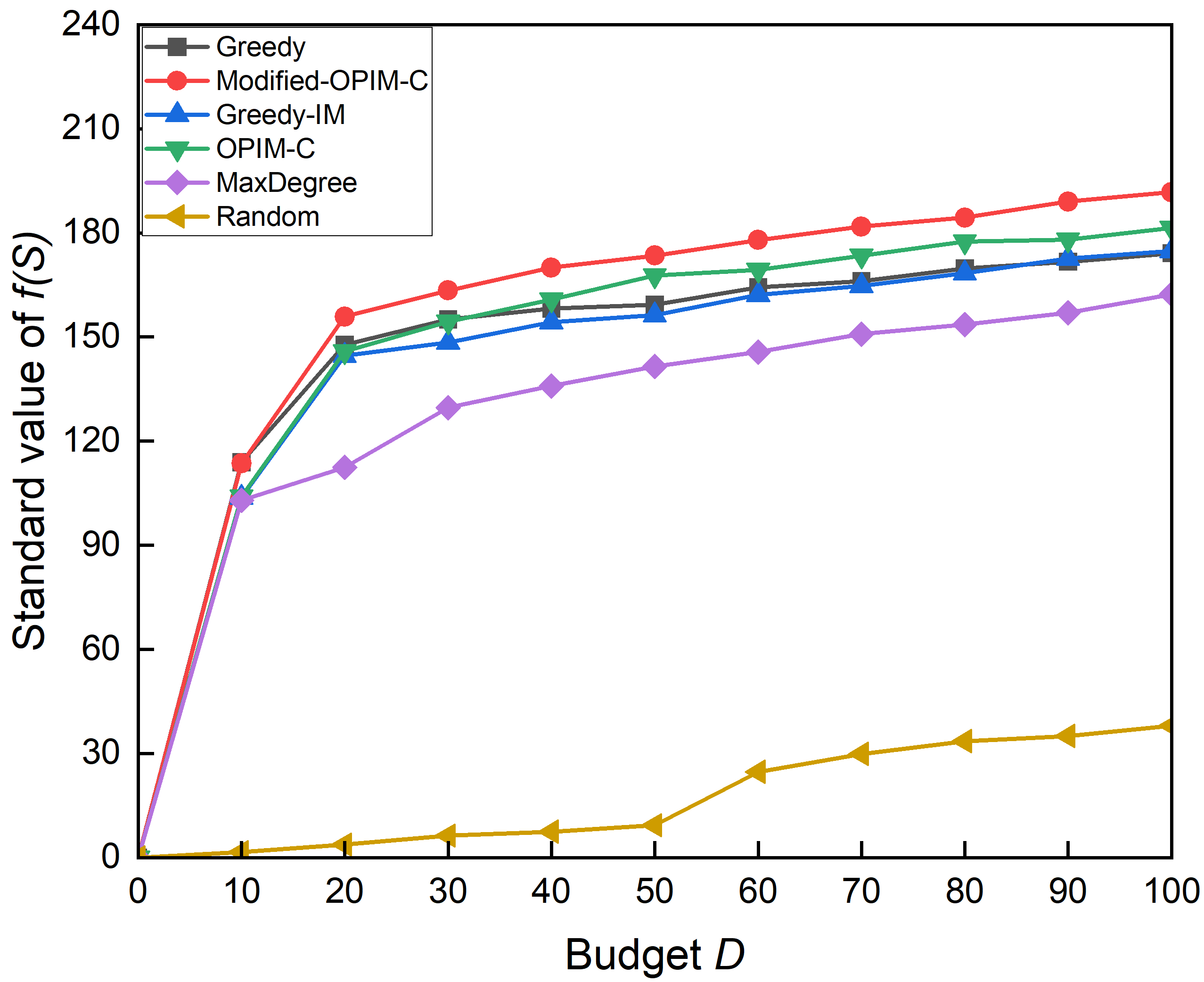}
	}%
	\subfigure[$RU=20\%$, $|T|=3$]{
		\includegraphics[width=0.48\linewidth]{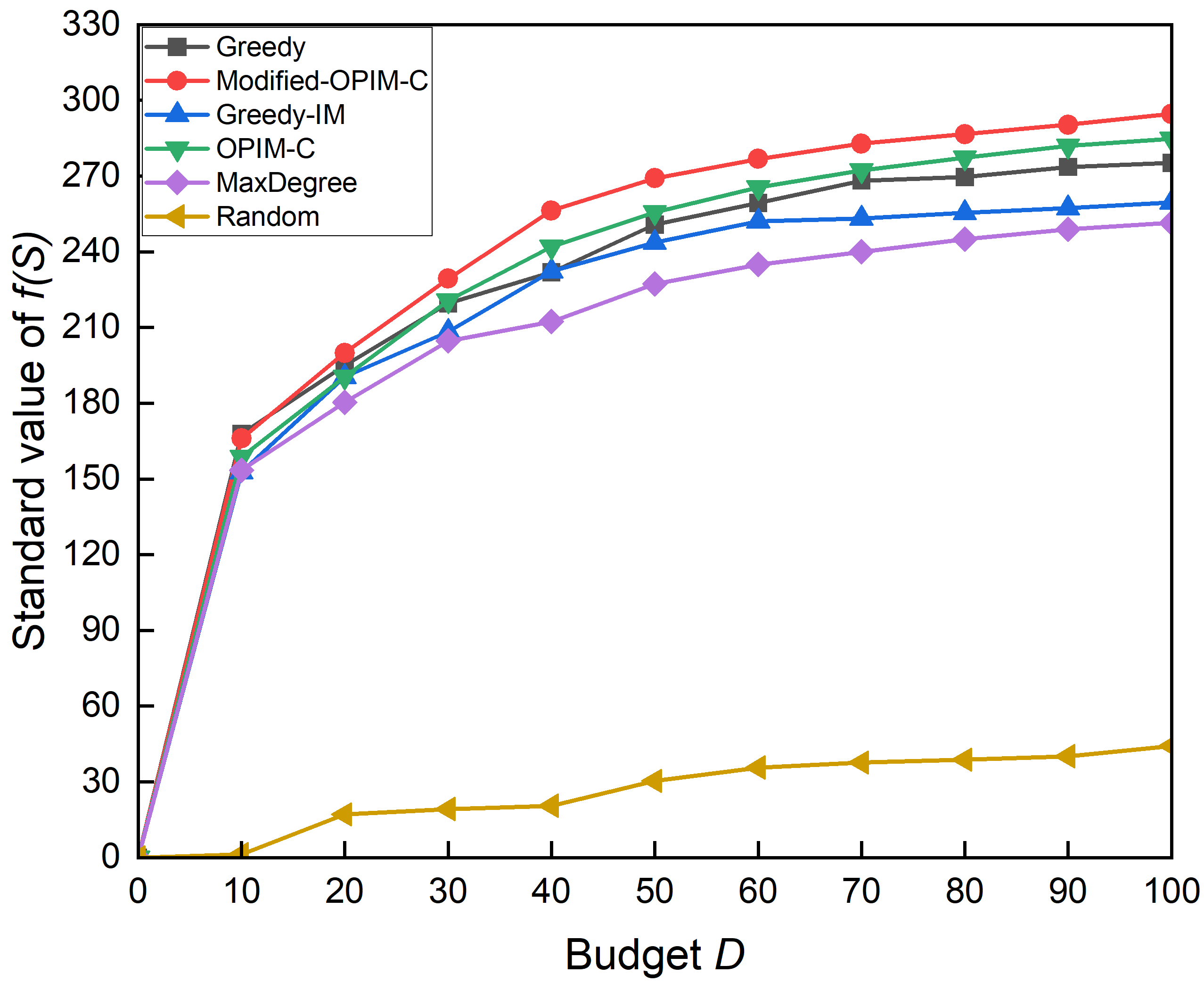}
	}%
	\centering
	\caption{The performance comparison achieved under the Dash dataset by giving different parameter settings.}
	\label{fig3}
\end{figure}

\begin{figure}[!t]
	\centering
	\subfigure[Damascus]{
		\includegraphics[width=0.48\linewidth]{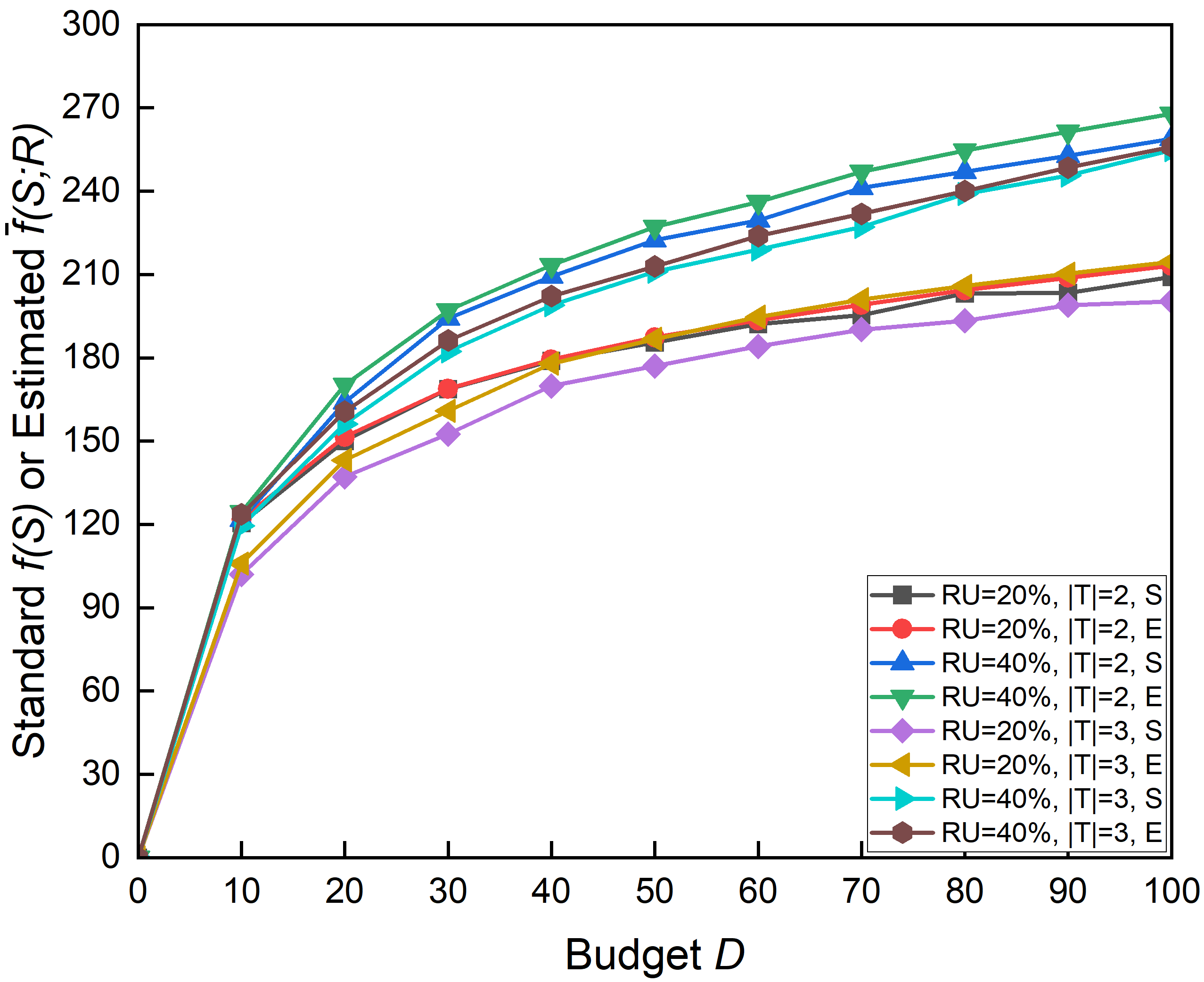}
	}%
	\subfigure[Dash]{
		\includegraphics[width=0.48\linewidth]{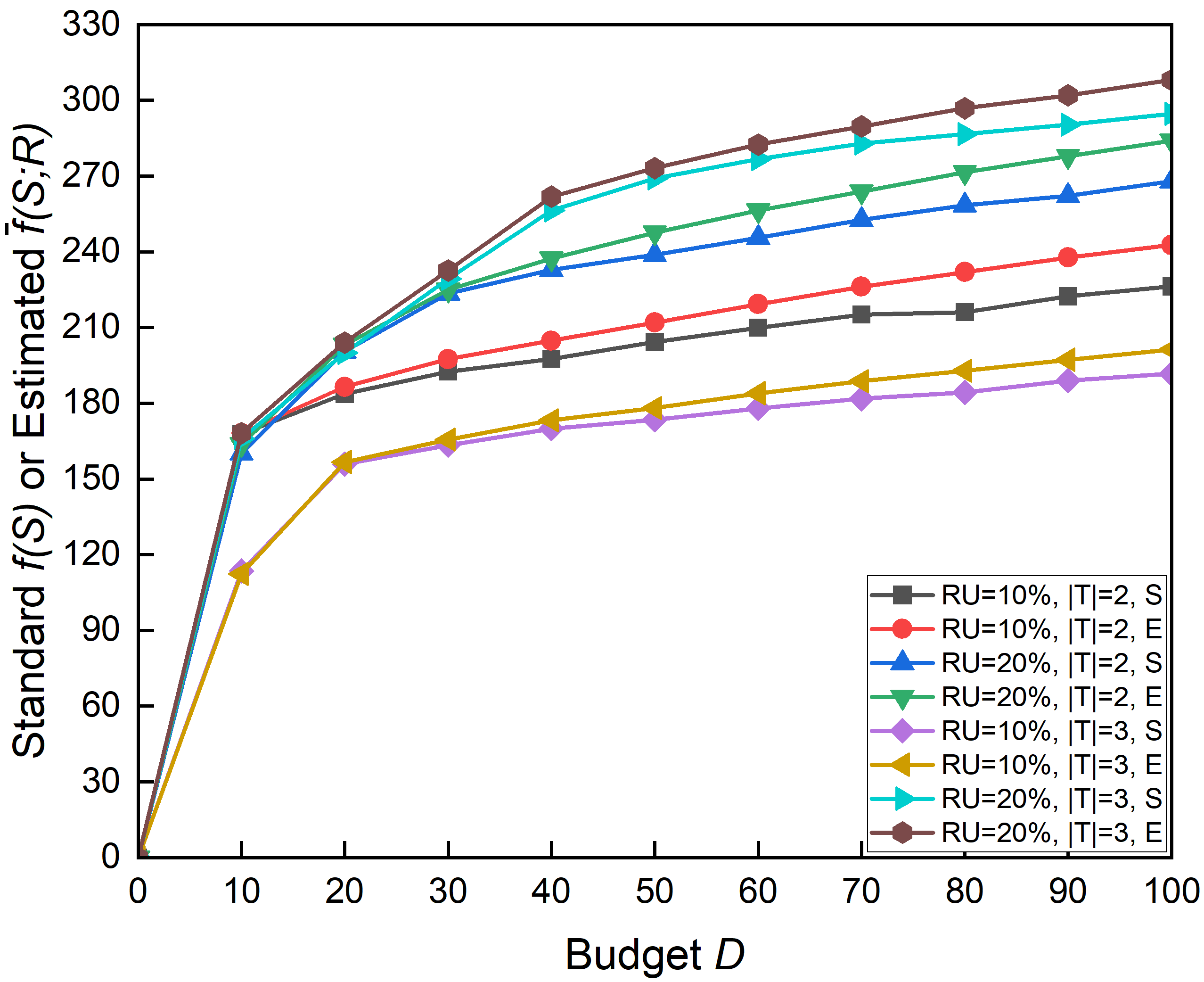}
	}%
	\centering
	\caption{The gap between the standard value and estimated value.}
	\label{fig4}
\end{figure}

\subsubsection{Sampling and error analysis} In the Modified-OPIM-C, it greedily selects a seed set $S^*$ based on the collection of random MT-RR sets $\mathcal{R}$ returned by Algorithm \ref{a3} under the budget. Thus, we can compare its estimated value $\hat{f}(S^*;\mathcal{R})$ with the stardard value $f(S^*)$ obtained by $2000$ MC simulations. Fig. \ref{fig4} draws the gap between the standard value and estimated value of the Multi-Task Diffusion Function under two datasets by giving different parameter settings. Here, we can see that the standard value and estimated value are generally close, and the error is controlled within a very small range. This shows that our multi-task sampling can accurately estimate the objective function. In addition, the estimated value is slightly larger than the standard value, and as the budget increases, the gap between the standard value and estimated value is increasing. This is because the seed set is selected based on $\mathcal{R}$, thus it is overestimated to some extent, and the error is accumulated with the increase of budget.

\begin{figure}[!t]
	\centering
	\subfigure[$RU=10\%$, $|T|=2$]{
		\includegraphics[width=0.48\linewidth]{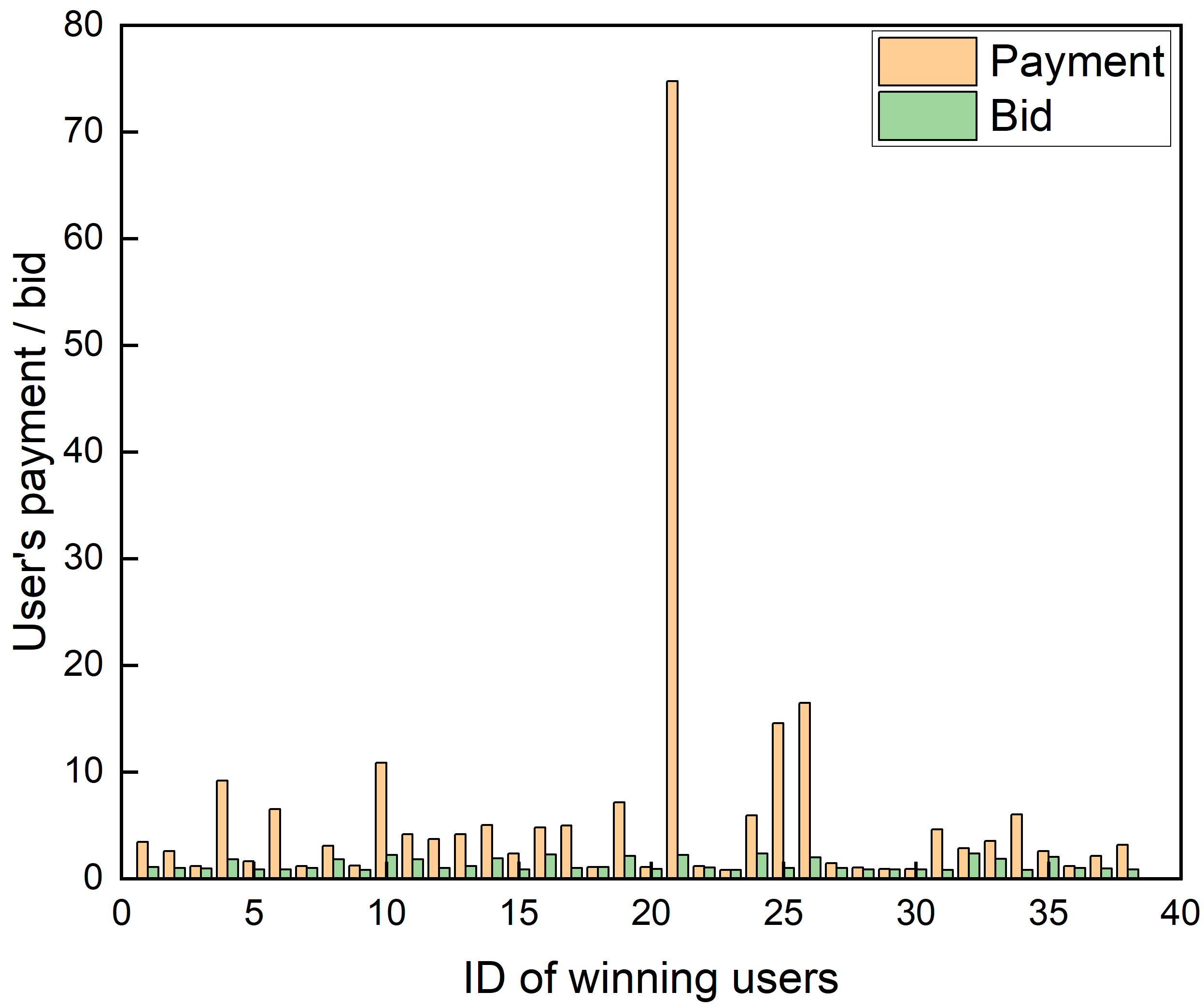}
	}%
	\subfigure[$RU=20\%$, $|T|=2$]{
		\includegraphics[width=0.48\linewidth]{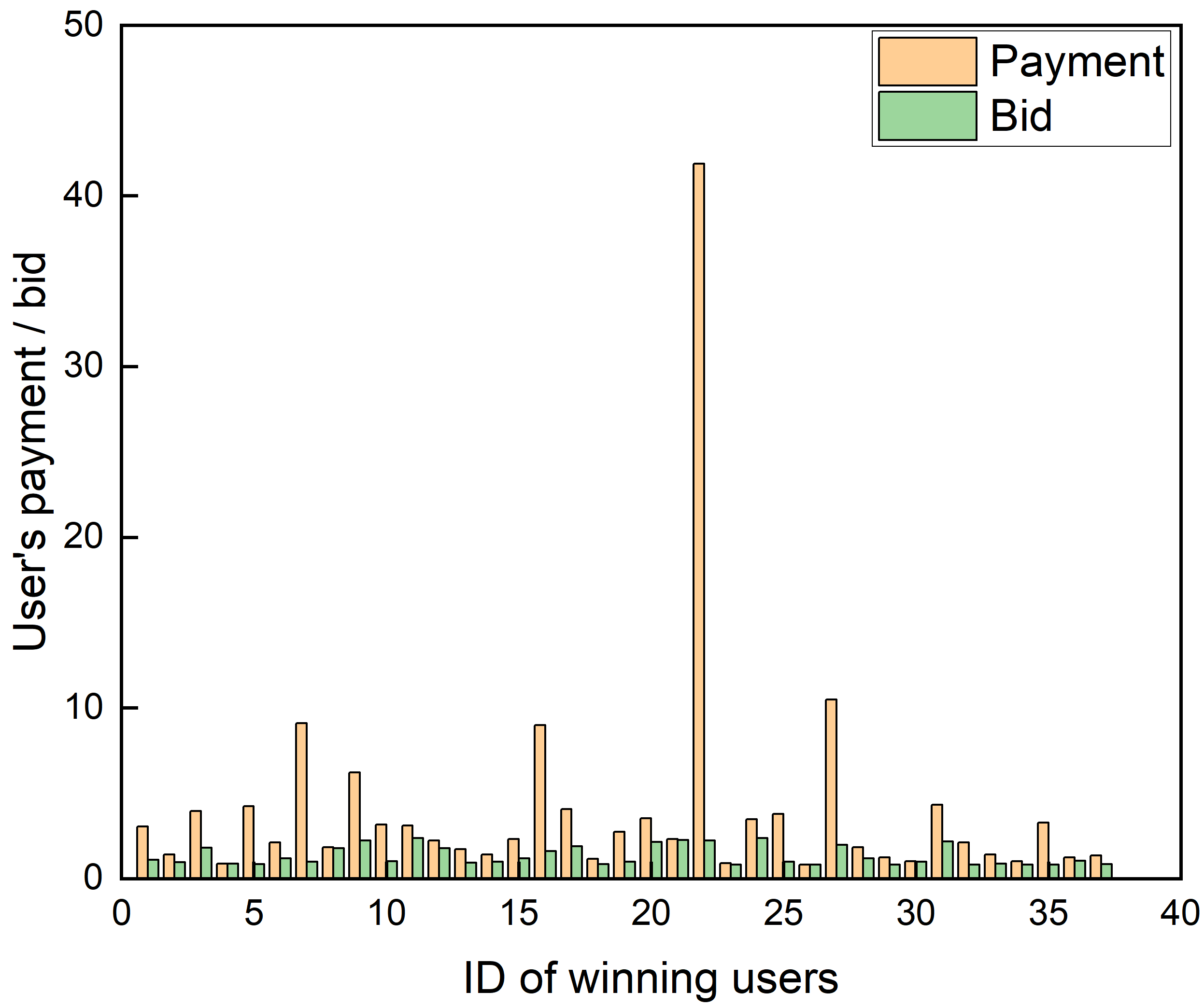}
	}%
	\centering
	
	\subfigure[$RU=10\%$, $|T|=3$]{
		\includegraphics[width=0.48\linewidth]{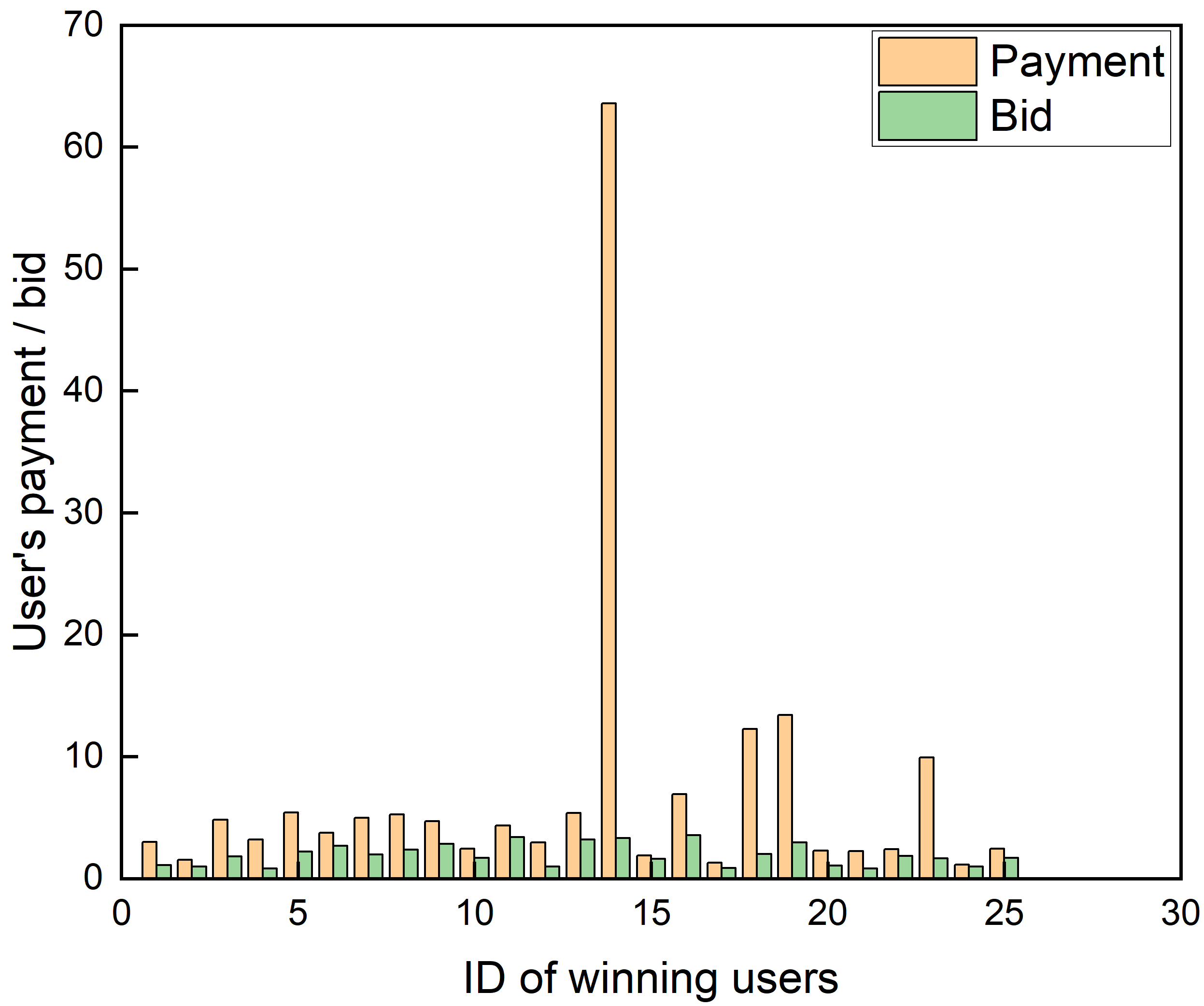}
	}%
	\subfigure[$RU=20\%$, $|T|=3$]{
		\includegraphics[width=0.48\linewidth]{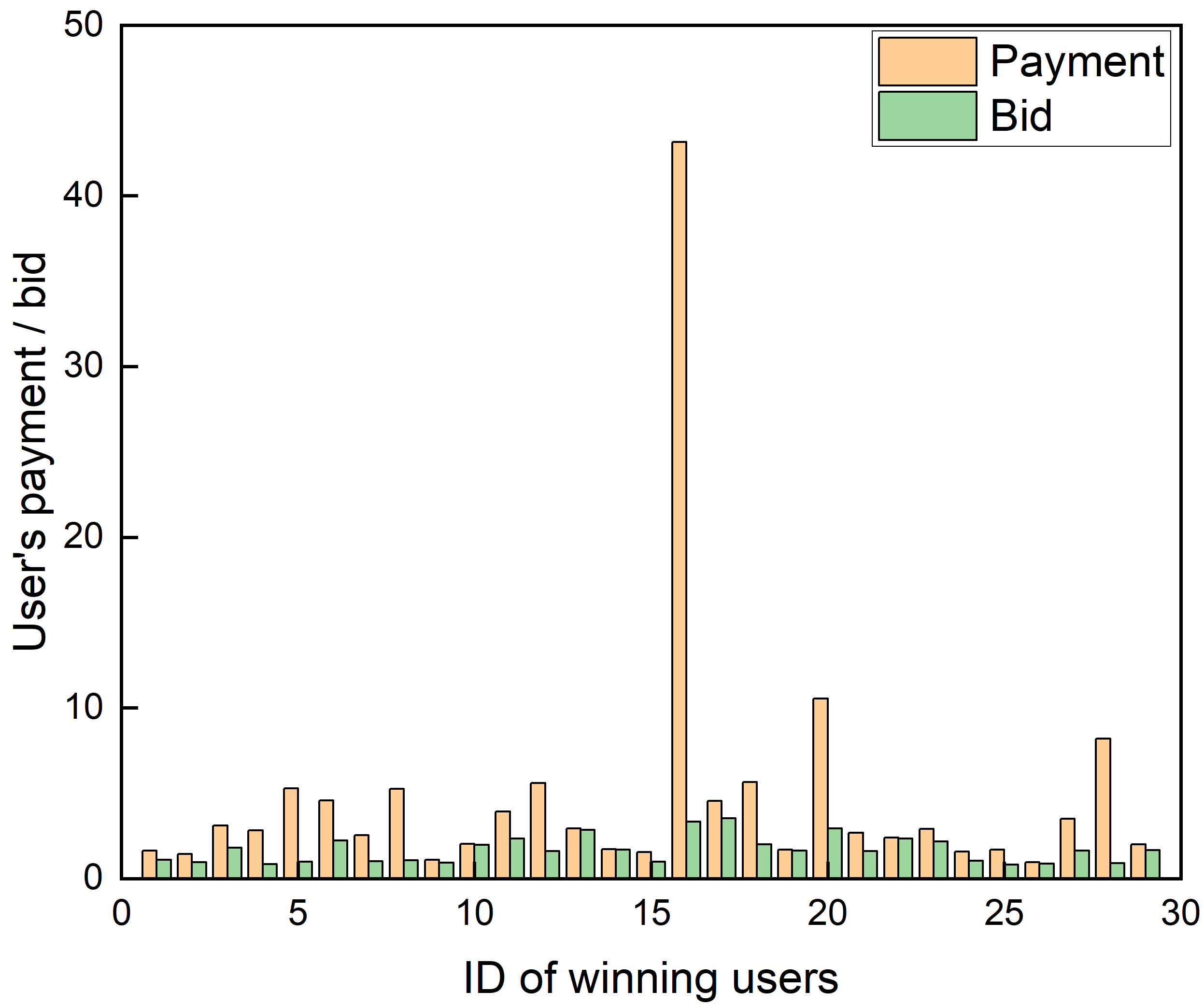}
	}%
	\centering
	\caption{The payment and bid of winning users achieved by MT-DM-L under the Damascus dataset and budget $D=50$.}
	\label{fig5}
\end{figure}

\begin{table}[!t]
	\renewcommand{\arraystretch}{1.1}
	\caption{The running time comparison for $D=100$, where $s=second$, $m=minute$, and $h=hour$}
	\label{table1}
	\centering
	\resizebox{\linewidth}{!}{
		\begin{tabular}{|c|c|c|c|c|c|}
			\hline
			\bfseries Dataset & \bfseries Algorithm & \bfseries (a) & \bfseries (b) & \bfseries (c) & \bfseries (d)\\
			\hline
			\multirow{2}*{Dama} & Greedy & 2.87h & 7.38h & 4.39h & 11.2h\\
			\cline{2-6}
			~ & Mo-OPIM-C & 148s & 250s & 141s & 213s\\
			\hline
			\multirow{2}*{Dash} & Greedy & 3.76h & 10.6h & 5.10h & 18.13h\\
			\cline{2-6}
			~ & Mo-OPIM-C & 174s & 317s & 154s & 319s\\
			\hline
	\end{tabular}}
\end{table}

\subsubsection{Running time}
Table \ref{table1} shows the running time for $D=100$ under different parameter settings, where the settings of (a), (b), (c), and (d) correspond to the serial numbers in the Fig. \ref{fig2} and Fig. \ref{fig3}. Here, we only consider the approximate algorithms, Greedy and Modified-OPIM-C. First, the computational cost of simulation-based algorithm, Greedy, is much higher than its corresponding sampling-based algorithm, Modified-OPIM-C. Second, as the scale of the problem increases, including the increase in the number of tasks and registered users, the running time of Greedy will obviously deteriorate. However, the running time of Modified-OPIM-C only increases slightly. Thus, simulation-based algorithms will not be appropriate to use in the following incentive mechanism, and we will use Modified-OPIM-C to estimate the Multi-Task Diffusion Function in the MT-DM-L, whose running time is acceptable.

\begin{figure}[!t]
	\centering
	\subfigure[$RU=10\%$, $|T|=2$]{
		\includegraphics[width=0.48\linewidth]{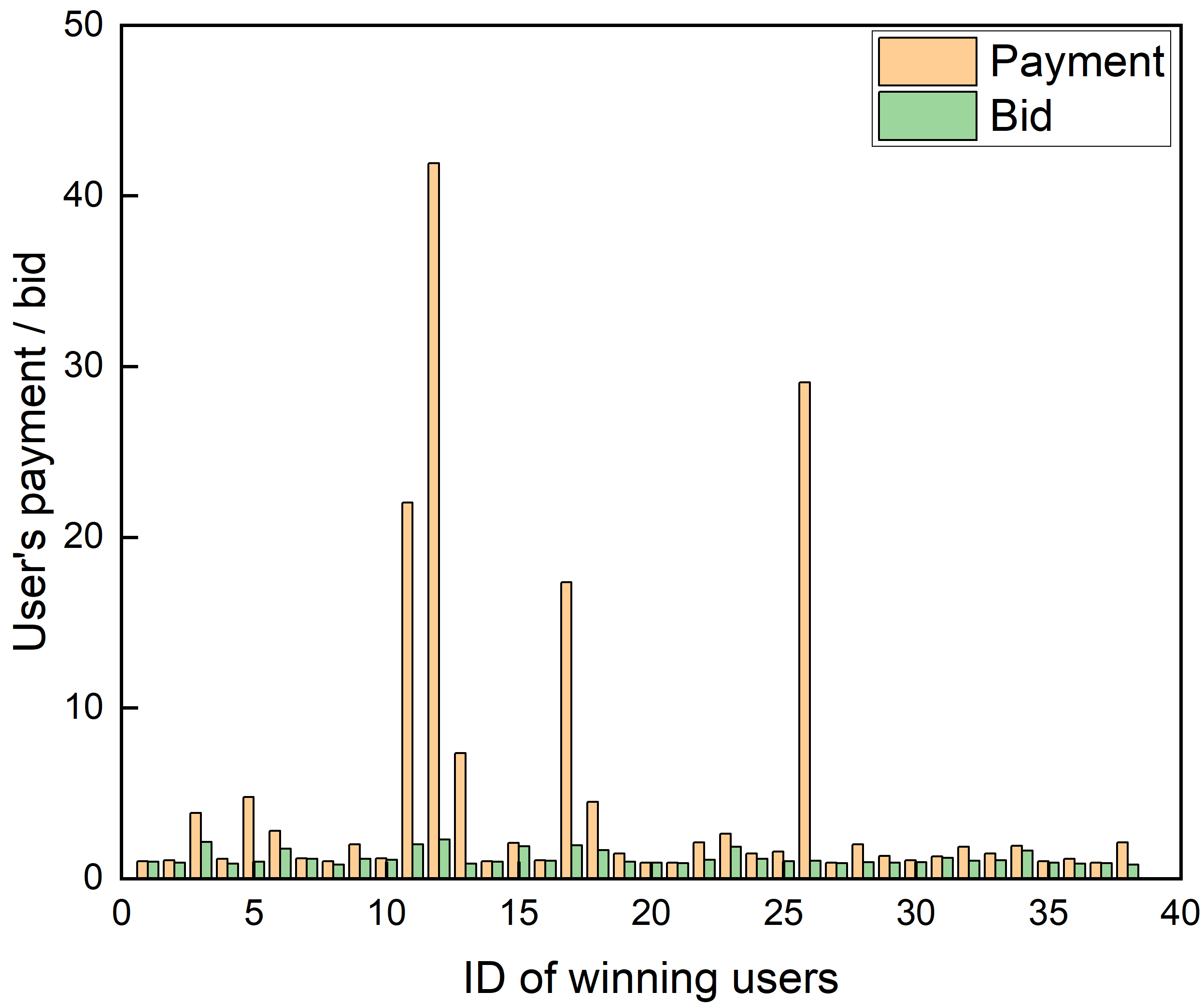}
	}%
	\subfigure[$RU=20\%$, $|T|=2$]{
		\includegraphics[width=0.48\linewidth]{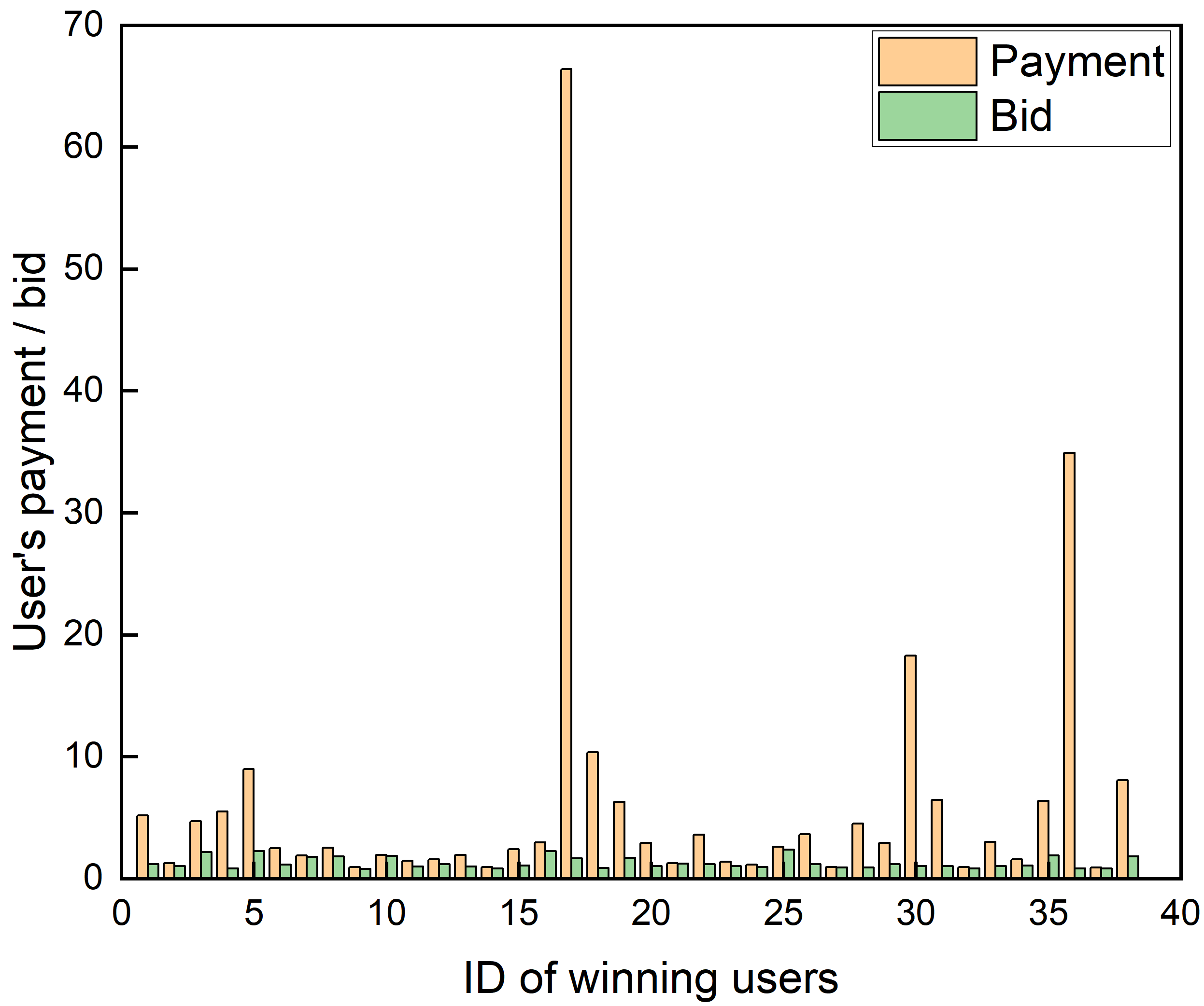}
	}%
	\centering
	
	\subfigure[$RU=10\%$, $|T|=3$]{
		\includegraphics[width=0.48\linewidth]{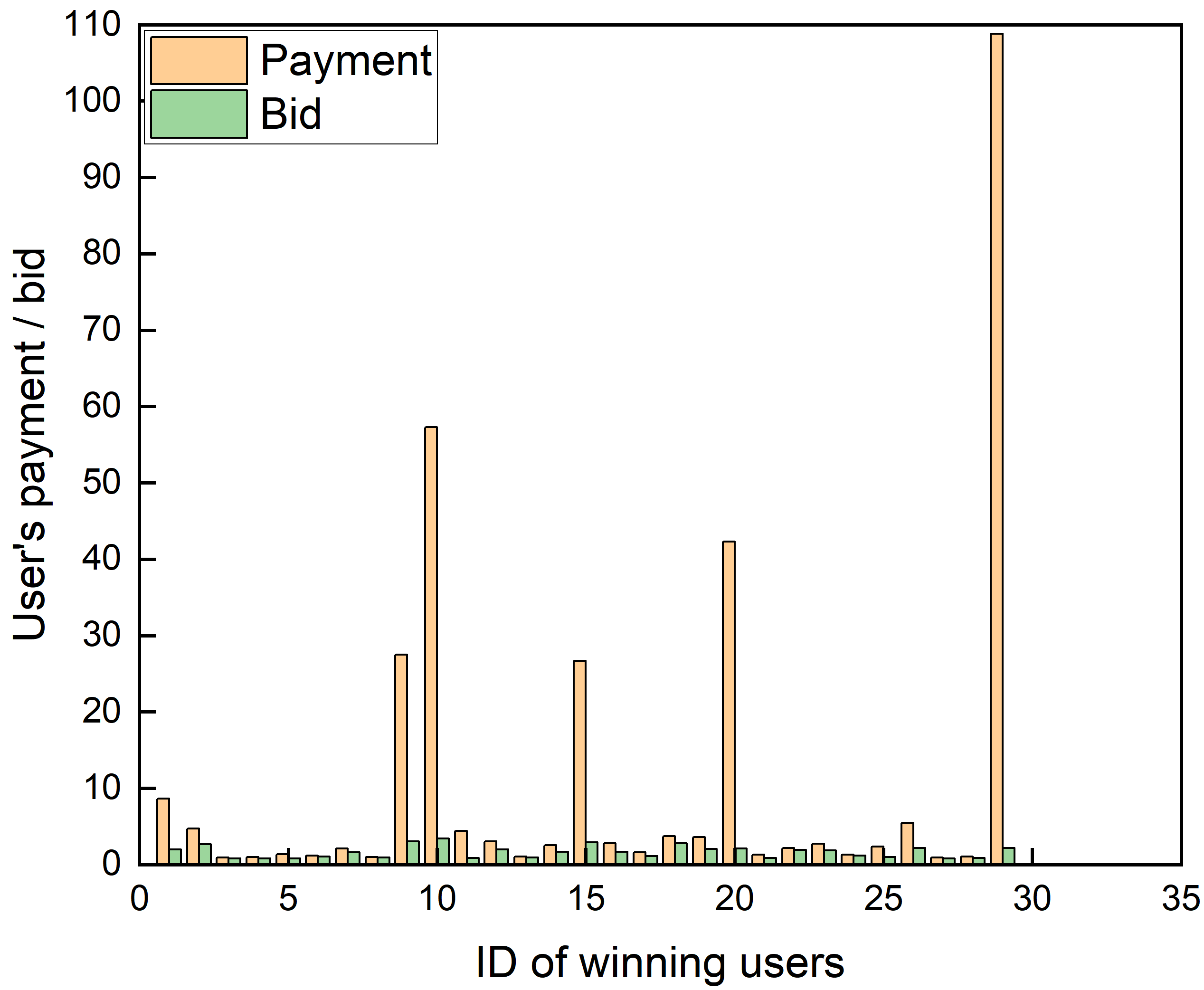}
	}%
	\subfigure[$RU=20\%$, $|T|=3$]{
		\includegraphics[width=0.48\linewidth]{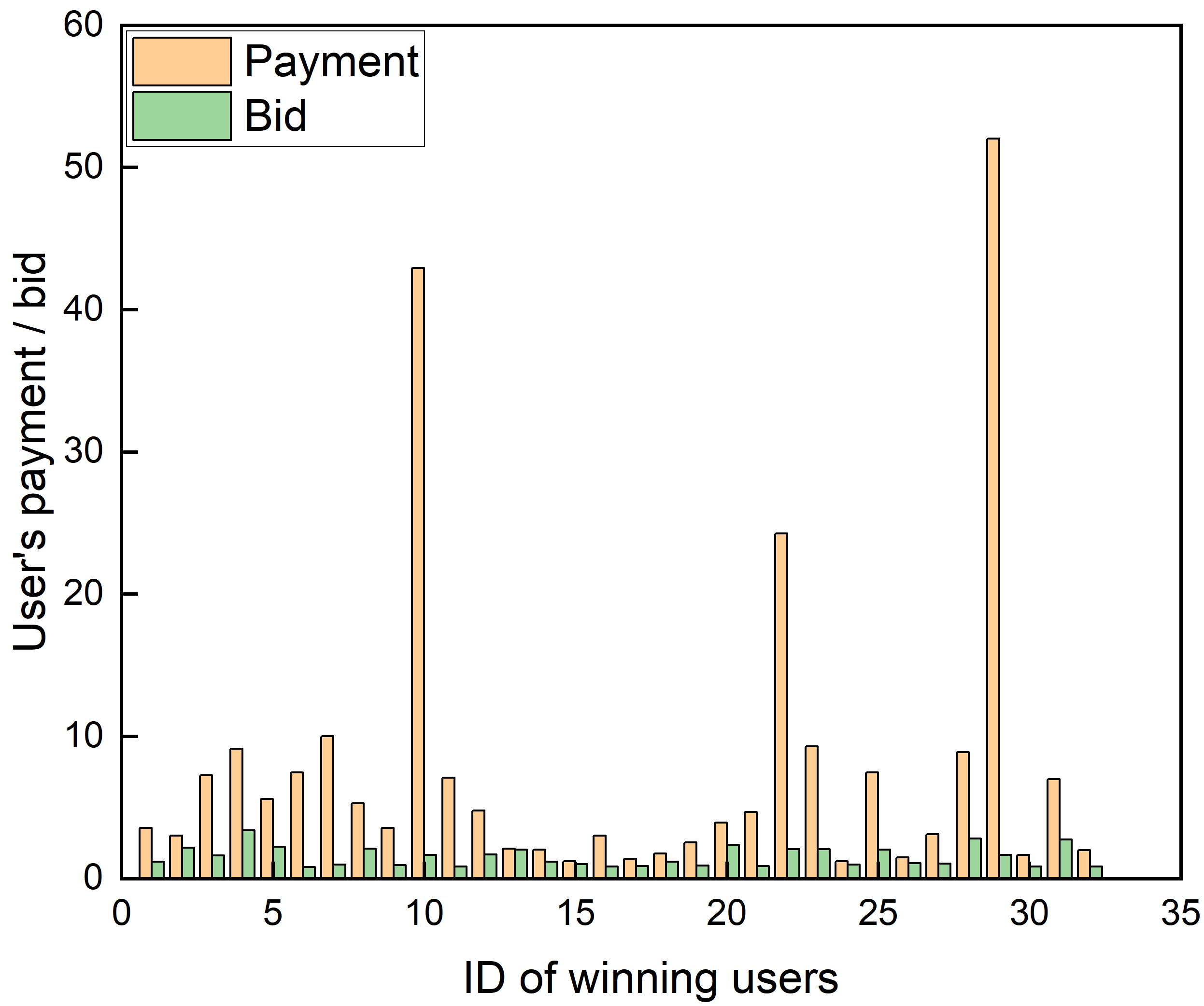}
	}%
	\centering
	\caption{The payment and bid of winning users achieved by MT-DM-L under the Dash dataset and budget $D=50$.}
	\label{fig6}
\end{figure}

\begin{table}[!t]
	\renewcommand{\arraystretch}{1.1}
	\caption{The overpayment ratio achieved by MT-DM-L under two datasets and budget $D=50$}
	\label{table2}
	\centering
		\begin{tabular}{|c|c|c|c|c|}
			\hline
			\bfseries Dataset & \bfseries (a) & \bfseries (b) & \bfseries (c) & \bfseries (d)\\
			\hline
			{Dama} & 3.4744 & 2.0437 & 2.5499 & 1.8019\\
			\hline
			{Dash} & 4.9825 & 3.8208 & 5.7473 &  4.1122\\
			\hline
	\end{tabular}
\end{table}

\subsection{Incentive Mechanism}
To implement the incentive mechanism MT-DM-L shown as Algorithm \ref{a1}, we apply the Modified-OPIM-C shown as Algorithm \ref{a3} to generate a collection of random MT-RR sets, and use it to estimate the Multi-Task Diffusion Function in MT-DM-L. This is because Modified-OPIM-C achieves the best performance with the least running time. Fig. \ref{fig5} and Fig. \ref{fig6} draw the payment and bid of winning users under the Damascus and Dash datasets by giving budget $D=50$ and different parameter settings. Here, we have several observations. First, for each winning user, its payment is larger than its bid, thus the utility of every user is positive. We can say that they are individually rational. Second, since we have relaxed the constraint from $\sum_{v_i\in S}p_i\leq D$ to $\sum_{v_i\in S}b_i\leq D$, we need to quantify the degree of violating the budget balance. Given a seed set $S$, we define the overpayment ratio $OR(S)$ as
\begin{equation}
	OR(S)=\left(\sum\nolimits_{v_i\in S}p_i-\sum\nolimits_{v_i\in S}b_i\right)/\sum\nolimits_{v_i\in S}b_i.
\end{equation}
Table \ref{table2} shows the overpayment ratio achieved by MT-DM-L under different parameter settings, where the settings of (a), (b), (c), and (d) correspond to the serial numbers in the Fig. \ref{fig5} and Fig. \ref{fig6}. Comparing (a) with (b) (or (c) with (d)), we find that the overpayment ratio decreases as the number of registered users increases. This is because the competition becomes more intense with the increase in registered users, and the critical price becomes lower. In general, overpayment ratios are not as expected (too high), which is related to the uniformity of our parameter settings. No matter the users with large influence or small influence, their unit bids are uniformly sampled in $[0.8, 1.2]$, which is the main reason why some influential users get very high payments. In practical applications, influential users usually bid more, thus this extreme situation will be avoided.

\begin{figure}[!t]
	\centering
	\subfigure[A winning user]{
		\includegraphics[width=0.48\linewidth]{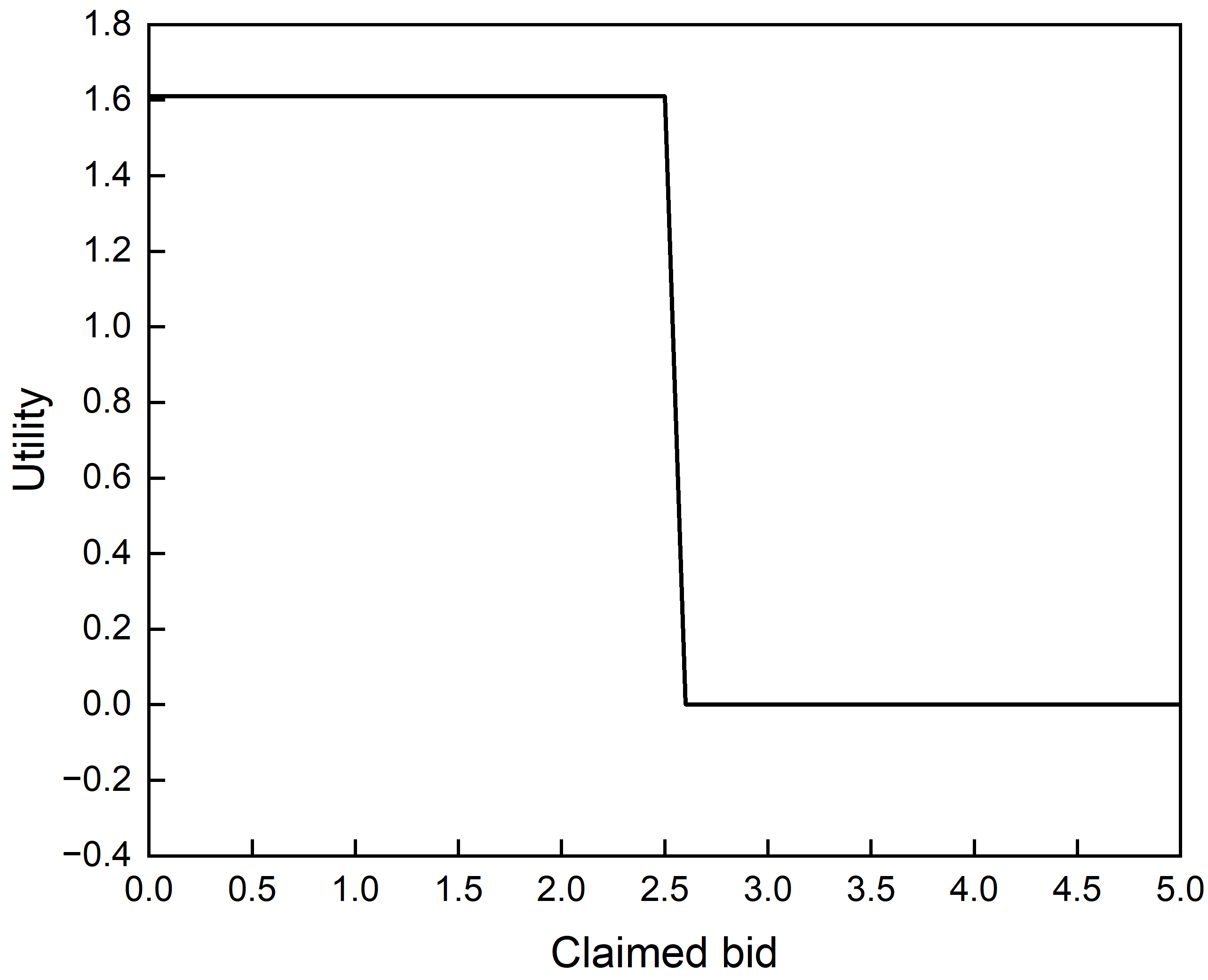}
	}%
	\subfigure[A losing user]{
		\includegraphics[width=0.48\linewidth]{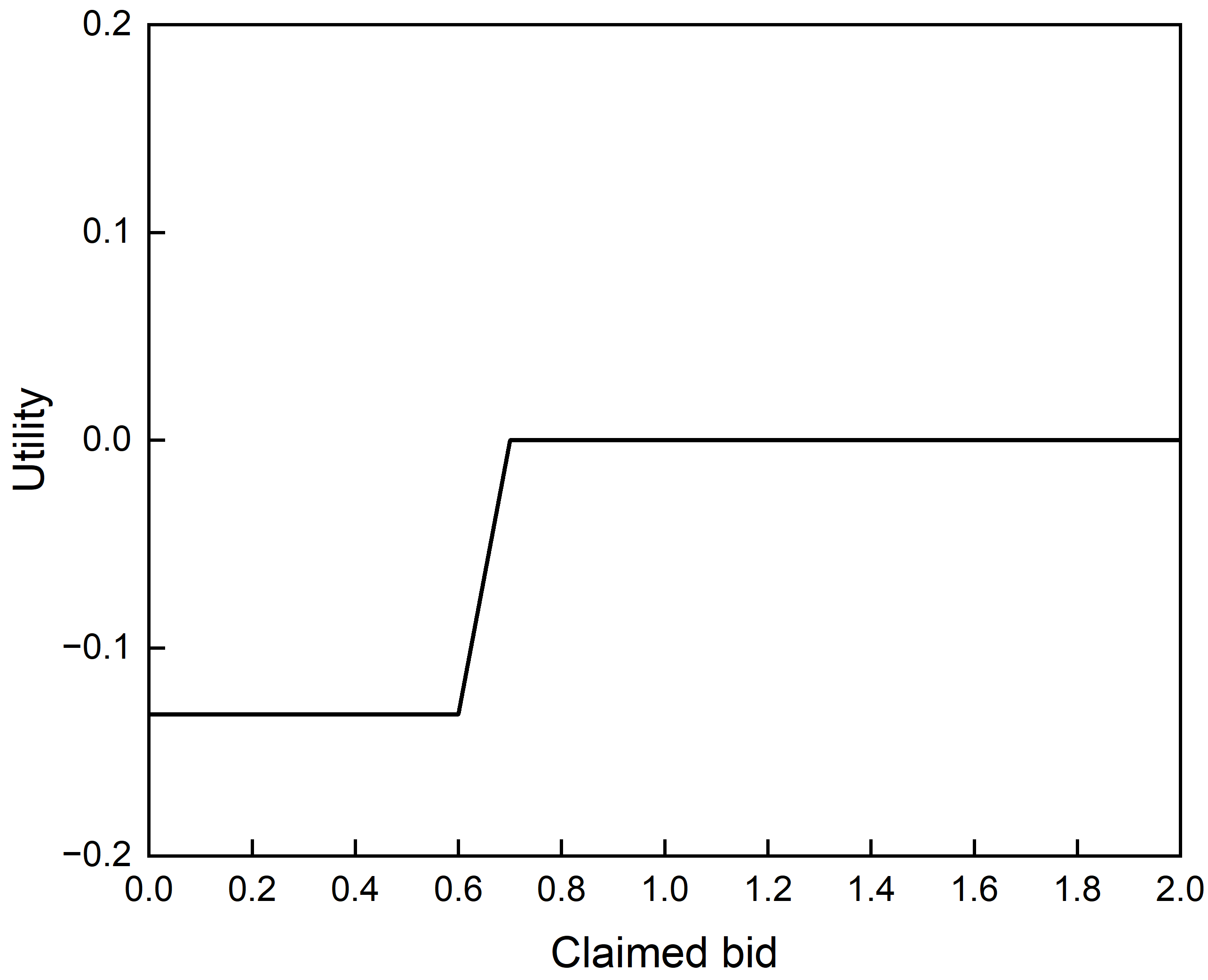}
	}%
	\centering
	\caption{A representative user in Damascus with $RU=20\%$ and $T=2$.}
	\label{fig7}
\end{figure}

\begin{figure}[!t]
	\centering
	\subfigure[A winning user]{
		\includegraphics[width=0.48\linewidth]{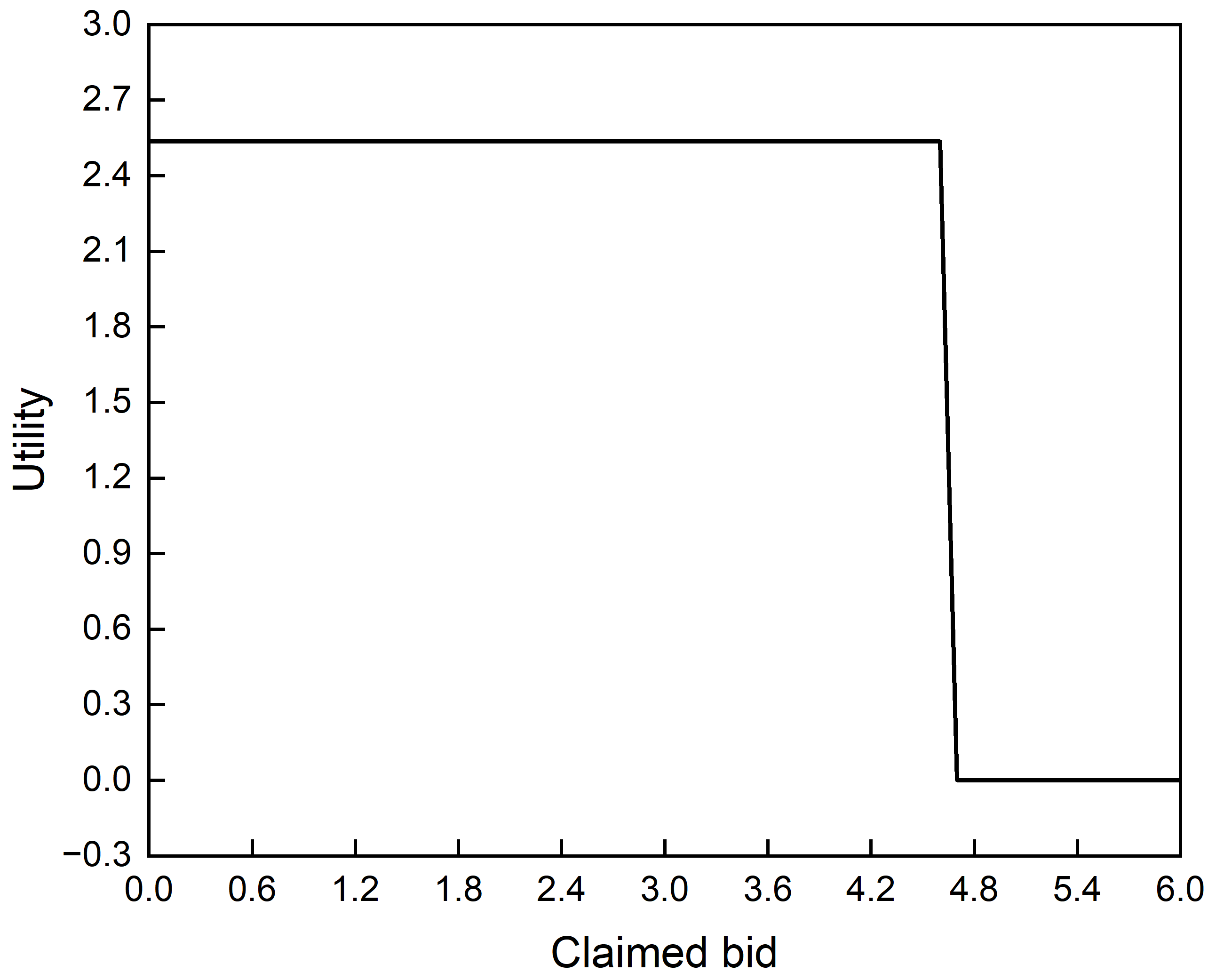}
	}%
	\subfigure[A losing user]{
		\includegraphics[width=0.48\linewidth]{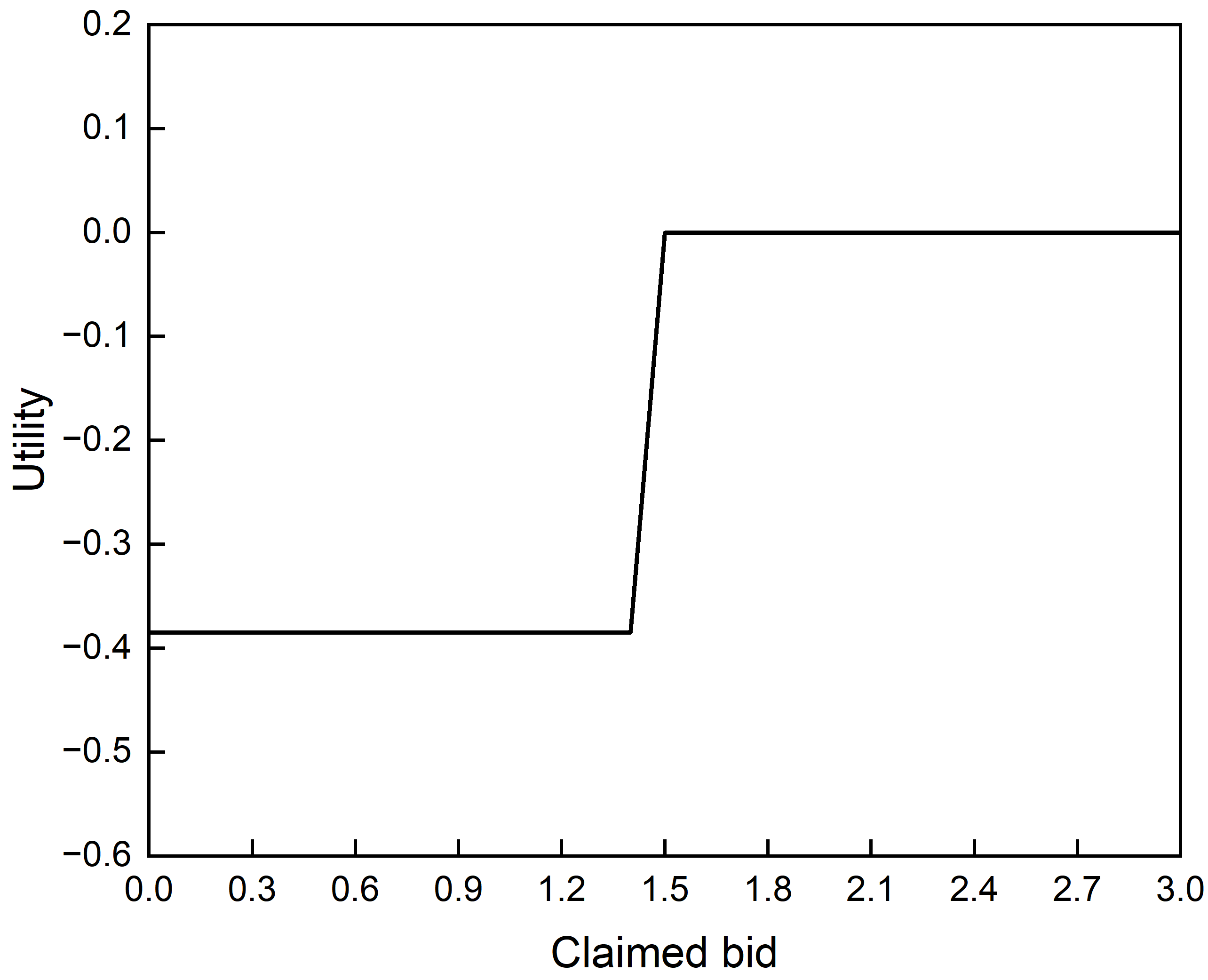}
	}%
	\centering
	\caption{A representative user in Dash with $RU=20\%$ and $T=2$.}
	\label{fig8}
\end{figure}

\textit{Truthfulness:} Fig. \ref{fig7} and Fig. \ref{fig8} draw the utility of a representative user under the Damascus and Dash datasets to verify the truthfulness of workers. Shown as Fig. \ref{fig7} (a), when a user $v_i$ gives a truthful bid $b_i=c_i=0.967$, it will win and get utility $u_i=1.61$. If it reduces its bid, its utility will not be changed. If it adds its bid, its utility will be changed to zero when reaching its critical price. Shown as Fig. \ref{fig7} (b), when a user $v_i$ gives a truthful bid $b_i=c_i=1.118$, it will lose and get utility $u_i=0$. If it increases its bid, its utility will not be changed. If it reduces its bid, its utility will be negative. Therefore, the MT-DM-L satisfies the truthfulness according to our experimental verification.

\section{Conclusion \& Future Work}
In this paper, we design an auction-based incentive mechanism to achieve multi-task diffusion maximization for mobile crowdsourcing in social networks. This is totally different from the classical IM problem or single-task crowdsourcing incentive mechanism. We are the first to present the Multi-Task IC model to adapt to information propagation across social networks. Based on it, we formulate a Multi-Task Diffusion Maximization (MT-DM) problem. Even though the Multi-Task Diffusion Function is monotone and submodular, it is \#P-hard to compute. Furthermore, we design a sampling-based algorithm, Modified-OPIM-C, to unbiasedly estimate our objective function, which can return a $(1-1/\sqrt{e}-\varepsilon)$ approximate solution with at least $(1-\delta)$ probability. By combining it with the incentive mechanism, we can achieve individual rationality, truthfulness, and computational efficiency. Our experimental evaluation also validates the effectiveness and correctness of our proposed algorithms, which always output the best performance in an acceptable running time.

\section*{Acknowledgment}

This work was supported in part by the CCF-Huawei Populus Grove Fund under Grant No. CCF-HuaweiLK2022004, the National Natural Science Foundation of China (NSFC) under Grant No. 62202055, the Start-up Fund from Beijing Normal University under Grant No. 310432104, the Start-up Fund from BNU-HKBU United International College under Grant No. UICR0700018-22, and the Project of Young Innovative Talents of Guangdong Education Department under Grant No. 2022KQNCX102.

\ifCLASSOPTIONcaptionsoff
  \newpage
\fi

\bibliographystyle{IEEEtran}
\bibliography{references}

\begin{IEEEbiography}[{\includegraphics[width=1in,height=1.25in,clip,keepaspectratio]{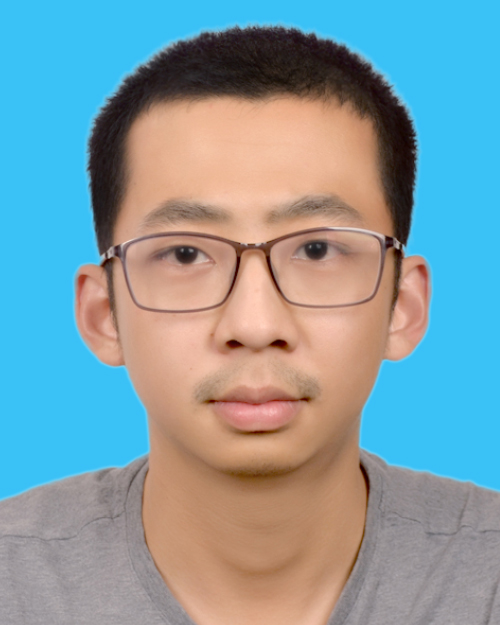}}]{Jianxiong Guo}
	received his Ph.D. degree from the Department of Computer Science, University of Texas at Dallas, Richardson, TX, USA, in 2021, and his B.E. degree from the School of Chemistry and Chemical Engineering, South China University of Technology, Guangzhou, China, in 2015. He is currently an Assistant Professor with the Advanced Institute of Natural Sciences, Beijing Normal University, and also with the Guangdong Key Lab of AI and Multi-Modal Data Processing, BNU-HKBU United International College, Zhuhai, China. He is a member of IEEE/ACM/CCF. He has published more than 40 peer-reviewed papers and been the reviewer for many famous international journals/conferences. His research interests include social networks, wireless sensor networks, combinatorial optimization, and machine learning.
\end{IEEEbiography} 

\begin{IEEEbiography}[{\includegraphics[width=1in,height=1.25in,clip,keepaspectratio]{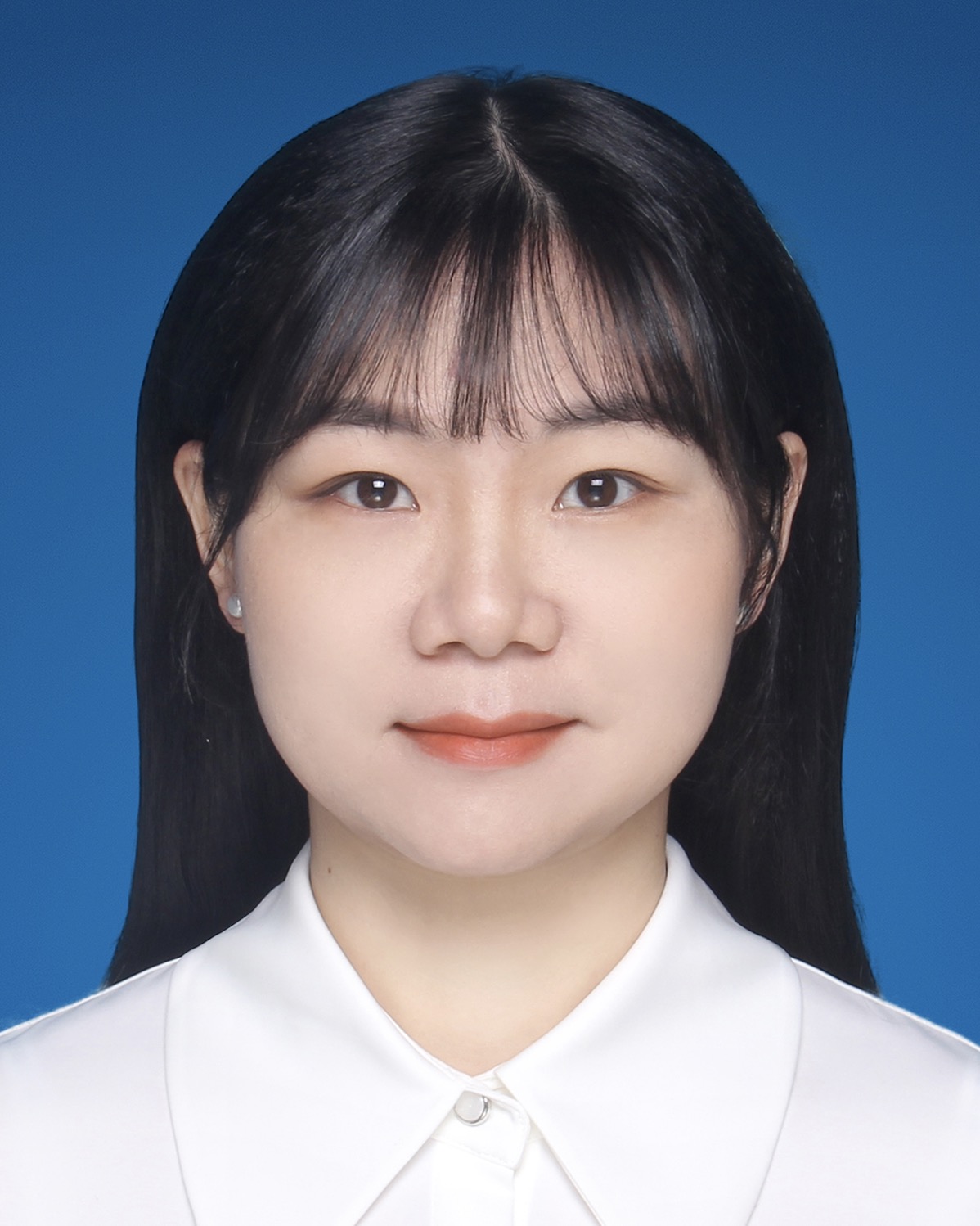}}]{Qiufen Ni}
	is currently an Assistant Professor with the School of Computers, Guangdong University of Technology, Guangzhou, Guangdong 510006, China. She received the Ph.D. degree from School of Computers, Wuhan University in Dec. 2020. During Oct. 2018 to Oct. 2020, she was also a joint Ph.D. student in the Department of Computer Science, University of Texas at Dallas, Richardson, TX, USA. Her research interests include social networks, theoretical approximation algorithm design and analysis, and optimization problems in wireless networks. She serves in China Computer Federation as technical committee member in the Theoretical Computer Science branch committee.
\end{IEEEbiography} 

\begin{IEEEbiography}[{\includegraphics[width=1in,height=1.25in,clip,keepaspectratio]{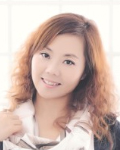}}]{Weili Wu}
	received the Ph.D. and M.S. degrees from the Department of Computer Science, University of Minnesota, Minneapolis, MN, USA, in 2002 and 1998, respectively. She is currently a Full Professor with the Department of Computer Science, The University of Texas at Dallas, Richardson, TX, USA. Her research mainly deals with the general research area of data communication and data management. Her research focuses on the design and analysis of algorithms for optimization problems that occur in wireless networking environments and various database systems.
\end{IEEEbiography}

\begin{IEEEbiography}[{\includegraphics[width=1in,height=1.25in,clip,keepaspectratio]{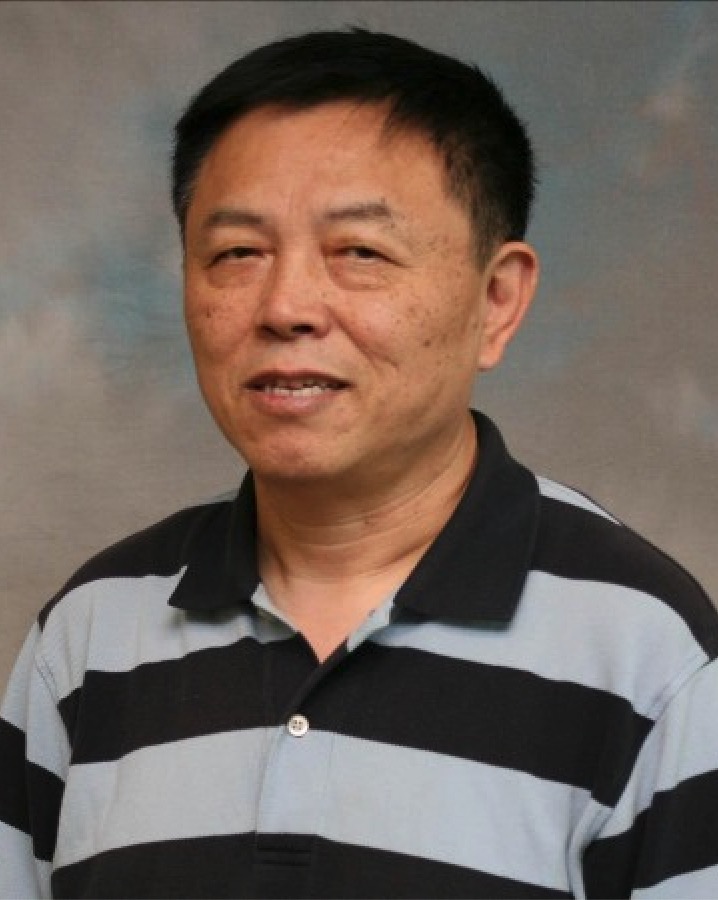}}]{Ding-Zhu Du}
	received the M.S. degree from the Chinese Academy of Sciences, Beijing, China, in 1982, and the Ph.D. degree from the University of California at Santa Barbara, Santa Barbara, CA, USA, in 1985, under the supervision of Prof. R. V. Book. Before settling at The University of Texas at Dallas, Richardson, TX, USA, he was a Professor with the Department of Computer Science and Engineering, University of Minnesota, Minneapolis, MN, USA. He was with the Mathematical Sciences Research Institute, Berkeley, CA, USA, for one year, with the Department of Mathematics, Massachusetts Institute of Technology, Cambridge, MA, USA, for one year, and with the Department of Computer Science, Princeton University, Princeton, NJ, USA, for one and a half years. Dr. Du is the Editor-in-Chief of the Journal of Combinatorial Optimization and is also on the editorial boards for several other journals.
\end{IEEEbiography}
\end{document}